\documentclass{article} 

\usepackage{comment}
\usepackage{amsmath}
\usepackage{hyperref}
\usepackage{amsthm}
\newtheorem{theorem}{Theorem}
\usepackage{amssymb}
\usepackage{graphicx}
\usepackage{xcolor}
\usepackage{mathrsfs}
\usepackage{hyperref}
\usepackage{caption}
\usepackage[utf8]{inputenc}

\usepackage[margin=1in]{geometry}

\newtheorem{thm}{Theorem}[section]
\newtheorem{cor}{Corollary}[section]
\newtheorem{dfn}{Definition}[section]
\newtheorem{lem}{Lemma}[section]

\newtheorem{exam}{Example}[section]

\newtheorem{remark}{Remark}[section]

\title{Distributional Fractional Taylor Series Intepretation of the Fractal and Number-Theoretic Explicit Formulas}
\author{Michel L. Lapidus* and Matthew Overduin}

\makeatletter
\renewcommand{\paragraph}{\@startsection{paragraph}{4}{0ex}%
   {-3.25ex plus -1ex minus -0.2ex}%
   {1.5ex plus 0.2ex}%
   {\normalfont\small\bfseries}}
\makeatother

\stepcounter{secnumdepth}
\stepcounter{tocdepth}

\begin{document}

{\noindent
\fontsize{15}{18}\selectfont\bfseries Distributional Fractional Taylor Series and Interpretation of the Fractal 
and Number-Theoretic Explicit Formulas}

\vspace{1.5em}

{\fontsize{10}{12}\selectfont
\noindent Michel L. Lapidus$^{\ast,||}$ and Matthew Overduin$^{\dagger,\P}$

\vspace{3em}

{\noindent\fontsize{10}{12}\selectfont\bfseries Abstract}
{\fontsize{10}{12}\selectfont} Motivated by the work of M. L. Lapidus and M. van Frankenhuisjen on (distributional) fractal explicit formulas (with or without error term) for generalized fractal strings, we develop in this paper a distributional fractal (or fractional) Taylor's formula with error term and (depending on the hypotheses) an exact distributional fractal (or fractional) Taylor series representation for generalized fractal strings.  The explicit formulas that the aforementioned authors developed provide a way of expressing a generalized fractal string $\eta$ in terms of the underlying fractal complex dimensions, under suitable growth assumptions on its geometric zeta function, $\zeta_{\eta}$, known as the languidity assumptions.  This involves a sum indexed by the complex dimensions of the generalized fractal string, summing over all residues of $\zeta_{\eta}$, multiplied by the Mellin transform of a suitable tempered or Schwartz test function $\phi$, denoted by $\widetilde{\phi}$.  Under languidity assumptions, there is an error term present.  Under stronger assumptions, that is, in the case of strong languidity, no error term is present and the resulting fractal explicit formula is said to be exact---as is the case, in particular, for the classic Riemann--von Mangoldt explicit number-theoretic explicit formulas.  
\\
\indent In this paper, we express the residue term in the fractional distributional explicit formulas (with or without error term) as the fractional $(-\omega)^{\mathrm{th}}$-ordered distributional derivative of the Dirac $\delta$ distribution, denoted by $Y_{\omega}$, applied to modified test functions of the form $\phi(x) \ln^{k-1}(x)$, where $k$ runs from $1$ to the multiplicity of the pole $\omega$ of $\zeta_{\eta}$.  The coefficients in this sum are written in terms of the coefficients in the principal part of the Laurent expansion at $\omega$ of the geometric zeta function $\zeta_{\eta}$, multiplied by the gamma function evaluated at $\omega$. Thus, in the languid case or in the strongly languid case, respectively, we obtain a distributional fractal Taylor's formula representation with error term (respectively, without error term) for the generalized fractal string by summing over all visible poles of $\zeta_{\eta}$ (i.e., over all visible complex dimensions of $\eta$).  In the case where all (visible) complex dimensions are simple, the expression described above becomes a single sum involving the fractional $(-\omega)^{\mathrm{th}}$-ordered distributional derivative of the Dirac distribution $\delta$ applied to the Mellin transform $\tilde{\phi}$ of $\phi$ and indexed by the (visible) complex dimensions of the underlying generalized fractal string.  In the languid case, there is an error term present in these representations, and thus, we obtain a nonexact distributional Taylor's formula; nevertheless, the error term is given by a distributional contour integral and can be explicitly estimated, asymptotically.  In the strongly languid case, we obtain an exact distributional Taylor's formula for these expressions, yielding a distributional Taylor series representation for the generalized fractal string and its (possibly integrated) geometric counting function---much as an analytic function is precisely given by its Taylor power series.  In the strongly languid case, we illustrate our results by computing the fractal or fractional Taylor series representation for the generalized Cantor string, where only simple poles (simple complex dimensions) are present. Finally, we invite the interested reader to determine the fractal Taylor series representation for other self-similar strings (which are all strongly languid), including the generalized Fibonacci string, as well as for the $a$-string (which is only languid and is not self-similar); this can be easily done given the known results about the geometric zeta functions and complex dimensions of these fractal strings.  The results obtained in this paper contribute to the broader program of characterizing fractals in terms of fractal Taylor series expansions (without error term) and of fractal Taylor-like formulas (with error term) involving their underlying fractal complex dimensions, as well as to the development of a new form of calculus applicable to fractals.  
\\

\vspace{1em}

\noindent \rule{4cm}{0.5pt}
{\fontsize{8}{10}\selectfont

\noindent $^{*}$ The research of Michel L. Lapidus was supported by the Burton Jones Endowed Chair in Pure Mathematics, as well as by grants from the U.S. National Science Foundation (NSF).  Additionally, Michel L. Lapidus is supported by the Centre National de la Recherche Scientifique (CNRS), MITI CNRS Extr\^emes, and by the (French) Agence Nationale de la Recherche (ANR), ANR FRACTALS (ANR-24-CE45-3362).
\\

\noindent $^{||}$ University of California, Riverside, Department of Mathematics, 900 University Ave, Riverside, CA 92521-0135, USA.
\\

\noindent $^{\dagger}$ This work was completed while Matthew Overduin was enrolled as a graduate student at the University of California, Riverside and later, was an assistant professor at Arkansas Tech University in the department of mathematics.
\\

\noindent $^{\P}$ Arkansas Tech University, Department of Mathematics, 215 West O Street, Russellville, AR 72801, USA.

{\fontsize{10}{12}\selectfont

\noindent \textbf{\fontsize{10}{12}\selectfont Mathematics Subject Classification 2020.  11M41, 11S40, 26A24, 26A33, 28A80, 46F05, 46F10, 46F12}.
\\

{\noindent\fontsize{10}{12}\selectfont\bfseries Keywords} Generalized fractal string, geometric zeta function, fractal complex dimensions, geometric counting function, languidity and strong languidity, fractal pointwise or distributional explicit formulas (with or without error term), complex-order fractional derivatives of a distribution, Hadamard finite part, pseudofunctions, distributional fractal (or fractional) Taylor's formulas (with or without error term), (exact) fractal Taylor series, Cantor and Fibonacci (self-similar) strings, Riemann--von Mongoldt explicit formulas.
\\

\tableofcontents

\section{Introduction}

\noindent Early developments in fractal geometry date back almost 150 years ago when George Cantor---motivated by problems in set theory, topology, and harmonic analysis---discovered the Cantor set. The Cantor set was defined iteratively: beginning with the interval $[0,1]$ and recursively removing the middle third of the lengths in the previous iteration, then taking this iteration scheme to infinity.  Interesting properties resulted for the limiting set: a compact, perfect, and totally disconnected set that has measure zero and consists of an uncountable number of points.  
\\
\indent Mathematicians also became interested in how to quantify the dimensions of such sets that were generated recursively. Minkowski (and, especially, later on, Bouligand \cite{Bou}) discovered a mathematical definition for how to quantify the dimension of such a set by considering an $\varepsilon$-tubular neighborhood surrounding the set, multiplying its volume by an appropriate power of $\varepsilon$ and then taking the (upper) limit of the resulting expression as $\varepsilon$ tends to zero: If the limit is a finite non-zero number (and one works on the real line), then one minus the exponent is known as the (upper) Minkowski dimension of the set.  Hausdorff also introduced another mathematical dimension, called the Hausdorff dimension. It turns out that the Cantor set has (Minkowski and Hausdorff) dimension $\log_{3}2$, a non-integer dimension of the fractal (or of the associated fractal string), a fact which greatly intrigued many mathematicians.  Such sets with a non-integer dimension became known as fractals.  Much later, beginning in the mid-to-late 1990s, fractals were defined as sets or geometric objects with \textit{nonreal} complex dimensions---and hence also, with \textit{intrinsic geometric oscillations}; see \cite{LapidusFGNT, LapidusFGCD, LapidusFGCD2}, \cite{LapidusFZF}, \cite{SLO}, and \cite{Lapidus26}.  These fractals generated recursively or defined in other ways came to be of great interest to mathematicians.   
\\
\indent The first author of this paper (M. L. Lapidus) and his collaborators, including Carl Pomerance, \cite{Pomerance1, Pomerance2}, and Machiel van Frankenhuisjen \cite{LapidusFGNT, LapidusFGCD, LapidusFGCD2}---discovered that (in the case of the real line) one could characterize  a fractal (or, in fact, any compact subset subset of $\mathbb{R}$) by means of the sequence of lengths of the deleted intervals generated at each step in the iteration process (or in its natural generalization). Following this characterization, a counting function outputting the number of reciprocal lengths less than a given number was introduced; it is called the \textit{geometric counting function} of the resulting \textit{fractal string}.  Taking the Mellin transform of this function gave rise to a complex-valued function (of a complex variable), known as the \textit{geometric zeta function} of the given fractal (or of the fractal string).  The poles of such a geometric zeta function became known as the \textit{complex dimensions} of the fractal---or, equivalently, of the associated fractal string; see \cite{LapidusFGNT, LapidusFGCD, LapidusFGCD2}, \cite{LapidusFZF}, \cite{SLO}, and \cite{Lapidus26}.
\\
\indent Mathematicians soon became interested in the complex dimensions of a given fractal and how they were scattered across the complex plane.  Fractals where the complex dimensions lied periodically on a finite number of vertical lines that were evenly spaced in the complex plane became known as lattice self-similar strings. On the other hand, fractals that gave rise to complex dimensions where this property did not hold (but was replaced by a suitable quasiperiodicity property) became known as nonlattice self-similar strings; see Chapters 2 and 3 of \cite{LapidusFGCD2}.  
\\
\indent For precursors of the higher-dimensional theory of fractal tube formulas and complex dimensions, see, e.g., \cite{LPe06}, \cite{LPe10}, and \cite{LPeWi}, along with \cite{LapidusFGNT,LapidusFGCD}, and \cite{LapidusFGCD2}, especially, Sections 2.7, 6.6, 12.2, and 13.1---and, for various extensions of the theory of complex dimensions, see, e.g., \cite{LapidusFGCD2}, Chapter 13 (as well as the relevant references therein), along with (in the multifractal case) \cite{LRoc09}, \cite{LLeRoc09}, \cite{deSLRobRoc13}, and \cite{Ol13a}, \cite{Ol13b}, while for applications to dynamical systems, see \cite{LapidusFGCD2}, Chapter 7 and Subsection 12.5.3.     
\\
\indent We point out that the theory of complex dimensions (and of the associated fractal zeta functions) was fully expanded to fractal subsets (in fact, to arbitrary bounded subsets) of higher-dimensional Euclidean spaces, in the research book \cite{LapidusFZF} (and a number of accompanying papers), by the first author, Goran Radunovi\'c, and Darko \v{Z}ubrini\'c.  However, we will mostly be interested in this paper in the one-dimensional case, that of fractal strings, although our results, once properly formulated, could also be applied to this more general situation,---and, in particular, to so-called ``fractal tube formulas", obtained for fractal strings in Chapter 8 of \cite{LapidusFGCD2}, as well as, for higher-dimensional fractal drums, in Chapter 5 of \cite{LapidusFZF}, including arbitrary compact subsets of $\mathbb{R}^{N}$, for any integer $N \geq 1$.  See also \cite{DavLap24,DavLap,DavLap25b,DavLap2} for a different approach to fractal tube formulas and to discrete (as opposed to global) fractal Taylor-like expansions.    
\\
\indent In the one-dimensional case (i.e. for fractal strings), the geometric zeta function could also be used to characterize suitable fractal subsets of $\mathbb{R}$.  The resulting characterization motivated the introduction and study (in \cite{Pomerance1,Pomerance2,LapidusFGNT,LapidusFGCD, LapidusFGCD2}) of the class of fractal strings defined via sequences of positive real numbers (corresponding to the lengths of intervals deleted in the generalized deletion process) to include a class of measures on $(0,+\infty)$ which were locally bounded and had zero mass near the origin.  These additional mathematical objects became known as \textit{generalized fractal strings} and played a key role in the formulation and the proofs of the fractal explicit formulas; see \cite{LapidusFGNT}, \cite{LapidusFGCD}, along with Chapters 4 and 5 of \cite{LapidusFGCD2}.
\\
\indent In the case of generalized fractal strings, it became of significant interest to determine to what extent one could recover the fractal string (or its anti-derivatives) from the distribution of its complex dimensions in the complex plane.  First, in \cite{LapidusFGNT,LapidusFGCD,LapidusFGCD2}, a detailed analysis was conducted of whether this was possible in a given \textit{window} in the complex plane, an appropriate closed region in the complex plane to the right of a vertical line---or, more generally, of a suitable curve (known as a \textit{screen}).  An explicit formula was developed in order to undergo this procedure, expressing the geometric counting function (and its anti-derivatives) as a pointwise or distributionally convergent (and typically countably infinite) sum of residues evaluated at each of its (visible) complex dimensions in a given window.  Next, the resulting fractal explicit formula was recast in the form of another fractal explicit formula as a way to recover the $k^{\text{th}}$ distributional anti-derivative of $\eta$ from its complex dimensions.  (Here, $k \in \mathbb{N}$ in the pointwise case, and $k \in \mathbb{Z}$ in the distributional case.)  In the case where there was no screen present (i.e., when the window could be chosen to be the whole complex plane), there was no error term present.  In the case where there was an actual window in the complex plane arose an error term dependent on the screen and which could be explicitly estimated.  Under appropriate hypotheses, this procedure was carried out in both the pointwise and distributional cases in Chapters 5, 6, and 8 of \cite{LapidusFGCD2}.  The resulting formulas came to be known as \textit{fractal} (or \textit{generalized}) \textit{explicit formulas}; see, e.g., \cite{LapidusFGNT, LapidusFGCD, LapidusFGCD2,LapidusFZF, SLO, Lapidus26}.  They significantly generalize Riemann's original explicit formula for the prime number counting function (\cite{Riemann, Edwards1974}), its rigorous versions by von Mangoldt \cite{Mangoldt1,Mangoldt2}, along with most of the other known number-theoretic explicit formulas.
\\
\indent The fractal explicit formulas of \cite{LapidusFGNT,LapidusFGCD,LapidusFGCD2} enable us, in particular, to very precisely express the \textit{geometric oscillations} that are intrinsic to fractals in terms of the underlying \textit{complex dimensions}.  More specifically, the \textit{amplitudes} (respectively, the \textit{frequencies}) of these \textit{oscillations} (or \textit{vibrations}) directly correspond to the \textit{real parts} (respectively, to the \textit{imaginary parts}) of the underlying \textit{complex dimensions}.  This led naturally to a new (and very general) mathematical definition of \textit{fractality} as being characterized by the presence of \textit{nonreal} complex dimensions---and hence also, by the presence of true \textit{intrinsic geometric oscillations}.  See \cite{LapidusFGNT,LapidusFGCD,LapidusFGCD2}, along with (in the higher-dimensional case) \cite{LapidusFZF}, \cite{SLO}, and \cite{Lapidus26}; see also \cite{DavLap24,DavLap,DavLap25b,DavLap2} for the extended theory of complex dimensions, along with the corresponding fractal expansions and fractal tube formulas. 
\\
\indent Soon, interest (by the first author) developed into how to express the fractal explicit formula for generalized (or fractal) fractals as a \textit{generalized (or fractal) Taylor series involving fractional derivatives, (or else, as a fractal Taylor's formula with error term) with complex orders corresponding to (the negatives of) the underlying (fractal) complex dimensions}.  In his famous book, \textit{Les Distributions}, L. Schwartz \cite{Schwartz} expressed the complex-ordered fractional derivatives of a distribution as the distribution convolved with a variable raised to the negative of that complex ordered power \cite{Schwartz}. Naturally, it is possible to view a (suitable Borel) measure as a distribution: a measure applied to a set is equivalent to integrating the characteristic function on that set against the measure---or rather, such a measure (viewed as a distribution applied to an appropriate test function) amounts to integrating the measure against the test function. In this way, a measure induces a distribution.  Using Schwartz's definition of the fractional derivative of a distribution applied in the case when our fractal is given by a measure---and is, more precisely, a generalized fractal string in the sense of \cite{LapidusFGNT, LapidusFGCD, LapidusFGCD2}, we can express the (distributional) fractal explicit formula for the measure as a sum involving the fractional derivatives of the $\delta$ distribution applied to the test function $\phi$, at orders which are the negatives of the complex dimensions of the fractal, along with coefficients extracted from the principal parts of the Laurent series expansions of $\zeta_{\eta}$, modulo a possible error term. (The fact that this could be done in a mathematically consistent way was originally conjectured by the first author in the early 2000s.)  It is still unknown whether this error term can be interpreted much as the error term associated with the standard \text{Taylor formula} of a smooth function.  However, it follows from the results of Chapter 5 of \cite{LapidusFGCD2} that it is given by a suitable contour integral along the screen and that it can be explicitly estimated, distributionally.  
\\
\indent We point out that, under suitable hypotheses, the error term in the fractal Taylor formula vanishes identically and hence, we then obtain an \textit{exact fractal Taylor's formula}---also called here a \textit{fractal Taylor series} for the generalized fractal string (or for its possibly multiply integrated geometric counting functions), much as an analytic function can be expressed as a Taylor power series.  Note, however, that the corresponding exponents in the fractal Taylor series are no longer nonnegative integers, in general, but precisely correspond instead to the underlying fractal complex dimensions.
\medskip
\\
\indent The organization of this paper is as follows: 
\\
\indent  Section~\ref{section:2} consists mostly of necessary background material, which can be found, e.g., in \cite{LapidusFGCD2}. In Subsection~\ref{subsection:2.1}, we first provide the precise definition of a generalized fractal string in terms of a suitable (locally bounded) measure on the positive real line $(0,+\infty)$. We then present several key notions associated with a generalized fractal string, including its geometric counting function, its geometric zeta function, and its (visible) complex dimensions---defined as the poles of the (necessarily unique) meromorphic continuation (when it exists) of the geometric  zeta function on a given domain of $\mathbb{C}$.  We also provide examples illustrating each of these definitions.  In Subsection~\ref{section:2.2}, we dive into the fractal explicit formulas from Chapter 5 of \cite{LapidusFGCD2}. We first define the anti-derivatives of the geometric counting function, along with the notions of languidity and strong languidity from \cite{LapidusFGNT, LapidusFGCD, LapidusFGCD2}, which are the growth hypotheses on the geometric zeta function that are made for the fractal explicit formulas to hold.  We then review the fractal (or generalized) explicit formulas, both taken in the pointwise form and the distributional form, as well as with or without error term. Furthermore, as an illustration and a motivation, we discuss a number-theoretic application to the Riemann (and von Mangoldt) explicit formulas.  We point out that all of these results are excerpted from \cite{LapidusFGNT, LapidusFGCD, LapidusFGCD2}; see, e.g., Chapter 5 of \cite{LapidusFGCD2}, where detailed statements and complete proofs of the fractal explicit formulas are provided.  
\\
\indent Section~\ref{section:3} is devoted to distributional derivatives of complex orders.  In Subsection~\ref{subsection:3.1}, we recall the usual integer order derivative definition of a distribution (besides \cite{Schwartz}, see also, e.g., \cite{Foll} or \cite{Rudin} for a definition) as a motivation for the fractional derivative definition of a distribution, presented in Subsection~\ref{subsection:3.4} of the paper.  In Subsection~\ref{subsection:3.2}, we present Hadamard's finite part, which is important for the definition of distributional fractional derivatives, in the spirit of Schwartz \cite{Schwartz}.  We note that Hadamard's finite part is widely used as an analytical continuation technique and not just for the study of (distributional) fractional derivatives.  In Subsection~\ref{subsection:3.3}, we present a family of \textit{pseudofunctions}, denoted $Y_{\alpha}$, which are formally defined as the finite part of distributions involving $x^{\alpha}$, where $\alpha \in \mathbb{C}$ and $x>0$.  Also in this subsection, we discuss the semigroup and holomorphicity properties of these types of distributions (also called pseudofunctions).  We will see that these arise in the definition of the fractional derivative of a distribution, in light of Schwartz's work in \cite{Schwartz}. Subsection~\ref{subsection:3.4} is then dedicated to the Schwartz definition of fractional derivatives (of arbitrary complex orders) for a distribution.  We first discuss some preliminaries before giving the main definition: we provide the assumptions on the underlying space of test functions, as well as formally define the convolution of two distributions and recall the corresponding hypotheses.  The Schwartz definition of a fractional derivative is then provided in terms of the pseudofunctions $Y_{\omega}$. We then give an example involving the fractional derivative of the $\delta$ distribution; this example will play an important role in our work discussed in Section~\ref{section:4}.  Afterward, the aforementioned semigroup property is revisited.       
\\
\indent In Section~\ref{section:4}, we provide our new results, which enable us to rewrite the distributional fractal explicit formulas, both with or without error term, in terms of the $(-\omega)^{\textnormal{th}}$ order fractional derivatives (denoted by $Y_{\omega}$, for any $\omega \in \mathbb{C}$) of the Dirac distribution $\delta$  applied to modified test functions of the form $\phi(x) \ln^{k-1}(x)$, where $k \in \mathbb{N}$ does not exceed the order of the pole $\omega$ arising in the explicit formula and where $\phi \in \mathbf{S}(0,+\infty)$, the space of rapidly decreasing infinitely differentiable functions on $(0,+\infty)$; see Subsection~\ref{subsection:4.1}. (This is justified because we show in Appendix~\ref{Appendix B} that $\mathbf{S}=\mathbf{S}(0,+\infty)$, the space of all infinitely differentiable functions which are rapidly decreasing on $(0,+\infty)$, also contains all of these modified test functions, provided $\phi \in \mathbf{S}$.)  In the important special case of simple complex dimensions (i.e. of simple poles of the geometric zeta function), this yields a true fractal Taylor's formula (with or without error term, depending on the hypotheses) involving the fractional derivatives with complex orders running through all of the underlying complex dimensions; see Theorem~\ref{thm:4.4} of Subsection~\ref{subsection:4.2}.  Moreover, in the general case when the complex dimensions are allowed to have \textit{arbitrary multiplicities}, we obtain (under languidity assumptions) a corresponding fractal Taylor's formula \textit{with error term}---or else, an \textit{exact} fractal Taylor's formula, also called a \textit{fractal Taylor series}---by introducing (as was alluded to above) a suitable modification of the Schwartz complex fractional derivatives taking into account these multiplicities and denoted by $Y_{\omega;m_{\omega},j}$, where $\omega$ is a pole of order $m_{\omega}$ of the geometric zeta function (i.e., a complex dimension of multiplicity $m_{\omega}$ of the generalized fractal string) and the nonnegative integer $j$ is less or equal its multiplicity $m_{\omega}$; see Theorem~\ref{thm:4.5} in Subsection~\ref{subsection:4.2}.  
\\
\indent More specifically, Section 4 is organized as follows. In Subsection~\ref{subsection:4.1}, we express the $n^{\mathrm{th}}$ order derivative of the Mellin transform of a suitable test function $\phi$ evaluated at $\omega \in \mathbb{C}$ as the distribution $Y_{\omega}$ applied to the new test function $\phi(x) \ln^{n}(x)$.  We refer the interested reader to Appendix~\ref{Appendix B} for a proof of the fact that this new test function is also in the Schwartz class of test functions, and hence that the application of $Y_{\omega}$ on this function is well defined.  
\\
\indent In Subsection~\ref{subsection:4.2},  we rewrite the (distributional) fractal explicit formula as a fractal Taylor's formula (with or without error term, depending on the hypotheses) both in the case when the poles (complex dimensions) are assumed to be all simple or when there are multiple poles present.  To do this, we first expand the Mellin transform of $\phi \in \mathbf{S}(0,+\infty)$, denoted by $\widetilde{\phi}$, into a formal power series, which we can do because $\widetilde{\phi}$ is holomorphic on all of $\mathbb{C}$ (see Appendix~\ref{Appendix A} for a proof). Following Section 6.1 of \cite{LapidusFGCD2} (see Theorems~\ref{thm:4.2} and ~\ref{thm:4.3}), we then expand $\zeta_{\eta}=\zeta_{\eta}(s)$, the geometric zeta function of the generalized fractal string, into a formal Laurent series valid around the pole $\omega$ (viewed as a complex dimension of $\eta$).  Next, we take the product of the resulting power series and of the principal part of the Laurent series and extract out the coefficient corresponding to the $(s-\omega)^{-1}$ term in the new Laurent expansion (the residue of the product).  This coefficient can be re-expressed as a sum of fractional derivatives of $\delta$ taken to the order $\omega$ (denoted by $Y_{\omega}$) applied to test functions of the form $\phi(x) \ln^{j-1}(x)$, where $j \in \mathbb{N}$ varies from $1$ to $m$ ($m$ being the order of the pole $\omega$), multiplied with coefficients depending on $k \in \mathbb{N}$ and the principal part of the Laurent series expansion for $\zeta_{\eta}=\zeta_{\eta}(s)$ at $s=\omega$.  Finally, we re-express each term as this sum occurring in the summand of the explicit formula (at the $k=0$ level).  This is done when there is an error term (languid case) and when there is no error term (strongly languid case). 
\\
\indent When there are multiple poles, the distributional fractal explicit formula becomes a distributional fractal Taylor formula involving a double sum, as in the case of Theorem~\ref{thm:4.5}, and when only simple poles are present, the explicit formula becomes a distributional fractal Taylor series involving only a single sum, as in the case of Theorem~\ref{thm:4.4}. 
Thus, we can recast in this way the distributional fractal explicit formulas (namely, those found in \cite{LapidusFGNT}, \cite{LapidusFGCD}, and \cite{LapidusFGCD2}, Chapter 5).  Depending on the hypotheses on $\eta$ (namely, the languidity or the strong languidity of $\zeta_{\eta}$, respectively), the resulting fractal Taylor's formula has an error term (which can be explicitly estimated, asymptotically) or no error term --- in which case it is said to be an \textit{exact fractal Taylor's formula} or else, a \textit{fractal Taylor series} representation for the generalized fractal string $\eta$.  
\\
\indent Also included in Section~\ref{section:4} is an example involving the fractal Taylor series representation for the generalized Cantor string; see Example~\ref{exam:4.1}.  This example is followed by remarks, prompting the interested reader to write the exact fractal Taylor expansions (or fractal Taylor series) for the Fibonacci string---and for more general self-similar strings, as in Remark~\ref{rmk:4.3}---as well as to write a fractal Taylor's formula with error term for the $a$-string, as in Remark~\ref{rmk:4.4}.        
\\
\indent Finally, towards the end of the paper, are included two appendices, namely, Appendix~\ref{Appendix A} and Appendix~\ref{Appendix B}. Appendix~\ref{Appendix A} provides a proof of the holomorphicity of the Mellin transform of an arbitrary function $\phi \in \mathbf{S}(0,+\infty)$, and a justification for the swapping of the integral and derivative, as is done in the proof of Theorem~\ref{thm:4.1}.  Appendix~\ref{Appendix B} provides a proof of the fact that the distribution $Y_{\omega}$ applied to the test function $\phi(x) \ln^{j-1}(x)$ is well defined; that is, if $\phi \in \mathbf{S}(0,+\infty)$, then $\phi(x) \ln^{j-1}(x) \in \mathbf{S}(0,+\infty)$, for any $j \in \mathbb{N}$.  (It easily follows that if $\phi \in \mathbf{D}$, then $\phi(x) \ln^{j-1}(x) \in \mathbf{D}$, where $B \geq 0$ and $\mathbf{D}=\mathbf{D}(B,+\infty)$ is the space of infinitely differentiable functions with compact support on $(B,+\infty)$; see part (a) of Remark~\ref{rmk:B.1} of Appendix~\ref{Appendix B}.) From this property arises a new distribution, denoted $Y_{\omega;m_{\omega},j}$, applied to the test function $\phi$.  As such, we can re-express our fractal Taylor's formula in terms of this newly defined distribution, where $\omega$ is a (visible) complex dimension with multiplicity $m_{\omega}$.  These results are used in an essential way in Section~\ref{section:4}, especially in Subsection~\ref{subsection:4.2} on distributional fractal Taylor's formulas with or without error term.    
\smallskip
\\
\indent The results obtained in this paper provide a significant contribution to the broader program of characterizing fractals in terms of fractal (or fractional) Taylor series expansions and of fractal (or fractional) Taylor-like expansions involving the underlying fractal complex dimensions, as well as to the development of a new form of fractal (or fractional) calculus.  
\section{Preliminaries: Explicit Formulas}
\label{section:2}
\subsection{Generalized Fractal Strings}
\label{subsection:2.1}
\noindent We begin with the definition of a generalized fractal string and its associated geometric zeta function.

\begin{dfn}[\cite{LapidusFGCD2},Chapter 4] 
\label{dfn2.1}
A \textnormal{generalized fractal string} is defined to be any locally bounded measure on the positive real line $(0,+\infty)$ with zero mass near the origin.  That is, a local complex measure $\eta$ is a generalized fractal string if for some $\delta>0$, we have that
$$|\eta|((0,\delta))=0.$$
Here, $|\eta|$ denotes the total variation associated with the measure $\eta$, and $|\eta|(S)<\infty$ for any bounded set $S$ on the positive real line.  Recall that $|\eta|$ is a positive and locally bounded measure on $(0,+\infty)$.  Also, if $\eta$ is a (locally bounded) \textnormal{positive} measure, then $|\eta|=\eta$.    
\end{dfn}
Examples of generalized fractal strings include Dirac combs, measures of the form $\eta= \sum_{j=0}^{\infty} w_{j} \delta_{l_{j}^{-1}}$ where $l_{j} >0$, $w_{j} \in \mathbb{C}$, and $\{ l_{j} \}_{j \in \mathbb{N}_{0}}$, is a  monotonically strictly decreasing sequence of positive real numbers tending to zero; see Subsection 4.1.1 of \cite{LapidusFGCD2}.  Here and thereafter, for any $x >0$, $\delta_{x}$ denotes the Dirac measure (or distribution) concentrated at $x$. This corresponds to a natural generalization of \textit{ordinary fractal strings}, which is the case when $w_{j} \in \mathbb{N}$ for all $j \geq 1$ and $\sum_{j=0}^{\infty} w_{j} l_{j} <\infty$.  The latter case corresponds to \textit{ordinary fractal strings}, which always have geometric realizations of the form $\Omega= \bigsqcup_{n=0}^{\infty} I_{n}$, an open subset of $\mathbb{R}$ which is a finite or countable disjoint union of bounded open intervals $I_{n}$ of length $l_{n}$ (repeated with positive integer multiplicities $m_{n} \in \mathbb{N}$, for each $n \in \mathbb{N}_{0}$; see \cite{Pomerance1,Pomerance2,LapidusFGCD2, SLO,Lapidus26}). 
\\
\indent The two items below show that Dirac combs possess the two main properties required for generalized fractal strings. They have no mass near zero, so that the second condition in Definition~\ref{dfn2.1} is satisfied:    
$$|\eta|((0,l_{0}^{-1}))=\left| \sum_{j=0}^{\infty} w_{j} \delta_{l_{j}^{-1}} \right|(0,l_{0}^{-1})=0.$$
Furthermore, they are locally bounded.  Indeed for any bounded subset $S$ of $(0,+\infty)$, and since $S \cap \{ l_{j}^{-1}| j \in \mathbb{N}_{0} \}$ is a finite set, we have that
$$|\eta|(S)=\sum_{j \in \mathbb{N}_{0}, l_{j}^{-1} \in S } |w_{j}|<\infty.$$
\indent Observe that for an ordinary fractal string, the associated Dirac comb $\eta=\sum_{n=0}^{\infty} m_{n} \delta_{l_n^{-1}}$ is a positive generalized fractal string (since the multiplicities $m_{n}$ are positive integers---and hence there is no need to distinguish between $\eta$ and $| \eta|$:\hspace{0.5mm} $\eta=|\eta|$.  This is the case, in particular, of the Cantor string, to be discussed in Subsection~\ref{subsubsubsection:2.1.3.1} just below, the Fibonacci string---as well as, more generally, of all similar strings---along with the $a$-strings, which are all ordinary fractal strings; see \cite{LapidusFGNT, LapidusFGCD}, \cite{LapidusFGCD2}, Chapters 1--3, \cite{SLO}, \cite{LR24}, and \cite{Lapidus26}.
\\
\indent A simple example of a Dirac comb is that of the Cantor string (viewed as a (positive) generalized fractal string and used throughout \cite{LapidusFGNT, LapidusFGCD, LapidusFGCD2}):
$$\eta_{CS}=\sum_{j=0}^{\infty} 2^{j} \delta_{3^{-(j+1)}};$$
that is, the Cantor string is the ordinary fractal string obtained by taking out the open middle-thirds from the interval [0,1] and successively taking out the open middle third from what is left, just as in the construction of the classic Cantor set; see Figure~\ref{fig:fig1}.   
\\
\indent The resulting standard geometric realization of the Cantor string then consists of the disjoint union of all the deleted open intervals.  Given their frequent use in the literature, along with their prototypical properties, Dirac combs provide a good illustration for the definition of the geometric counting function, as well as for the definition of the geometric zeta function given below.   

\begin{figure}
\centering \includegraphics[scale=0.7]{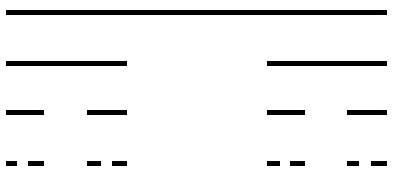}
\captionsetup{width=0.9\textwidth}
\caption{ The Cantor string.  Above are the first four iterations in the generation of the Cantor set.  What is left behind at each iteration forms the standard geometric realization of the Cantor string after unioning together what is left behind in the previous iterations.}
\label{fig:fig1}
\end{figure}

\subsubsection{Geometric Counting Function}
\label{subsection:2.1.1}
\begin{dfn}[\cite{LapidusFGCD2}, Chapter 4] Given a generalized fractal string $\eta$, we can define its \textnormal{geometric counting function} $N_{\eta}$--- counting the number of reciprocal lengths.  More specifically, it is defined as the following function:

$$N_{\eta}(x)=\int_{0}^{x} d\eta :=\frac{1}{2} (\eta(0,x)+\eta(0,x])=\eta(0,x)+\frac{1}{2} \eta( \{ x \}), \textnormal{ for any } x >0.$$
\label{dfn2.2}
\end{dfn}

\noindent Note that this function coincides with the first anti-derivative of $\eta$, vanishing at $x=0$, with $\int_{0}^{x} d \eta$ being interpreted as being equal to $\frac{(\eta(0,x)+\eta(0,x])}{2}$, a convention which will be used throughout this paper, most often implicitly.  
\\
\indent Let $\eta$ be a generalized fractal string given by a Dirac comb,
$$\eta=\sum_{j=0}^{\infty} w_{j} \delta_{l_{j}^{-1}},$$

\noindent where $\{ l_{j} \}_{j \in \mathbb{N}_{0}}$ is a strictly decreasing sequence of positive real numbers tending to zero (corresponding to the distinct lengths of the fractal string) and $\{w_{j} \}_{j \in \mathbb{N}_{0}}$ is a sequence of complex numbers corresponding to the multiplicities of the (generalized) fractal string.  Recall that in the important special case of ordinary fractal strings, the weights are assumed to be positive integers and $\sum_{j=0}^{\infty} w_{j} l_{j} <\infty$.   
\\
\indent Then, the geometric counting function is simply given, for all $x>0$, by

$$N_{\eta}(x)=\int_{0}^{x} d \left( \sum_{j=0}^{\infty} w_{j} \delta_{l_{j}^{-1}} \right)+\frac{1}{2} \eta(\{x\})$$
$$\hspace{12mm}=\begin{cases}

\sum_{j=0}^{n} w_{j},  & x \neq l_{n+1}^{-1},
\\
\sum_{j=0}^{n} w_{j} + \frac{1}{2} \omega_{n+1}, & x= l_{n+1}^{-1},

\end{cases}$$

\noindent corresponding to the number $n$ of distinct reciprocal lengths $l_{j}^{-1}$ less than $x$.

\subsubsection{Geometric Zeta Function}
\label{subsection:2.1.2}
Another function naturally associated to a generalized fractal string $\eta$ is the geometric zeta function.  Mathematically, this is the Mellin transform of the measure $\eta$.  It is precisely defined just below.

\begin{dfn}[\cite{LapidusFGCD2}, Chapter 4] Let $\eta$ be a generalized fractal string. We define the \textnormal{geometric zeta function} of $\eta$ to be the complex-valued function, given for all $s \in \mathbb{C}$ with $\Re(s)$ sufficiently large, by 
$$\zeta_{\eta}(s)=\int_{0}^{+\infty} x^{-s} \eta(dx).$$
\label{dfn2.3}
\end{dfn}

\begin{dfn}[\cite{LapidusFGCD2}, Chapter 4]
\label{dfn2.4}
\noindent The \textnormal{dimension} $D_{\eta}$ of $\eta$ is the abscissa of (absolute) convergence of the above Dirichlet integral in Definition~\ref{dfn2.3}.  Namely, $D_{\eta}$ is the extended real number in $[-\infty,\infty]$ given by
$$D_{\eta}=\mathrm{inf} \hspace{1mm} \{  \Re(s): \int_{0}^{+\infty} |x^{-s}| |\eta|(dx) <\infty \}$$
$$\hspace{7mm}=\mathrm{inf} \hspace{1mm} \{\alpha \in \mathbb{R}: \int_{0}^{+\infty} x^{-\alpha} |\eta|(dx)<\infty \}.$$

\end{dfn}

\indent By construction, the Dirichlet integral initially defining the geometric zeta function $\zeta_{\eta}(s)$ in Definition~\ref{dfn2.3} is absolutely convergent---and is thus also convergent---(as well as is equal to $\zeta_{\eta}(s)$), for all $s \in \mathbb{C}$ with $\Re(s) > D_{\eta}$.  Also, by convention, $D_{\eta}:=-\infty$ (respectively, $D_{\eta}=+\infty$) if the Dirichlet integral converges absolutely (respectively, diverges) for all $s \in \mathbb{C}$.   
\\
\indent Moreover, following the usual convention, if $\zeta_{\eta}$ admits a meromorphic extension---which is then necessarily unique---to a domain $U$ of $\mathbb{C}$ containing the closed half-plane $\{ \Re(s) \geq D_{\eta} \}$, we still denote it by $\zeta_{\eta}=\zeta_{\eta}(s)$ on all of $U$.  

\begin{exam}[Dirac Comb]
\label{exam:2.1}
Let $\eta$ be the generalized fractal string which is the Dirac comb example discussed above, just after Definition 2.1:

$$\eta=\sum_{j=0}^{\infty} w_{j} \delta_{l_{j}^{-1}}.$$

\noindent Then, the geometric zeta function for this generalized fractal string is given for all $s \in \mathbb{C}$ with $\Re(s)>D_{\eta}$ by,
$$\zeta_{\eta}(s)=\int_{0}^{+\infty} x^{-s} \left( \sum_{j=0}^{\infty} w_{j} \delta_{l_{j}^{-1}} \right)dx=\sum_{i=0}^{\infty} w_{j} l_{j}^{s}.$$

\end{exam}

\begin{exam}[Cantor String] 
\label{exam:2.2}
Let $\eta$ be the Cantor string discussed above, $\eta=\eta_{\text{CS}}=\sum_{j=0}^{\infty} 2^{j} 3^{-(j+1) s}$.  Then, according to Example~\ref{exam:2.1} just above, the geometric zeta function simply becomes successively:
$$\zeta_{{\eta}_{CS}}(s)=\sum_{j=0}^{\infty} 2^{j} 3^{-(j+1)s}=3^{-s} \left( \sum_{j=0}^{\infty} (2 \cdot 3^{-s})^{j} \right) =\frac{3^{-s}}{1-2 \cdot 3^{-s}},$$

\noindent provided $|2 \cdot 3^{-s} | < 1$, i.e., for $\Re(s) > \log_{3}2$.  Therefore, upon meromorphic continuation, we see that $\zeta_{{\eta}_{CS}}$ admits a necessarily unique meromorphic continuation to all of $\mathbb{C}$, given by
$$\zeta_{CS}(s)=\frac{3^{-s}}{1-2 \cdot 3^{-s}}=\frac{1}{3^{s}-2},  \textnormal{ for all }  s \in \mathbb{C}.$$

\indent Also, it follows from the above computation that the dimension $D_{{\eta}_{CS}}$ of $\eta_{CS}$ coincides with $\text{dim}_{M}(C)$, the Minkowski dimension of the (ternary) Cantor set $C$:
$$D_{\eta_{CS}}= \log_{3}2=\text{dim}_{M}(C).$$
Here, $C$ is viewed as the boundary of the ordinary Cantor string $\Omega_{CS}$, defined as the bounded open subset of $\mathbb{R}$ whose complement in $[0,1]$ is the Cantor set (i.e. $\partial \Omega_{CS}=C$) and consisting for every $j \in \mathbb{N}_{0}:=\mathbb{N} \cup \{ 0 \}$ of $2^{j}$ disjoint open intervals of lengths $3^{-(j+1)}$ (these are the deleted open intervals in the classic construction of $C$).
\\
\indent This is an illustration of the general result (see \textnormal{\cite{LapidusFGCD2}}, Theorem 1.10, page 17) according to which, for any geometric realizations of an ordinary fractal string---i.e., for any bounded open set $\Omega$ in $\mathbb{R}$ with sequence of connected components  (bounded open intervals) $I_{j}$ of distinct lengths $l_{j}$ counted with multiplicities $w_{j} \in \mathbb{N}$, for every $j \in \mathbb{N}_{0}$ \textnormal{\cite{Pomerance2, LapidusFGCD2, SLO}}---the dimension $D_{\eta}$ of the associated Dirac comb, $\eta=\sum_{j=0}^{\infty} w_{j} \delta_{{l_{j}}^{-1}}$, coincides with the (inner, upper) Minkowski dimension of its boundary $\partial \Omega$.  
\end{exam}

\begin{remark}

\noindent Throughout this paper, we use, interchangeably, the notation $\log x$ or $\ln x$ for the natural logarithm of $x>0$.

\end{remark}

\subsubsection{Complex Dimensions}
\label{subsubsection:2.1.3}
\noindent The complex dimensions of a generalized fractal string $\eta$ are defined as the poles of its associated geometric zeta function, $\zeta_{\eta}$.  These are points scattered across the complex plane, but which are located to the left (or on the leftmost boundary) of a particular right half-plane, namely, the half plane of (absolute) convergence $\{ \Re(s) > D_{\eta} \}$ of the geometric zeta function.  Indeed, $\zeta_{\eta}=\zeta_{\eta}(s)$ is holomorphic for $\Re(s)>D_{\eta}$---and thus does not have any pole for $\Re(s)>D_{\eta}$.  For fractal strings having distinct lengths and multiplicities both following a geometric sequence, these poles all lie periodically on a single vertical line, namely, the line of convergence, $\{ \Re(s)=D_{\eta} \}$, of the geometric zeta funtion $\zeta_{\eta}$. This is the case of the Cantor string, for example (see Subsection~\ref{subsubsubsection:2.1.3.1} below).  More precisely, the complex dimensions of $\eta_{\text{CS}}$ form an infinite vertical arithmetic progression with common difference (or period) $\frac{2 \pi}{\log3}i=i \mathbf{p}$, where $\mathbf{p}:=\frac{2 \pi}{\log 3}$ is called the oscillatory period of the Cantor string.
\\
\indent Other types of strings associated with a finite number of similitudes of $\mathbb{R}$ have their poles distributed periodically on finitely many vertical lines, with the same period on each vertical line.  These are referred to as lattice self-similar strings.  (Clearly, the Cantor string is a simple example of a lattice string.)  In the case of nonlattice (self-similar) strings, the complex dimensions are distributed quasiperiodically in a horizontally bounded vertical strip, as they are obtained by approximating (via Diophantine approximation) the given nonlattice string by a suitable sequence of lattice strings with increasing periods tending to infinity exponentially fast; see Chapters 2 and 3 of \cite{LapidusFGCD2}. 
\\
\indent We are especially interested in the case when we focus on the poles which lie in a certain part of the complex plane, called a \textit{window} and denoted $W$.  The resulting set of poles is called the set of \textit{visible complex dimensions} (relative to the window $W$) and is denoted by $\mathit{\mathcal{D}_{\eta}(W)}$, where $W$ is a closed subset of $\mathbb{C}$ and its boundary is a suitable curve, denoted by $S$ and called the \textit{screen} (associated with the window $W$); see Definitions~\ref{dfn:2.6} and ~\ref{dfn:2.7} in Subsection~\ref{subsection:2.2.2} below.  See also Figure~\ref{fig:fig2} for the depiction of a window $W$ in the complex plane as well as of its leftmost boundary, the screen $S$.

\indent More specifically and more generally, by construction, $\zeta_{\eta}$ is holomorphic for $\Re(s)> D_{\eta}$, where $D_{\eta}$ is the dimension of $\eta$.  Assume that $\zeta_{\eta}$ has a meromorphic extension (necessarily unique) to some open connected set $U$ of $\mathbb{C}$ (i.e., a domain of $\mathbb{C}$) containing the closed right half-plane $\{ \Re(s) \geq D_{\eta} \}$.  Then, the set of \textit{visible complex dimensions} of $\eta$ (relative to $U$) is defined by
$$\mathcal{D}_{\eta}(U)=\{ \omega \in U: \zeta_{\eta} \text { has a pole at } \omega \}$$
$$=\{ \omega \in U:\zeta_{\eta}(\omega)=\infty \}.$$
If $U=\mathbb{C}$, then $\mathcal{D}_{\eta}=\mathcal{D}_{\eta}(\mathbb{C})$ is simply called the set of \textit{complex dimensions} of $\eta$.  In the special case of most interest to us in this paper when $U$ is the domain equal to the interior of the window $W$, defined as the closed subset of $\mathbb{C}$ located to the right of a screen $S$ (see Figure~\ref{fig:fig2}), we also write $\mathcal{D}_{\eta}(U)=\mathcal{D}_{\eta}(W)$. (We assume here, as in Chapter 5 of \cite{LapidusFGCD2}, that no pole of $\zeta_{\eta}$ lies on the screen $S$.)  Finally, note that $\mathcal{D}_{\eta}(U) \subseteq U$---being the set of poles of a meromorphic function on $U$---is always a discrete subset of $\mathbb{C}$ and hence, is either finite or countably infinite.

\begin{figure}
\centering \includegraphics[scale=0.7]{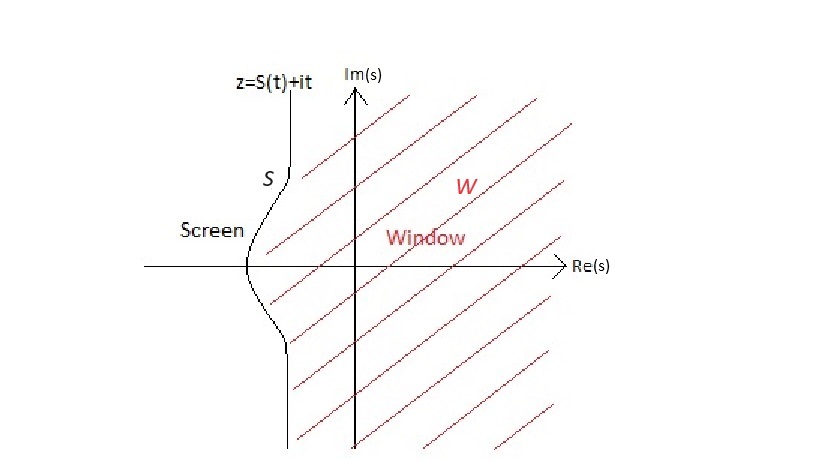}
\caption{The window $W$ associated with the screen $S$.}
\label{fig:fig2}
\end{figure}

\noindent 

\paragraph{Cantor String}
\label{subsubsubsection:2.1.3.1}
\noindent Recall from Example~\ref{exam:2.2} that the geometric zeta function $\zeta_{{\eta}_{CS}}$ of the Cantor string $\eta_{CS}$ is a meromorphic in all of $\mathbb{C}$ and given by,
$$\zeta_{{\eta}_{CS}}(s)=\frac{3^{-s}}{1-2 \cdot 3^{-s}}=\frac{1}{3^{s}-2}, \text{ for all } s \in \mathbb{C}.$$
To find the complex dimensions of the Cantor string, we set the denominator equal to zero and solve for $s \in \mathbb{C}$, in order to obtain that the set of complex dimensions of $\eta_{CS}$ is given by
$${\mathcal{D}}_{\eta_{CS}}=\mathcal{D}_{CS}(\mathbb{C})=\left\{ \log_{3}2+\frac{2 \pi n i}{\log{3}} : n \in \mathbb{Z}  \right\}.$$
\noindent Note that these are equally spaced points which lie on the vertical line $\{ \Re(s)=\log_{3}2 \}$, where $D_{\eta_{CS}}=D_{CS}=\log_{3}{2}$ is the Minkowski dimension of the Cantor string.  Indeed, as in Subsection 2.3.2 of \cite{LapidusFGCD2} and in the terminology of Chapters 2 and 3 of \cite{LapidusFGCD2}, the ordinary fractal string given by the Cantor string is a lattice self-similar string with oscillatory period $\bf{p}_{CS}=\frac{\mathrm{2} \pi}{\log \mathrm{3}}$; so that
$$\mathcal{D}_{\eta_{CS}}=\mathcal{D}_{CS}=\{ D_{CS}+in \mathbf{p}_{CS}:n \in \mathbb{Z} \}.$$  

\paragraph{Fibonacci String}
\label{subsubsubsection:2.1.3.2}
As in Subsection 2.3.2 of \cite{LapidusFGCD2}, we define the Fibanocci string as being given by the following sequence of distinct lengths,
$$1,\frac{1}{2},\frac{1}{4},...,\frac{1}{2^{n}},...,$$
with associated integer multiplicities,
$$1,1,2,....,F_{n+1},...,$$
where, for all $n \in \mathbb{N}_{0}$, $F_{n}$ is the $n^{\text{th}}$ Fibonacci number.  That is, $F_{n+1}=F_{n}+F_{n-1}$, for all $n \in \mathbb{N}$, along with $F_{0}:=0$, $F_{1}:=1$.
\\
\indent A computation similar to (but significantly more involved than) the one carried out in Example 2.2 above shows that the geometric zeta function $\zeta_{Fib}$ of $\eta_{Fib}$ admits a (necessarily unique) meromorphic extension to all of $\mathbb{C}$, given by (see \textit{ibid})
$$\zeta_{\textit{Fib}}(s)=\frac{1}{1-2^{-s}-4^{-s}}, \text{ for all } s \in \mathbb{C}.$$
\indent Indeed, $\zeta_{\textit{Fib}}(s)=\sum_{n=0}^{\infty} F_{n} 2^{-ns}$, for all $s \in \mathbb{C}$ with $\Re(s)$ sufficiently large, and this last sum coincides with the above expression.  It then suffices to apply the principle of analytic (i.e., here, meromorphic) continuation in order to obtain the above conclusion.  
\\
\indent Setting the denominator equal to zero, and using the fact that $\phi^{-1} =\frac{-1+\sqrt{5}}{2}$,
where 
$$\phi=\frac{1+\sqrt{5}}{2}$$
is the Golden ratio; we find that the set of complex dimensions of the Fibonacci string is given by
$$\mathcal{D}_{\text{Fib}}=\{ D+ i n \mathbf{p}_{\textnormal{Fib}}:n \in \mathbb{Z} \} \cup \{ -D+i(n+\frac{1}{2}) \mathbf{p}_{\text{Fib}}: n \in \mathbb{Z} \},$$
where $D=\log_{2}\phi$ and $\bf{{p}_{\text{Fib}}}=\frac{\text{2} \pi}{\log \text{2}}$; see \cite{LapidusFGCD2}, Subsection 2.3.2.  Once again, the Fibonacci string is a lattice self-similar string with Minkowski dimension $D=D_{\text{Fib}}=\log_{2} \phi$ and oscillatory period $\bf{{p}_{\text{Fib}}}=\frac{\text{2} \pi}{\log \text{2}}$.  

\indent We refer to \cite{LapidusFGCD2} for many additional examples of ordinary and generalized fractal strings and of their complex dimensions, along with their applications in fractal geometry, dynamical systems, harmonic analysis, mathematical physics, spectral geometry, and number theory.  

\subsection{Fractal (or Generalized) Explicit Formulas}
\label{section:2.2}
In this section, we review the fractal (or generalized) explicit formulas obtained in Chapter 5 of \cite{LapidusFGCD2}, as well as in \cite{LapidusFGCD} and \cite{LapidusFGNT}.  In essence, we can recover the given generalized fractal string $\eta$ from its complex dimensions through an explicit formula (called a \textit{fractal explicit formula}), under suitable assumptions; namely, if its geometric zeta function $\zeta_{\eta}$ satisfies certain polynomial growth assumptions, called the \textit{languidity conditions}. Given that the fractal string $\eta$ satisfies these languidity assumptions, we have a pointwise way of recovering our string $\eta$, as well as a distributional way, of recovering it.  If, instead, we assume that $\zeta_{\eta}$ is strongly languid, then the corresponding pointwise or distributional explicit formula does not have an error term (i.e., it is exact). 
\\
\indent The first fractal explicit formula presented here (namely, in Theorem~\ref{thm:2.1} or in Theorem~\ref{thm:2.2}, with or without error term, respectively) is given in the pointwise form: they show how we can recover the $k^{\mathrm{th}}$ anti-derivative $N_{\eta}^{[k]}$ of $\eta$ (for all $k \in \mathbb{N}$ sufficiently large) from the complex dimensions of $\eta$; accordingly, $N_{\eta}^{\left[k \right]}$ is written as a pointwise convergent sum taken over all the (visible) complex dimensions of $\eta$.  Furthermore, the second explicit formula presented here (namely, in Theorem~\ref{thm:2.3}, part (i) and part (ii) in Subsection~\ref{subsubsection:2.2.4} above) is given in the distributional form, as well as with or without error term: it shows how we can recover the distributional anti-derivatives $P_{\eta}^{[k]}$ (with $k \in \mathbb{Z}$ arbitrary) of $\eta$ from the complex dimensions of $\eta$, now written as a distributionally convergent sum, taken all over all of the (visible) complex dimensions of $\eta$. (In particular, for $k=0$, we recover the generalized fractal string $\eta$ itself, while for $k=1$, we recover its geometric counting function, $N_{\eta}$.)  Note that since the set of visible complex dimensions in these fractal explicit formulas is a discrete (and hence also, finite or countable) subset of $\mathbb{C}$, the pointwise or distributionally convergent sums involved in the fractal explicit formulas are either finite or countably infinite. 
\\
\indent Before presenting (in Subsection~\ref{subsubsection:2.2.4}) the fractal (or generalized) explicit formulas, we first define (in Subsection~\ref{subsection:2.2.1}) the (pointwise) $k^{\text{th}}$ anti-derivative, $N_{\eta}^{[k]}$, of the generalized fractal string $\eta$, written via an integral formula, and then recall (in Subsection~\ref{subsection:2.2.2}) the notions of screen and window introduced in \cite{LapidusFGNT}, \cite{LapidusFGCD}, and \cite{LapidusFGCD2}. We then recall (in Subsection~\ref{subsubsection:2.2.3}) the polynomial growth assumptions on $\zeta_{\eta}$, called languidity hypotheses, for which this recovery is possible.  The fractal explicit formulas are then discussed pointwise and distributionally (in Subsection~\ref{subsubsection:2.2.4}).  More specifically, depending on the strength of the languidity assumptions, these explicit formulas can hold with an error term (which can be explicitly estimated, pointwise or distributionally) or else, without error term, in which case they are said to be exact.  Finally, in Subsection~\ref{subsubsection:2.2.5}, we discuss (partly from the point of view of the fractal explicit formulas) the classic number-theoretic explicit formulas due to Riemann and von Mangoldt in \cite{Riemann} and \cite{Mangoldt1, Mangoldt2}, respectively.
\\
\indent The results and definitions presented in this section are obtained and introduced in \cite{LapidusFGNT, LapidusFGCD, LapidusFGCD2}; see, especially, Chapter 5 of \cite{LapidusFGCD2} for detailed statements and complete proofs.  Here and throughout the remainder of this paper, we will use the research monograph \cite{LapidusFGCD2} as our main reference for the theory of complex dimensions of fractal strings and the associated fractal explicit formulas.  

\subsubsection{Anti-Derivatives of the Counting Function}
\label{subsection:2.2.1}
Let $\eta$ be a generalized fractal string, as given in Definition~\ref{dfn2.1}. We then have a natural way of generating the $k^{\text{th}}$ anti-derivatives of $\eta$ by appealing to an integral definition.  

\begin{dfn}
\label{dfn:2.5}
\noindent Let $\eta$ be a generalized fractal string (as in Definition~\ref{dfn2.1} above). Then, for any $k \in \mathbb{N}$, we define the \textnormal{(pointwise)} $k^{\text{th}}$ \textnormal{anti-derivative} of $\eta$, denoted $N_{\eta}^{[k]}$, via the following integral:
$$N_{\eta}^{[k]}(x)=\int_{0}^{x} \frac{(x-y)^{k-1}}{(k-1)!} \eta(dy), \textnormal{ for all } x>0.$$
\end{dfn}

\noindent Indeed, if $k=1$, then we obtain the usual geometric counting function of $\eta$, which is the anti-derivative of $\eta$ introduced in Definition~\ref{dfn2.2}:
$$N_{\eta}=N_{\eta}^{[1]}.$$

\indent Here, as before, we use the convention according to which, for every $\eta$-integrable and well-defined function $g$ on $(0,+\infty)$,
$$\int_{0}^{x} g(y) \eta(dy):=\int_{(0,x)} g(y) \eta(dy)+\frac{1}{2} g(x) \eta(\{ x \}).$$
\indent Before introducing the notions of languidity and strong languidity---needed in order to formulate the fractal explicit formulas with or without error term, respectively---we will first define mathematically the screen and window as subsets of the complex plane; see also Figure~\ref{fig:fig2} in Subsection~\ref{subsubsubsection:2.1.3.1} above.

\subsubsection{Screen and Window}
\label{subsection:2.2.2}

\begin{dfn}[\cite{LapidusFGCD2}, Section 5.3]
\label{dfn:2.6}
\noindent The \textnormal{screen} $S$ is defined to be
$$\{S(t)+it: t \in \mathbb{R} \},$$
where $S=S(t)$ is a real-valued Lipschitz function on $\mathbb{R}$ such that $S(t) \leq {D}_{\eta}$, for all $t \in \mathbb{R}$.  The screen (viewed as a subset of $\mathbb{C}$) forms the leftmost boundary of the window. 
\\
\indent Furthermore, we let 
\[
\inf(S)=\inf_{t \in \mathbb{R}} S(t) \textnormal{ and } \sup(S)=\sup_{t \in \mathbb{R}} S(t).
\]
\indent Morevoer, we let 
$$||S||_{\text{Lip}}= \sup_{x,y \in \mathbb{R}, x \neq y} \frac{|S(x)-S(y)|}{|x-y|}<\infty$$
denote the \textnormal{Lipschitz constant} of the \textnormal{Lipschitz continuous function} $S=S(t)$.

\end{dfn}

\begin{dfn}[\cite{LapidusFGCD2}, Section 5.3]
\label{dfn:2.7}
\noindent The \textnormal{window} $W$ is defined to be the closed subset of $\mathbb{C}$ consisting of all points located to the right of (or on) the screen $S$.  Formally,
$$W=\{ \sigma +it : \sigma \geq S(t), t \in \mathbb{R} \}.$$

\end{dfn}

\subsubsection{Languid and Strongly Languid Strings}
\label{subsubsection:2.2.3}
The fractal explicit formulas of \cite{LapidusFGCD2} are valid for generalized fractal strings which satisfy suitable polynomial-like growth conditions on the screen, as well as along a suitable sequence of points located on a sequence of horizontal lines located to the right of the screen.  This is known as the languidity condition.  Now, if there is also a suitable growth estimate holding for a sequence of screens, in addition to the languidity condition assumed to hold in the whole complex plane, then we say that the string is strongly languid.  The specific conditions are stated below.

\begin{dfn}[\cite{LapidusFGCD2}, Definitions 5.2 and 5.3]
\label{dfn:2.8}
\indent A generalized fractal string $\eta$ is said to be \textnormal{languid} if its geometric zeta function $\zeta_{\eta}$ satisfies the following growth conditions: There exists a two-sided sequence $\{ T_{n}\}_{n \in \mathbb{Z}}$ such that $T_{-n} <0 < T_{n}$, for all integers $n \geq 1$, and
$$\lim_{n \rightarrow \infty} T_{n}=+\infty \text{ , } \lim_{n \rightarrow \infty} T_{-n} =-\infty \text{ , } \lim_{n \rightarrow \infty} \frac{T_{n}}{|T_{-n}|}=1.$$
\indent The first two conditions just below (namely, \textnormal{\textbf{L1}} and \textnormal{\textbf{L2}}, for the same exponent $\kappa \in \mathbb{R}$) cover the case when $\eta$ is assumed to be \textnormal{languid}, while the final condition (namely \textnormal{\textbf{L3}}, in addition to \textnormal{\textbf{L1}} with $W:=\mathbb{C}$ and thus also, with $S(t)=-\infty$, for all $t \in \mathbb{R}$; both for the same exponent $\kappa \in \mathbb{R}$) covers the case when $\eta$ is assumed to be \textnormal{strongly languid}.  There exist constants $\kappa \in \mathbb{R}$ and $C>0$, such that,
\\

\textnormal{\textbf{L1}}. For all $n \in \mathbb{N}$ and all real numbers $\sigma \geq S(T_{n})$, we have that
$$|\zeta_{\eta}(\sigma+iT_{n})| \leq C(|T_{n}|+1)^{\kappa},$$
\indent \textnormal{\textbf{L2}}. For all $t \in \mathbb{R}$, with $|t| \geq 1$, we have that
$$|\zeta_{\eta}(S(t)+it)| \leq C|t|^{\kappa}.$$
\indent If $\eta$ satisfies \textnormal{\textbf{L1}} and \textnormal{\textbf{L2}}, then $\eta$ is said to be \textnormal{languid}---or, more specifically, \textnormal{$\kappa$-languid)} or else, languid with (languidity) exponent (or constant) $\kappa$. 
\\
\indent If, in addition to condition \textnormal{\textbf{L1}} (with $S(t) = -\infty$, for all $t \in \mathbb{R}$, i.e., for all $\sigma \in \mathbb{R}$, and hence, with $W:=\mathbb{C}$), $\zeta_{\eta}$ also satisfies the following hypothesis \textnormal{\textbf{L3}}, then $\eta$ is said to be \textnormal{strongly languid}---or, more specifically, $\kappa$-strongly languid or else, strongly languid with (strong languidity) constant (or exponent) $\kappa$:  
\\

\textnormal{\textbf{L3}}. There exists a sequence of screens $S_{m}:t \mapsto S_{m}(t)+it$, $t \in \mathbb{R}$, for all integers $m \geq 1$, with $\sup{S_{m}} \rightarrow -\infty$ as $m \rightarrow \infty$ and with uniform Lipchitz bound, $\sup_{m \geq 1} ||S_{m}||_{\text{Lip}} <\infty$, such that there exist constants $A,M>0$ satisfying the following estimate, valid for  all $t \in \mathbb{R}$ and all $m \geq 1$,
$$|\zeta_{\eta}(S_{m}(t)+it)| \leq M A^{|S_{m}(t)|}(|t|+1)^{\kappa}.$$

\end{dfn}

\indent Note that if $\eta$ is languid (respectively, strongly languid) for a given value of $\kappa$, then it is also languid (respectively, strongly languid) for any value of $\kappa' \geq \kappa$.  Also if $\eta$ is strongly languid, it is also languid for the same value of $\kappa$.

\paragraph{Strongly Languid Example: Self Similar Strings}
\label{subsubsubsection:2.2.3.1}
\noindent It is shown in Section 6.4, pages 194--195 of \cite{LapidusFGCD2}, that \textit{all} self-similar fractal strings (namely, all ordinary fractal strings whose standard geometric realization admits for boundary a self-similar subset of $\mathbb{R}$ satisfying the open set condition; see Section 2.1 of \cite{LapidusFGCD2} for a precise geometric description) are strongly languid for the value $\kappa=0$ of the strong languidity exponent.  Indeed, recall from Sections 2.1 and 2.2 of \cite{LapidusFGCD2} that, for an arbitrary self-similar string with scaling ratios $r_{1},...,r_{N}$ and gaps (or rather, gap scales) $g_{1},....,g_{K}$ such that $1>r_{1} \geq r_{2}.... \geq r_{N} >0$ and $1 > g_{1} \geq g_{2} \geq .... \geq g_{K} >0$, where $N,K$ are integers such that $N \geq 2$, $K \geq 1$, and the geometric zeta function,  $\zeta_{\eta_{,ss}}$ is given by
$$\zeta_{\eta_{,ss}}(s)= \frac{L^{s} \sum_{q=1}^{K} g_{q}^{s}}{1-\sum_{j=1}^{N} r_{j}^{s}},$$
where $L > 0$ is the total length of the fractal string.
\\
\indent It then follows (see Section 6.4, Equation (6.36), page 195 of \cite{LapidusFGCD2}) that there exists $c >0$ such that
$$\left| \zeta_{\eta,{ss}}(s) \right| \leq c \left( L g_{K} {r_{N}^{-1}} \right)^{-|\sigma|}, \textnormal{ as } \sigma =\Re(s) \rightarrow -\infty.$$
\noindent We deduce that $\zeta_{\eta,{ss}}$ satisfies the languidity condition, \textnormal{\textbf{L1}} with $W=\mathbb{C}$ and the strong languidity condition \textnormal{\textbf{L3}}, with $\kappa:=0$ and $A:=L^{-1} g_{K}^{-1} r_{N}$ (see Definition~\ref{dfn:2.8} above). To see why \textnormal{\textbf{L1}} holds with $\kappa=0$, one appeals to Theorem 3.6 of \cite{LapidusFGCD2},  according to which there exist,a screen $S$ to the left of the vertical line $\{ \Re(s)=D_{\eta,ss} \}$, where $D_{\eta,ss}$ is the dimension of the string, and a sequence $\{ T_{n} \}_{n \in \mathbb{Z}}$ satisfying the required properties in Definition~\ref{dfn:2.8} (in particular, $\lim_{n \rightarrow \pm \infty} T_{n} = \pm \infty$) and such that $\zeta_{\eta, ss}(s)$ is uniformly bounded on the horizontal lines $\{ \text{Im}(s)=T_{n} \}$, for all $n \in \mathbb{Z}$.  Furthermore, in order to see why the strongly languid condition \textnormal{\textbf{L3}} holds, it suffices, for each screen $S_{n}$, to choose the sequence of screens $\{ S_{m} \}_{m \geq 1}$ given by the vertical lines $\{ \Re(s)=-m \}$, for all integers $m \geq 1$.  We therefore conclude that any self-similar string $\eta_{ss}$ is strongly languid, with languidity constant $\kappa:=0$ and for the aforementioned sequence of screens.
\\
\indent Since the Cantor string $\eta_{CS}$ and the Fibonacci string $\eta_{\textnormal{Fib}}$ are self-similar strings (with scaling rations and gaps $N=2, K=1, L=1$, and, respectively, $r_{1}=r_{2}=\frac{1}{3}, g_{1}=\frac{1}{3}$ for $\eta_{CS}$, and $r_{1}=\frac{1}{2}, r_{2}=\frac{1}{4}$, $g_{1}=\frac{1}{4}$, for $\eta_{\text{Fib}}$ (see Subsections 2.3.1 and 2.3.2 of \cite{LapidusFGCD2}), they are examples of strongly languid (generalized) fractal strings.     

\subsubsection{Pointwise and Distributional Fractal Explicit Formulas}
\label{subsubsection:2.2.4}
\indent We can now discuss the pointwise and distributional fractal (or generalized) explicit formula(s) obtained in \cite[Chapter~5]{LapidusFGCD2} (and in \cite{LapidusFGNT, LapidusFGCD}).  We first state the \textit{pointwise fractal explicit formula with error term} (Theorem~\ref{eqn:2.1}) and then state its counterpart without error term (Theorem~\ref{thm:2.2}), also called an \textit{exact pointwise fractal explicit formula}.  \textit{We} then discuss the \textit{distributional fractal explicit formulas, with error term or without error term (i.e. exact)}; see Theorem~\ref{thm:2.3}, part (i) and part (ii), respectively. 
\\
\indent The fractal explicit formulas of \cite{LapidusFGNT, LapidusFGCD, LapidusFGCD2} are significant generalizations of the well-known explicit formulas encountered in number theory and used to obtain precise results expressing various arithmetic counting functions (such as the prime number counting function and its weighted variants) in terms of the zeros and the poles of the corresponding $L$-functions (or arithmetic zeta functions)---such as the classic Riemann zeta function in the first example of an explicit formula originally obtained by Riemann in 1858 in \cite{Riemann} and later established rigorously (and in a somewhat different and more easily justifiable form) by von Mangoldt in \cite{Mangoldt1, Mangoldt2}.  This led in the 1890s to the first rigorous proofs, independently, by Hadamard \cite{Hadamard1} and de la Vall\'ee Poussin \cite{Poussin1, Poussin2}, of the celebrated Prime Number Theorem, according to which (with $\Pi_{\mathcal{P}}(x):=\# \{p \in \mathcal{P}: p \leq x \}$ being the prime number counting function and $\mathcal{P}$ the set of prime numbers) $\Pi_{\mathcal{P}}(x) \sim \frac{x}{\log x}$, as $x \rightarrow +\infty$ (or, equivalently, $\Pi_{\mathcal{P}}(x) \sim \text{Li}(x)$, as $x \rightarrow \infty$, where $\text{Li}(x)=\int_{0}^{x} \frac{dt}{\log{t}}$ is the logarithmic integral; see also Subsection~\ref{subsubsection:2.2.5} below).  A more extensive discussion and a number of relevant references can be found in \cite{LapidusFGCD2}, Subsection 5.1.2 and the Notes to Chapter 5 (Section 5.6, pages 172--178, in \cite{LapidusFGCD2}), including \cite{Riemann, Mangoldt1, Mangoldt2,Cram, Gui1, Gui2,Del,Wei4, Wei5, Bar, Haran1, Den2, DenSchr, Jorlan1, JorLan2, Bu, RudSar}, along with the textbooks \cite{Dav,Edwards1974,In,Pat,Tit86}.
\\
\indent The explicit formulas from \cite{LapidusFGCD2} are also called fractal (or generalized) explicit formulas because they very significantly extend most of the known number-theoretic explicit formulas (without assuming, for example, that the corresponding fractal (or geometric identity) zeta function satisfies a functional equation or an Euler product) and since they are all expressed in terms of the complex fractal dimensions of the underlying (generalized) fractal string.  Recall from (\cite{LapidusFGNT,LapidusFGCD,LapidusFGCD2,LapidusFZF}, \cite{SLO}, and \cite{Lapidus26}) that a geometric (or number-theoretic, spectral or dynamical) object is said to be \textit{fractal} if it admits at least one \textit{nonreal} complex dimension.  Accordingly, all self-similar strings (including the Cantor string and Fibonacci string), most classic fractals (including the Sierpinski gasket and carpet), the Devil's staircase (as shown in \cite{LapidusFZF} and unlike according to Mandelbrot's definition of fractality \cite{Mandolbrot}), the Weierstrass curve, and the von Koch snowflake curve \cite{DavLap, DavLap24, DavLap25b, DavLap2} and \cite{LPe06}, as well as most number theories or arithmetic geometries, are ``fractal'', in the sense of the theory of complex dimenions (\cite{LapidusFGNT,LapidusFGCD,LapidusFGCD2,LapidusFZF,SLO,Lapidus26}).  
\\
\indent We can now state the pointwise fractal explicit formulas, first with error term (Theorem~\ref{thm:2.1}) and later (Theorem~\ref{thm:2.2}), without error term (or exact). 
\begin{thm}[Pointwise Fractal Explicit Formula With Error Term; \cite{LapidusFGCD2}, Theorem 5.10, page 153] 
Let $\eta$ be a generalized fractal string which is languid, and with real languidity  exponent $\kappa$; see the statement of \textnormal{\textbf{L1}} and \textnormal{\textbf{L2}} in Definition~\ref{dfn:2.8}. 
Furthermore, let k be a positive integer such that $k \geq \max \{ 1,\kappa+1 \}$. (Note that if $\kappa \leq 0$, then we can choose $k:=1$ and hence, obtain an explicit formula for $N_{\eta}^{[1]}=N_{\eta}$, the geometric counting function of $\eta$.) 
\\
\indent Then, \textnormal{the pointwise fractal explicit formula with error term and at level} $k$ \textnormal{for} $\eta$ is given by the following pointwise identity (with error term), valid for all $x>0$:
\smallskip

\label{thm:2.1}
\vspace{-3mm}
\begin{equation}
\label{eqn:2.1}
\begin{aligned}
N_{\eta}^{[k]}(x)=\sum_{\substack{\omega \in {\mathcal{D}}_{\eta} (W)}}  \mathrm{res} \left( \frac{x^{s+k-1} \zeta_{\eta}(s)}{(s)_{k}}; \omega \right)+ \notag \\
\frac{1}{(k-1)!} \sum_{\substack{j=0 \\ -j \in W \setminus D_{\eta}(W)}}^{k-1} {k-1 \choose j} (-1)^{j} & x^{k-1-j}  \zeta_{\eta}(-j) + R_{\eta}^{[k]}(x),
\end{aligned}
\tag{2.1}
\end{equation}
\noindent where $\text{res}(g;\omega)$ denotes the residue at $\omega$ of a meromorphic function $g$ (see Definition~\ref{dfn:2.12} below), $(s)_{k}$ is the Pochammer symbol introduced in Definition~\ref{dfn:2.11} below, and $\mathcal{D}_{\eta}(W)$ is the set of visible complex dimensions of $\eta$, relative to the window $W$ (associated with the screen $S$; see Definitions~\ref{dfn:2.6},~\ref{dfn:2.7}, and~\ref{dfn:2.8}, along with Subsection~\ref{subsubsection:2.1.3} above).

\indent Here, for every $x>0$, $R(x)=R_{\eta}^{[k]}(x)$ is the \textnormal{pointwise error term}, given by the following absolutely convergent (and hence also, convergent) contour integral:
\begin{equation}
\label{eqn:2.2}
R(x)=R_{\eta}^{[k]}(x) =\frac{1}{2 \pi i} \int_{S} x^{s+k-1} \zeta_{\eta}(s) \frac{ds}{(s)_{k}}.
\tag{2.2}
\end{equation}
\indent In addition, for all $x>0$, we have, the following \textnormal{pointwise error estimate}:
$$\left| R(x)=R_{\eta}^{[k]}(x) \right| \leq C(1+||S||_{\text{Lip}}) \frac{x^{k-1}}{k-\kappa-1} \mathrm{max} \{ x^{\sup(S)}, x^{\inf(S)} \}+C',$$
where $C$ is the positive constant occurring in \textnormal{\textbf{L1}} and \textnormal{\textbf{L2}}, and the positive constant $C'$ is a suitable positive constant.  Furthermore, the constants $C (1+||S||_{\textnormal{Lip}})$ and $C'$ depend only on $\eta$ and the screen $S$, but not on the level $k$. \\
\indent In particular, we have the following \textnormal{asymptotic pointwise error estimate}:
$$R(x)=R_{\eta}^{[k]}(x)=O(x^{\sup(S)+k-1}), \textnormal{ as } x \rightarrow +\infty.$$
\indent Moreover, if $S(t) < \sup(S)$, for all $t \in \mathbb{R}$ (i.e., if the screen $S$ lies strictly to the left of the vertical line $\mathfrak{R}(s)=\sup(S))$, then $R(x)$ is of order less than $x^{\sup(S)+k-1}$, as $x \rightarrow \infty$:
\[
R(x)=R_{\eta}^{[k]}(x)=o(x^{\sup(S)+k-1}) \textnormal{, as } x \rightarrow +\infty. 
\]

\end{thm}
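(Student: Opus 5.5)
The plan is to follow the standard Mellin-inversion and contour-shifting argument of Chapter 5 of \cite{LapidusFGCD2}.

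\textbf{Step 1: an integral representation for $N_{\eta}^{[k]}$.} Fix $c>\max\{D_{\eta},0\}$. For $y>0$ and $k\geq 1$, the elementary Mellin--Perron formula gives
\[
\frac{1}{2\pi i}\int_{c-i\infty}^{c+i\infty} \left(\frac{x}{y}\right)^{s+k-1}\frac{ds}{(s)_k}=\frac{(x-y)^{k-1}}{(k-1)!}\,\mathbf{1}_{\{y<x\}}.
\]
This identity is proved by closing the contour to the left for $y<x$ (picking up the residues at $s=0,-1,\dots,-(k-1)$, which sum to the binomial expansion) and to the right for $y>x$. At $y=x$ one obtains the half-value, which matches the convention for $\int_0^x$.

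Next, integrate against $\eta(dy)$ and interchange the two integrals using Fubini. This is justified by absolute convergence for $k\geq 2$, since $|(s)_k|^{-1}\sim|t|^{-k}$ and $\int y^{-c}\,d|\eta|<\infty$. For $k=1$ one instead uses truncated integrals $\int_{c-iT_n}^{c+iT_n}$ and a limiting argument. The result is
\[
N_{\eta}^{[k]}(x)=\lim_{n\to\infty}\frac{1}{2\pi i}\int_{c+iT_{-n}}^{c+iT_n} x^{s+k-1}\zeta_{\eta}(s)\frac{ds}{(s)_k}.
\]

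\textbf{Step 2: shifting the contour to the screen.} Consider the closed contour formed by the segment $[c+iT_{-n},c+iT_n]$, the horizontal segments at heights $T_{\pm n}$ joining $\Re(s)=c$ to the screen points $S(T_{\pm n})+iT_{\pm n}$, and the portion of the screen $S$ between heights $T_{-n}$ and $T_n$. By the residue theorem, the integral over the vertical segment equals the sum of three pieces:
\begin{itemize}
\item the residues of $x^{s+k-1}\zeta_{\eta}(s)/(s)_k$ at the visible poles $\omega\in\mathcal{D}_{\eta}(W)$ enclosed, with $|\Im\omega|<T_{\pm n}$;
\item the residues at the zeros $s=-j$, $0\leq j\leq k-1$, of the Pochhammer symbol lying in $W\setminus\mathcal{D}_{\eta}(W)$;
\item the screen integral, plus the two horizontal integrals.
\end{itemize}
A direct computation gives
\[
\mathrm{res}\big(1/(s)_k;-j\big)=\frac{(-1)^j}{j!\,(k-1-j)!},
\]
and this produces exactly the middle sum in (2.1).

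\textbf{Step 3: the horizontal integrals vanish.} By \textbf{L1}, the horizontal integrals are bounded by
\[
C\,(|T_n|+1)^{\kappa}\,|T_n|^{-k}\int_{S(T_n)}^{c}x^{\sigma+k-1}\,d\sigma .
\]
This tends to $0$ because $k>\kappa$.

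\textbf{Step 4: the screen integral and the error estimate.} Parametrize $s=S(t)+it$, so that $|ds|\leq(1+\|S\|_{\mathrm{Lip}})\,dt$. By \textbf{L2}, the integrand is bounded by
\[
C\,|t|^{\kappa-k}\,x^{k-1}\max\{x^{\sup S},x^{\inf S}\}\quad\text{for }|t|\geq 1.
\]
Since $k\geq\kappa+1$, this is integrable only when $k-\kappa>1$, with $\int_{|t|\ge1}|t|^{\kappa-k}\,dt=2/(k-\kappa-1)$. The borderline case $k=\kappa+1$ is the main obstacle. There the integral is not absolutely convergent, and I would handle it as in \cite{LapidusFGCD2}: either treat the integral as a symmetric limit along $T_{\pm n}$ (using $T_n/|T_{-n}|\to 1$), or integrate by parts once in $t$ to gain an extra factor $1/\log x$.

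The range $|t|\le 1$ contributes the bounded constant $C'$, after one checks that no zeros of $(s)_k$ lie on the screen. This yields $R(x)=O(x^{\sup S+k-1})$. For the $o$-statement, split the screen integral into $|t|\leq T$ and $|t|>T$. Make the tail small uniformly by taking $T$ large; on the compact part, $S(t)<\sup S$ gives an $o(x^{\sup S})$ bound by continuity and compactness. Finally, since each partial sum converges as $n\to\infty$, the residue sum over $\omega$ converges in the stated sense, which completes the proof.
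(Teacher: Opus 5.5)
Your route is the standard one. The paper gives no proof of this statement; it quotes it from Chapter~5 of \cite{LapidusFGCD2}, and the argument there is Mellin--Perron inversion on $\Re(s)=c$, a contour shift along the heights $T_{\pm n}$, \textbf{L1} on the horizontal segments and \textbf{L2} on the screen. For $k>\kappa+1$ your Steps 1--4 carry this out correctly.

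The genuine gap is the borderline case $k=\kappa+1$, which the statement as transcribed allows, and neither of your remedies works there.
\begin{itemize}
\item A limit along $T_{\pm n}$ gives no convergence in general. \textbf{L2} is only an upper bound, so on the screen the integrand is merely $O(|t|^{-1})$, with no mechanism for cancellation. Your Steps 2--3 (which need only $k>\kappa$) show only that the truncated residue sums and the truncated screen integrals converge or diverge together. In any case this could never produce the asserted \emph{absolute} convergence, or a bound containing $1/(k-\kappa-1)$.
\item Integrating by parts in $t$ would require control of $\zeta_\eta'$ along the screen and of $S''$. \textbf{L1}--\textbf{L2} give neither: the bound holds only on the curve, so Cauchy estimates are unavailable, and $S$ is merely Lipschitz.
\end{itemize}
More decisively, the conclusion actually fails at $k=\kappa+1$. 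Take the harmonic string $\eta=\sum_{n\ge1}\delta_n$, so $\zeta_\eta=\zeta$, with the screen $S\equiv-\tfrac12$. The convexity bounds give \textbf{L1} and \textbf{L2} with $\kappa=1$. The functional equation together with $|\zeta(\tfrac32-it)|\ge\zeta(3)/\zeta(\tfrac32)>0$ gives $|\zeta(-\tfrac12+it)|\asymp|t|$. So for $k=2$ the integrand in (2.2) has modulus $\asymp x^{1/2}|t|^{-1}$, and the integral is not absolutely convergent. The right move is therefore not to rescue the borderline but to assume $k>\kappa+1$, which the displayed estimate presupposes anyway. Under that hypothesis your proof is complete.

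Three smaller corrections.
\begin{itemize}
\item In Step 1 the kernel must be $x^{s+k-1}y^{-s}$, not $(x/y)^{s+k-1}$. The latter inverts to $y^{-(k-1)}(x-y)^{k-1}/(k-1)!$, and integrating it against $\eta$ would produce $\zeta_\eta(s+k-1)$ rather than $\zeta_\eta(s)$.
\item In Step 4 the piece $|t|\le1$ is not bounded uniformly in $x$. It is only $O\bigl(x^{k-1}\max\{x^{\sup(S)},x^{\inf(S)}\}\bigr)$, which suffices for the $O$- and $o$-estimates; your compact-range argument for the $o$-bound already treats it correctly. To get the stated form with an additive constant $C'$ you need one more step. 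For instance, Riemann--Lebesgue shows this piece is $o\bigl(x^{k-1}\max\{x^{\sup(S)},x^{\inf(S)}\}\bigr)$ as $x\to0^+$ and as $x\to+\infty$. It can then be absorbed into the main term outside a compact $x$-range, using the slack left by the $|t|\ge1$ part, and bounded by a constant inside that range.
\item That the screen avoids $0,-1,\dots,-(k-1)$ is a standing assumption on $S$, not something you can check.
\end{itemize}
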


\indent We next state the corresponding explicit formula \textit{without} error term.  Such an explicit formula is said to be \textit{exact}, and requires, of course, different and somewhat stronger hypotheses than its counterpart with error term (Theorem~\ref{thm:2.1} just above).  In particular, $\eta$ is assumed to be strongly languid rather than just languid.  On the other hand, the condition on the level $k$ is somewhat weaker than in Theorem~\ref{thm:2.1}.
\medskip
\begin{thm}[Pointwise Fractal Explicit Formula Without Error Term, i.e., Exact; see \cite{LapidusFGCD2}, Theorem 5.14, page 156]
\label{thm:2.2}
Let $\eta$ be a generalized fractal which is strongly languid with real strong languidity exponent $\kappa$ (in the statement of \textnormal{\textbf{L1}} and \textnormal{\textbf{L3}}, and hence also, of \textnormal{\textbf{L2}}, with $W:=\mathbb{C}$, in Definition~\ref{dfn:2.8}).  (Note that there is no restriction on $\kappa \in \mathbb{R}$, here.) Furthermore, let $k$ be a positive integer such that $k>\kappa$. 
\\
\indent Then, for all $x>A$ (with the positive constant $A$ as in the strong languidity condition \textnormal{\textbf{L3}}), \textnormal{the exact pointwise fractal explicit formula at level} $k$ \textnormal{for} $\eta$ takes the same form as its non-exact counterpart~(\ref{eqn:2.1}) in Theorem~\ref{thm:2.1}, except for the key fact that the error term vanishes identically.  In other words, Equation~(\ref{eqn:2.1}) holds pointwise for all $x>A$ and with $R_{\eta}(x)=0$, also for all $x > A$. 

\end{thm}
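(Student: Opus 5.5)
The plan is to obtain Theorem~\ref{thm:2.2} from Theorem~\ref{thm:2.1}, applied along the sequence of screens $S_m$ provided by \textbf{L3}, and to show that the error terms tend to zero as $m\to\infty$ for each fixed $x>A$.

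\emph{Applicability of Theorem~\ref{thm:2.1} for each $m$.} Fix $x>A$ and a level $k>\kappa$. For each $m\geq 1$, let $W_m$ be the window to the right of $S_m$. Then $\eta$ is languid with respect to $S_m$:
\begin{itemize}
\item \textbf{L1} holds on $W_m$, because it is assumed on all of $\mathbb{C}$.
\item \textbf{L2} holds on $S_m$ with a constant depending on $m$, since $|\zeta_\eta(S_m(t)+it)|\le MA^{|S_m(t)|}(|t|+1)^{\kappa}$ and $|S_m(t)|$ is bounded on each screen. Indeed, $\sup S_m<\infty$, and $\inf S_m>-\infty$ after shifting if necessary; this follows from the uniform Lipschitz bound only locally, so if needed we truncate, as in the source monograph.
\end{itemize}

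\emph{Handling the restriction on $k$.} Theorem~\ref{thm:2.1} requires $k\geq\kappa+1$, whereas here only $k>\kappa$ is assumed. So I would not invoke Theorem~\ref{thm:2.1} as a black box. Instead I would rerun its proof for each $m$:
\begin{enumerate}
\item Write $N_\eta^{[k]}(x)=\frac{1}{2\pi i}\int_{c-i\infty}^{c+i\infty}x^{s+k-1}\zeta_\eta(s)\,\frac{ds}{(s)_k}$ with $c>D_\eta$, understood as a limit over truncations at heights $T_n$, $T_{-n}$.
\item Close the contour to the left along the horizontal segments at heights $T_n$ and the screen $S_m$, and apply the residue theorem. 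The residues at poles of $\zeta_\eta$ in $W_m$ give the main sum. The residues at $s=-j$, coming from the zeros of $(s)_k$ for $j=0,\dots,k-1$, give the polynomial terms.
\item Show that the horizontal contributions vanish as $n\to\infty$ by \textbf{L1}, since $x^{\sigma}/|(s)_k|\lesssim |T_n|^{\kappa-k}\to0$ when $k>\kappa$.
\end{enumerate}
This yields formula~(\ref{eqn:2.1}) with $W=W_m$ and error term $R_m(x)=\frac{1}{2\pi i}\int_{S_m}x^{s+k-1}\zeta_\eta(s)\frac{ds}{(s)_k}$, interpreted as a limit along $T_{\pm n}$ when it is not absolutely convergent.

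\emph{Main step: $R_m(x)\to0$.} Using \textbf{L3} and $|ds|\le(1+\sup_m\|S_m\|_{\rm Lip})\,dt$, bound
\[
|R_m(x)|\le \frac{M(1+L)}{2\pi}\,x^{k-1}\int_{\mathbb{R}}\left(\frac{A}{x}\right)^{|S_m(t)|}\frac{(|t|+1)^{\kappa}}{|(S_m(t)+it)_k|}\,dt .
\]
Here $x^{S_m(t)}=x^{-|S_m(t)|}$ once $\sup S_m<0$. Since $x>A$ and $|S_m(t)|\ge|\sup S_m|\to\infty$, the factor $(A/x)^{|S_m(t)|}$ is at most $(A/x)^{|\sup S_m|}\to0$, uniformly in $t$.

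The obstacle is the integrability in $t$. When $\kappa<k\le\kappa+1$, the ratio $(|t|+1)^{\kappa}/|t|^{k}$ need not be integrable, so absolute convergence fails. Two fixes are available:
\begin{itemize}
\item Because $|S_m(t)|$ is large, $|(S_m(t)+it)_k|\ge c\,(|S_m(t)|+|t|)^{k}$. The decay of $(A/x)^{|S_m(t)|}$ then absorbs polynomial factors in $|S_m(t)|$, but not in $t$.
\item So I would first prove the result for a larger level $k'\ge\kappa+2$, where everything converges absolutely and dominated convergence gives $R_m\to0$. Then I would descend to level $k$ by differentiating the exact formula $k'-k$ times in $x$. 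This needs the differentiated residue series to converge, which is again controlled by the same truncated contour estimates along $T_n$. This descent is the delicate part, and I expect it to be the main difficulty.
\end{itemize}

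\emph{Passing to the limit in $m$.} Since $\sup S_m\to-\infty$, every pole of $\zeta_\eta$ (and every point $-j$) eventually lies in $W_m$. The partial sums over $\mathcal{D}_\eta(W_m)$ therefore converge, in the sense of the $T_n$-ordered summation, to the full sum over $\mathcal{D}_\eta=\mathcal{D}_\eta(\mathbb{C})$. Letting $m\to\infty$ gives (\ref{eqn:2.1}) with $W=\mathbb{C}$ and zero error term, for every $x>A$.
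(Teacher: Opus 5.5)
The paper does not prove this statement; it quotes it from \cite{LapidusFGCD2}, Theorem 5.14. Measured against what that statement requires, your proposal has a genuine gap, essentially at the point you flag yourself.

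\textbf{The gap: the order of limits.} You first let $n\to\infty$ at fixed $m$ to form the full screen integral $R_m(x)=\frac{1}{2\pi i}\int_{S_m}x^{s+k-1}\zeta_\eta(s)\,\frac{ds}{(s)_k}$, and only afterwards let $m\to\infty$. When $\kappa<k\le\kappa+1$ this integral need not converge absolutely. Already for the Cantor string at level $k=1$, with $S_m=\{\Re(s)=-m\}$, the integrand has size $\asymp x^{-m}/|t|$. Worse, at fixed $m$ your contour argument only shows that the \emph{sum} of the truncated residue sum over $\mathcal{D}_\eta(W_m)$ and the truncated screen integral has a limit as $n\to\infty$. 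Neither piece is known to converge by itself, so $R_m(x)$ ``as a limit along $T_{\pm n}$'' is not even well defined.

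\textbf{Why the descent does not help.} Your proposed descent from a level $k'\ge\kappa+2$ does not close the gap. Differentiating the level-$k'$ residue series termwise in $x$ does produce the level-$k$ series, since $\frac{d}{dx}\frac{x^{s+k}}{(s)_{k+1}}=\frac{x^{s+k-1}}{(s)_k}$. But termwise differentiation needs locally uniform convergence of the differentiated series, which is essentially the level-$k$ statement you are trying to prove. The step is therefore circular as it stands.

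\textbf{The final step is also incomplete.} Knowing that each pole eventually lies in $W_m$ does not by itself let you interchange $\lim_{m\to\infty}$ with the $T_n$-ordered infinite summation.

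\textbf{The missing idea: truncate and send $m\to\infty$ first.} Fix $n$ and take $m$ so large that $\sup S_m<-k$. Apply the residue theorem to the region bounded by $[c+iT_{-n},c+iT_n]$, the two horizontal segments at heights $T_{\pm n}$, and the piece of $S_m$ with $T_{-n}\le t\le T_n$.
\begin{itemize}
\item \emph{Truncated screen piece.} There $|(s)_k|\ge 1$, so \textbf{L3} and your own estimate bound this piece by $C_n\,x^{k-1}(A/x)^{|\sup S_m|}$. This tends to $0$ as $m\to\infty$ for every $x>A$, and no integrability in $t$ is needed because the $t$-range is compact.
\item \emph{Horizontal segments.} These are controlled by \textbf{L1}, which holds on the whole lines $\{\mathrm{Im}\, s=T_{\pm n}\}$ because $W=\mathbb{C}$. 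They are $O(|T_n|^{\kappa-k})$, uniformly in $m$ for instance when $x>1$, since then $x^{\sigma}$ is integrable on $(-\infty,c]$.
\item \emph{Letting $n\to\infty$.} Only now send $n\to\infty$. The horizontal pieces vanish precisely because $k>\kappa$. The truncated Perron integral tends to $N_\eta^{[k]}(x)$, which is where $T_n/|T_{-n}|\to1$ enters.
\end{itemize}
The sum over $\mathcal{D}_\eta(\mathbb{C})$ then comes out directly as the limit over $n$ of the residues in the strips $T_{-n}<\mathrm{Im}\,\omega<T_n$. Your discussion of \textbf{L2} along each $S_m$ and of $\inf S_m$ becomes unnecessary. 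This order of limits is exactly what allows the exact formula to require only $k>\kappa$, rather than the condition $k\ge\kappa+1$ of Theorem~\ref{thm:2.1}.
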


\indent We next present the distributional (fractal) explicit formulas that recasts (under different sets of hypotheses) the above theorems (Theorems~\ref{thm:2.1} and~\ref{thm:2.2}) in terms of the $k^{\text{th}}$ distributional anti-derivative of the distribution $\eta$ acting on suitable test functions $\phi$, where $k \in \mathbb{Z}$ is arbitrary.  Before discussing them, we need to introduce some necessary notation and definitions.  We first recall the definition of the Mellin transform of a smooth function $\phi$, denoted as $\widetilde{\phi}$, followed by the definition of the $k^{\text{th}}$ distributional anti-derivative of $\eta$, denoted by $P_{\eta}^{[k]}$, and then finally followed by the definition of $(s)_{k}$, the $k^{\text{th}}$ Pochammer symbol of the complex number $s$.    

\begin{dfn}[Mellin Transform of a Function]
\label{dfn:2.9}
Let $\phi$ be a smooth function (i.e. an infinitely differentiable function) on $(0,+\infty)$ with compact support---or, more generally, a decreasing smooth function on $(0,+\infty)$.  Then, the \textnormal{Mellin transform} of $\phi$ is defined, for all $s \in \mathbb{C}$, by 
$$\widetilde{\phi}(s)=\int_{0}^{+\infty} \phi(x) x^{s-1} dx=\int_{0}^{+\infty} \phi(x) x^{s} \frac{dx}{x},$$
\noindent where $\frac{dx}{x}$ is the natural Haar measure (i.e., the translation invariant measure) on the abelian multiplicative group $(0,+\infty)$.

\end{dfn}

As is well known, under the hypotheses of Definition~\ref{dfn:2.9}, $\widetilde{\phi}$ is an entire function---and hence, is well defined on all of $\mathbb{C}$, as well as still given by the above integral for all $s \in \mathbb{C}$; see also Theorem~\ref{thm1:appendix} in Appendix~\ref{Appendix A} below (applied to the domain $U:=\mathbb{C}$) for a more general result, where $\phi \in \mathbf{S}(0,+\infty)$, the space of rapidly decreasing smooth functions on $(0,+\infty)$.

\begin{dfn}[The $k^{\text{th}}$ Distributional Anti-Derivative $P_{\eta}^{[k]}$, for any $k \in \mathbb{Z}$; see pages 138 and 158 of \cite{LapidusFGCD2}]

Given any $k \in \mathbb{N}_{0}$, let $\eta$ be a generalized fractal string, we define $P_{\eta}^{[k]}$ to be the $k^{\text{th}}$ \textnormal{distributional anti-derivative} of $\eta$ given by an iterated integral applied to a smooth function $\phi$ in $\mathbf{S}(0,+\infty)$ as follows:
$$\langle P_{\eta}^{[k]}, \phi \rangle= \int_{0}^{+\infty} \int_{y}^{+\infty} \frac{(x-y)^{k-1}}{(k-1)!} \phi(x) dx \eta(dy).$$
\indent We also let $P_{\eta}^{[0]}=\eta$ (viewed as a Radon measure on $(0,+\infty)$), interpreted as a distribution.  Note that, in particular,  $P_{\eta}^{[1]}=N_{\eta}$, the geometric counting function of $\eta$, viewed as a regular distribution on $(0,+\infty)$.  Also observe that $P_{\eta}^{[0]}=\eta=\left( \frac{d}{dx} \right)N_{\eta}$, the distributional derivative of $P_{\eta}^{[1]}=N_{\eta}$.  
\\
\indent We extend the definition of ${P_{\eta}^{[k]}}$ to any $k \in \mathbb{Z}$, with $k<0$, by defining $P_{\eta}^{[k]}$ for such a value $k$ as being the $|k|^{\text{th}}$ distributional derivative of $\eta$.  
\\
\indent Accordingly, for any $k \in \mathbb{Z}$, $P_{\eta}^{[k]}$ is defined as a distribution acting on the given space of test functions $\phi$.  As usual, we denote by $\langle P_{\eta}^{[k]},\phi \rangle$ the action of the distribution $P_{\eta}^{[k]}$ on $\phi$.  Here and thereafter, $\langle \cdot, \cdot \rangle$ denotes the duality bracket between the space of test functions and its (topological) dual, the corresponding space of distributions.  By construction, it is bilinear (and continuous).  

\end{dfn}

\begin{dfn}[Complexified Pochammer Symbol, For any $k \in \mathbb{Z}$; see \cite{LapidusFGCD2}, page 158]
\label{dfn:2.11}
For any $k \in \mathbb{Z}$, we denote the $k$-Pochammer symbol of a complex number $s$ as $(s)_{k}$ and define it in terms of the gamma function $\Gamma$ as
$$(s)_{k}:=\frac{\Gamma(s+k)}{\Gamma(s)}, \hspace{5mm} \textnormal{ for all } s \in \mathbb{C}.$$
\end{dfn}

\noindent Note that for every $k \in \mathbb{N}$, we have that $(s)_{k}=s(s+1)....(s+k-1)$, for all $s \in \mathbb{C}$.  Also, $(s)_{0}=1$, for all $s \in \mathbb{C}$.  

\begin{dfn}[Residue of a Meromorphic Function]
\label{dfn:2.12}
Given a complex-valued meromorphic function 
on a domain $U$ of $\mathbb{C}$, $g=g(s)$, we denote by $\mathrm{res}(g(s),\omega)$ (or $\mathrm{res}(g;\omega)$, in short) the residue of $g$ at $\omega \in U$.  If $\omega \in U$ is not a pole of $g$, then we let $\rm{res}(g;\omega):=0$.
\end{dfn}
\vspace{2mm}
\indent Note (as in \cite{LapidusFGCD2}, Equation (5.44), page 159) that for any $k \in \mathbb{Z}$ and for all test functions $\phi$ belonging to any of the classes considered in this paper, we have the following identity:
$$\int_{0}^{+\infty} \phi(x) \text{res}(x^{s+k-1} g(s);\omega) dx=\text{res}(\tilde{\phi}(s+k)g(s);\omega),$$
where $\widetilde{\phi}$ is the Mellin transform of $\phi$, as given in Definition~\ref{dfn:2.9} above.  
\\
\indent The following definition was introduced independently (and in somewhat different forms) in \cite{LapidusFGNT, LapidusFGCD, LapidusFGCD2}, as well as earlier, in \cite{Jaffard}.

\begin{dfn}[\cite{LapidusFGCD2}, Definition 5.29, page 168]
\label{dfn:2.13}
\noindent Given $a>0$ and a test function $\phi$ on $(0,+\infty)$, we let $\phi_{a}$ (the scaled version of $\phi$) be defined by
$$\phi_{a}(x)=\frac{1}{a} \phi \left( \frac{x}{a} \right), \textnormal{ for all } x>0.$$  Note that we assume here that the class of test functions involved is invariant under the above scaling, which will be the case for all spaces of test functions considered in this paper, namely, the spaces $\mathbf{S}(0,+\infty)$ and $\mathbf{D}(B,+\infty)$, where $B \geq 0$. 
\\
\indent Then, a distribution $R$ on $(0,+\infty)$ is said to be \textnormal{of asymptotic order at most} $x^{\alpha}$ (respectively, less than $x^{\alpha}$), as $x \rightarrow +\infty$, and we write $R(x)=O(x^{\alpha})$, as $x \rightarrow +\infty$ (respectively, $R(x)=o(x^{\alpha}))$, as $x \rightarrow +\infty$) if, when applied to an arbitrary test function $\phi$, we have the following pointwise estimate:
$$\langle R, \phi_{a} \rangle=O(a^{\alpha})\hspace{5mm} (\textnormal{respectively}, \langle R, \phi_{a} \rangle=o(a^{\alpha})), \textnormal{ as } a \rightarrow +\infty.$$ 

\end{dfn}

\indent We can now state the distributional counterparts of Theorems~\ref{thm:2.1} and~\ref{thm:2.2}, which we present as a single theorem consisting of two different parts.  Note that in each part of Theorem~\ref{thm:2.3} below (that is for the non-exact or for the exact distributional fractal explicit formula, respectively), one uses a different class of test functions; namely, $\mathbf{S}(0,+\infty)$ or $\mathbf{D}(A,+\infty)$, in part (i) or part (ii) of Theorem~\ref{thm:2.3}, respectively. In Section~\ref{section:4}, we will recast these distributional fractal explicit formulas in the form of fractal Taylor-like formulas.  
\\
\indent First, we introduce the space of test functions, $\mathbf{D}(0,+\infty)$, that is, the space of infinitely differentiable functions with compact support on $(0,+\infty)$, and the corresponding space of \textit{Schwartz distributions}, $\mathbf{D}'(0,+\infty)$, the toplogical dual of $\mathbf{D}(0,+\infty)$. We use an entirely similar definition for $\mathbf{D}(B,+\infty)$ and its dual $\mathbf{D}'(B,+\infty)$, for any $B \geq 0$.  Second, we introduce the Schwartz class, $\mathbf{S}(0,+\infty)$, of infinitely differentiable functions which are rapidly decreasing on the open interval $(0,+\infty)$.  That is, $\phi \in \mathbf{S}(0,+\infty)$ if and only if $\phi \in C^{\infty}(0,+\infty)$ and, for every $p \in \mathbb{N}$ and $q \in \mathbb{Z}$, $\phi^{(p)}(t) t^{q} \rightarrow 0$, as $t \rightarrow 0^{+}$ and as $t \rightarrow +\infty$.  Here, $\phi^{(p)}$ denotes the $p^{th}$ (pointwise) derivative of $\phi$.  Then, $\mathbf{S}'=\mathbf{S}'(0,+\infty)$, the topological dual of $\mathbf{S}=\mathbf{S}(0,+\infty)$, is the space of \textit{tempered distributions} on $(0,+\infty)$; see \textnormal{\cite{Schwartz}} and, e.g., \textnormal{\cite{Foll}}.  
\\
\indent We mention that besides the classic book \cite{Schwartz}, accessible graduate level introductions to the theory of distributions can be found, for example, in \cite{Foll}, \cite{Rudin}, and \cite{Strichartz}.

\begin{thm}[Distributional Fractal Explicit Formulas: Non-Exact and Exact; see \cite{LapidusFGCD2}, Theorems 5.18 and 5.22, pages 159 and 162, along with Theorem 5.30, page 168 (for the distributional error estimates in part (i)]
\mbox{}\\[-0.5em] 
\par \smallskip
\label{thm:2.3}
\label{thm:2.3,(i)}
\indent \textnormal{(}i\textnormal{)} \textnormal{({Distributional Fractal Explicit Formula With Error Term, at Level $k$})} Let $\eta$ be a languid generalized fractal string, with real languidity exponent $\kappa$; i.e., $\eta$ \smallskip satisfies the languidity hypotheses \textnormal{\textbf{L1}} and \textnormal{\textbf{L2}} in Definition~\ref{dfn:2.8} above, for some arbitrary languidity exponent $\kappa \in \mathbb{R}$ and relative to a screen $S$ (and the associated window $W$).  (Note that $\kappa \in \mathbb{R}$ can be arbitrary here; the same comment applies to Equation~(\ref{eqn:2.3'}) in part \textnormal{(}ii\textnormal{)} just below.)
\\
\indent Then, for every $k \in \mathbb{Z}$, the tempered distribution $P_{\eta}^{[k]}$ (i.e., $P_{\eta}^{[k]} \in \mathbf{S}'(0,+\infty)$), is given, for all test functions $\phi \in \mathbf{S}(0,+\infty)$, by the following \textnormal{distributional fractal explicit formula with error term and at level} $k$ \textnormal{for} $\eta$: 
\begin{equation}
\label{eqn:2.3}
\begin{split}
\langle P_{\eta}^{[k]}, \phi \rangle=\sum_{\omega \in \mathcal{D}_{\eta}(W)} \mathrm{res} \left( \frac{\zeta_{\eta}(s)\widetilde{\phi}(s+k)}{(s)_{k}}; \omega \right)+ \\
\frac{1}{(k-1)!} \sum_{\substack{j=0 \\ -j \in W \setminus \mathcal{D}_{\eta}(W)}  }^{k-1} {k-1 \choose j} (-1)^{j} & \zeta_{\eta}(-j) \widetilde{\phi}(k-j) + \langle R_{\eta}^{[k]},\phi \rangle,
\end{split}
\tag{2.3}
\end{equation}
\noindent where $\widetilde{\phi}$ is the Mellin transform of $\phi$ (as in Definition~\ref{dfn:2.9} above); see Remark~\ref{rmk:2.1} below (Note that on the right-hand side of Equation~(\ref{eqn:2.3}), the second sum is equal to zero when $k \leq 0$; and entirely similarly for Equation~(\ref{eqn:2.3'}) of part \textnormal{(}ii\textnormal{)}.)  Here, $\mathcal{D}_{\eta}(W)$ is the set of visible complex dimensions of $\eta$ (relative to the window $W$), as in Subsection~\ref{subsubsection:2.1.3} above, $(s)_{k}$ is the Pochammer symbol (as in Definition~\ref{dfn:2.11} above), and $\mathrm{res}(g;\omega)$ stands for the residue of the meromorphic function $g$ at the pole $\omega$ (as in Definition~\ref{dfn:2.12} above). 
\\
\indent In Equation~(\ref{eqn:2.3}) above, the distribution $R=R_{\eta}^{[k]}$ in $\mathbf{S}'(0,+\infty)$ is the \textnormal{distributional error term}, given, for all $\phi \in \mathbf{S}(0,+\infty)$, by the following contour integral along the screen $S$:
\begin{sloppypar}
$$\langle R, \phi \rangle= \langle R_{\eta}^{[k]}, \phi \rangle=\frac{1}{2 \pi i} \int_{S} \zeta_{\eta}(s) \widetilde{\phi}(s+k) \frac{ds}{(s)_{k}};$$
\end{sloppypar}
\smallskip
\noindent see also Remark~\ref{rmk:2.1} below.  
\\
\indent Furthermore, still for any $k \in \mathbb{Z}$, $R=R_{\eta}^{[k]}$ satisfies the following distributional error estimate (in the sense of Definition~\ref{dfn:2.13} above):
$$R(x)=R_{\eta}^{[k]}(x)=O(x^{\text{sup}(S)+k-1}), \textnormal{ as } x \rightarrow +\infty.$$
\indent Moreover, if $S(t)<\text{sup}(S)$, for all $t \in \mathbb{R}$ (i.e., if the screen $S$ lies strictly to the left of the vertical line $\{ \Re(s)=\text{sup}(S) \}$, then (also in the sense of Definition~\ref{dfn:2.13}) $R(x)$ is of order less than $x^{\text{sup}(S)+k-1}$, as $x \rightarrow +\infty$:
$$R(x)=R_{\eta}^{[k]}(x)=o(x^{\text{sup}(S)+k-1}), \textnormal{ as } x \rightarrow +\infty.$$
\smallskip
\indent \textnormal{(}ii\textnormal{)}
\label{thm:2.3,(ii)}
\textnormal{(Distributional Fractal Explicit Formula Without Error Term, at Level $k$)}.
\noindent Assume, that $\eta$ is strongly languid, in the sense of Definition~\ref{dfn:2.8} above, for some (strong) languidity constant $\kappa \in \mathbb{R}$; i.e., condition \textnormal{\textbf{L1}} holds, with $W:=\mathbb{C}$, and also, condition \textnormal{\textbf{L3}} holds, for the strong languidity constant $\kappa$. (Note that $\kappa \in \mathbb{R}$ can be arbitrary, here.)  
\\
\indent Furthermore, let $\mathbf{D}=\mathbf{D}(A,+\infty)$ denote the space of infinitely differentiable functions with compact support on $(A,+\infty)$, where $A>0$ is the constant occurring in the strong languidity condition \textnormal{\textbf{L3}} of Definition~\ref{dfn:2.8}; so that $P_{\eta}^{[k]} \in \mathbf{D}^{\prime}(A,+\infty)$ (the topological dual of $\mathbf{D}=\mathbf{D}(A,+\infty)$ or space of Schwartz distributions on $(A,+\infty)$).  
\\
\indent Then, for any $k \in \mathbb{Z}$, we have no error term in Equation~(\ref{eqn:2.3}) above (i.e., $R_{\eta}^{[k]}=0$ in $\mathbf{D}'(A,+\infty)$) and hence, the corresponding distributional explicit formula called \textnormal{the distributional fractal explicit formula without error term and at level} $k$ \textnormal{for} $\eta$---is \textnormal{exact}; that is, for any test function $\phi \in \mathbf{D}(A,+\infty)$,
\begin{equation*}
\label{eqn:2.3'}
\tag{$2.3^{\prime}$}
\begin{aligned}
\langle P_{\eta}^{[k]}, \phi \rangle &=\sum_{\omega \in \mathcal{D}_{\eta}(\mathbb{C})} \mathrm{res} \left( \frac{\zeta_{\eta}(s) \widetilde{\phi}(s+k)}{(s)_{k}}; \omega \right)+ \\
& \frac{1}{(k-1)!} \sum_{\substack{j=0 \\ -j \in W \setminus \mathcal{D}_{\eta}(\mathbb{C})}}^{k-1} {k-1 \choose j} (-1)^{j} \zeta_{\eta}(-j) \widetilde{\phi}(k-j).
\end{aligned}
\end{equation*}

\noindent see also Remark~\ref{rmk:2.1} just below. 

\end{thm}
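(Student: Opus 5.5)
The plan is to derive the distributional formula by applying the pointwise explicit formula, Theorem~\ref{thm:2.1} (respectively Theorem~\ref{thm:2.2}), at a sufficiently high level, and then differentiating distributionally to come back down to an arbitrary level $k\in\mathbb{Z}$. Equivalently, and more directly, one can work from the Mellin inversion. Fix $\phi\in\mathbf{S}(0,+\infty)$. By Fubini and the definition of $P_\eta^{[k]}$, for $k\ge 1$ we have $\langle P_\eta^{[k]},\phi\rangle=\int_0^\infty N_\eta^{[k]}(x)\phi(x)\,dx$. For $k$ large, $N_\eta^{[k]}(x)=\frac{1}{2\pi i}\int_{c-i\infty}^{c+i\infty}x^{s+k-1}\zeta_\eta(s)\,\frac{ds}{(s)_k}$ with $c>D_\eta$. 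Integrating against $\phi$ and exchanging the integrals gives
\[
\langle P_\eta^{[k]},\phi\rangle=\frac{1}{2\pi i}\int_{c-i\infty}^{c+i\infty}\zeta_\eta(s)\widetilde{\phi}(s+k)\,\frac{ds}{(s)_k},
\]
using the identity $\int_0^\infty\phi(x)x^{s+k-1}dx=\widetilde\phi(s+k)$. The key point is that, unlike in the pointwise case, this representation holds for every $k\in\mathbb{Z}$. For $k\le 0$ it is obtained by repeated integration by parts: the action of the $|k|$th derivative transfers to $\phi$, and $\widetilde{\phi'}(s)=-(s-1)\widetilde\phi(s-1)$ reproduces exactly the Pochhammer factor. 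The Schwartz decay of $\phi$ at $0$ and $\infty$ kills the boundary terms.

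Next, I would shift the contour from the vertical line $\Re s=c$ to the screen $S$. I would truncate at heights $T_{-n},T_n$ and apply the residue theorem on the region bounded by the line, the screen and the two horizontal segments. This picks up $\mathrm{res}\bigl(\zeta_\eta(s)\widetilde\phi(s+k)/(s)_k;\omega\bigr)$ at each $\omega\in\mathcal{D}_\eta(W)$, together with the residues at the poles $s=-j$, $0\le j\le k-1$, of $1/(s)_k$ that are not complex dimensions. The latter give the second sum in (\ref{eqn:2.3}), since $\mathrm{res}(1/(s)_k;-j)=(-1)^j/(j!(k-1-j)!)$. The crucial estimate is that $\widetilde\phi(s+k)$ decays faster than any power of $|\mathrm{Im}\, s|$, uniformly for $\Re s$ in compact sets. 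This follows by integrating by parts $m$ times in the Mellin integral, which gives $\widetilde\phi(s)=(-1)^m\widetilde{\phi^{(m)}}(s+m)/(s)_m$. Combined with \textbf{L1} and \textbf{L2}, this rapid decay makes the horizontal segments tend to $0$ and the screen integral absolutely convergent, for any $\kappa$. This explains why no constraint on $k$ relative to $\kappa$ is needed, in contrast to Theorem~\ref{thm:2.1}. The remainder is precisely $\langle R_\eta^{[k]},\phi\rangle$.

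For the error estimate, I would substitute $\phi_a$, noting $\widetilde{\phi_a}(s)=a^{s-1}\widetilde\phi(s)$. Then $|\langle R,\phi_a\rangle|\le a^{\sup(S)+k-1}\frac{1}{2\pi}\int_S|\zeta_\eta(s)\widetilde\phi(s+k)/(s)_k|\,|ds|$, which gives $O$. The $o$ version comes from dominated convergence, since $a^{\Re s-\sup S}\to0$ pointwise on $S$ when $S(t)<\sup S$.

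For part (ii), I would replace the screen by the screens $S_m$ of \textbf{L3}, with $\phi\in\mathbf{D}(A,+\infty)$, say with support in $[b,B]$ where $b>A$. Then $|\widetilde\phi(s+k)|\le C_N b^{\Re s+k-1}(1+|t|)^{-N}$ for $\Re s\to-\infty$. The screen integral is therefore bounded by $M\sup_t (A/b)^{|S_m(t)|}$ times a convergent integral, and this tends to $0$ as $\sup S_m\to-\infty$. The residues then exhaust $\mathcal D_\eta(\mathbb{C})$, in the sense of distributional convergence of the (ordered, symmetric) partial sums.

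The main obstacle I expect is justifying the uniform rapid decay of $\widetilde\phi$ along Lipschitz screens and horizontal lines, uniformly in $\Re s$ over unbounded ranges in part (ii). This is where the exponential factor $b^{\Re s}$ against $A^{|\Re s|}$ must be tracked carefully.
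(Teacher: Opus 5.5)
The paper gives no proof of Theorem~\ref{thm:2.3}: it imports the result from \cite{LapidusFGCD2} (Theorems 5.18, 5.22 and 5.30), so there is no in-paper argument to compare yours against. Your outline is a correct proof strategy. It is the distributional transcription of the contour-shifting proof behind Theorem~\ref{thm:2.1}. You write $\langle P_\eta^{[k]},\phi\rangle$ as a vertical-line integral, descend to arbitrary $k\in\mathbb{Z}$ on that integral via $\widetilde{\phi'}(s)=-(s-1)\widetilde{\phi}(s-1)$, and only then shift to the screen, with $\widetilde{\phi}(s+k)$ playing the role of $x^{s+k-1}$.

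The other standard route is the one you mention first, and the paper records the identity it needs, $\int_0^{+\infty}\phi(x)\,\mathrm{res}(x^{s+k-1}g(s);\omega)\,dx=\mathrm{res}(\widetilde{\phi}(s+k)g(s);\omega)$. It integrates the pointwise formula at one level $K\geq\max\{1,\kappa+1\}$ against $\phi$ and then descends on the finished formula by distributional differentiation. One checks $\frac{d}{dx}\frac{x^{s+k-1}}{(s)_k}=\frac{x^{s+k-2}}{(s)_{k-1}}$, $\frac{d}{dx}R_\eta^{[k]}=R_\eta^{[k-1]}$, and compatibility of the $-j$ terms; the $j=k-1$ term disappears because $\widetilde{\phi'}(1)=\int_0^{+\infty}\phi'(x)\,dx=0$. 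That route recycles Theorem~\ref{thm:2.1} and needs only continuity of $d/dx$ on distributions. However, it must justify integrating the residue sum (a limit of truncations) and the pointwise error term against $\phi$ term by term. Your route avoids this and makes transparent why no relation between $k$ and $\kappa$ is needed: the rapid decay of $\widetilde{\phi}$ on vertical strips replaces the decay of $1/(s)_k$. It also produces $R_\eta^{[k]}$ directly as the screen integral, to which your scaling argument with $\widetilde{\phi_a}(s)=a^{s-1}\widetilde{\phi}(s)$ and dominated convergence applies.

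A few points to tighten.
\begin{itemize}
\item For $k\geq 1$ the kernel $1/(s)_k$ forces $c>\max\{0,D_\eta\}$, not merely $c>D_\eta$, since $D_\eta$ may be negative.
\item The uniform decay of $\widetilde{\phi}$ for $\phi\in\mathbf{S}(0,+\infty)$ is available only on vertical strips, because $\widetilde{\phi}(\sigma+it)$ can grow factorially as $\sigma\to-\infty$. You are therefore using $\inf S>-\infty$, and in (ii) $\inf S_m>-\infty$ for each $m$.
\item The temperedness of $P_\eta^{[k]}$ and $R_\eta^{[k]}$ asserted in the statement follows from your bounds. You should say that $\widetilde{\phi^{(m)}}(s+m)$ is controlled, uniformly on strips, by finitely many seminorms of $\phi$.
\item In (ii), your order of limits is first $n\to\infty$ with $S_m$ fixed, then $m\to\infty$. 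This proves $(2.3')$ with the sum over $\mathcal{D}_\eta(\mathbb{C})$ read as the limit in $m$ of the window sums over $\mathcal{D}_\eta(W_m)$, each itself a limit in $n$. It does not give convergence of the horizontal truncations $\sum_{T_{-n}<\mathrm{Im}\,\omega<T_n}$ over all of $\mathcal{D}_\eta(\mathbb{C})$, as you assert. For that you must let $m\to\infty$ at fixed $n$, which means integrating $\zeta_\eta(s)\widetilde{\phi}(s+k)/(s)_k$ along the whole half-lines $\{\sigma+iT_{\pm n}:\sigma\leq c\}$. There \textbf{L1} gives only a polynomial bound, while $|\widetilde{\phi}(\sigma+k+iT_n)|$ behaves like $b^{\sigma}$. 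Your estimates handle this only when $\operatorname{supp}\phi\subset(\max\{1,A\},+\infty)$.
\end{itemize}
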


\begin{remark}
\label{rmk:2.1}
\textit{In either part \textnormal{(}i\textnormal{)} or part \textnormal{(}ii\textnormal{)} of Theorem~\ref{thm:2.3}, Equation~(\ref{eqn:2.3}) or Equation~(\ref{eqn:2.3'}), respectively, can be more simply rewritten as the following formal equality between distributions (in $\mathbf{S}'(0,+\infty)$, in part \textnormal{(}i\textnormal{)}, and in $\mathbf{D}'(A,+\infty)$, in part\textnormal{(}ii\textnormal{)}):}

$$\hspace{-33mm} P_{\eta}^{[k]}(x)= \sum_{\omega \in \mathcal{D}_{\eta}(W)} \mathrm{res} \left( \frac{x^{s+k-1} \zeta_{\eta}(s)}{(s)_{k}} ; \omega \right)+$$
$$\hspace{10mm} \frac{1}{(k-1)!} \sum_{\substack{j=0 \\ -j \in W \setminus \mathcal{D}_{\eta}(W)}}^{k-1}  {k-1 \choose j} (-1)^{j} x^{k-1-j} \zeta_{\eta}(-j) + {R_{\eta}^{[k]}},$$
\textit{where the distributional error term ${R_{\eta}^{[k]}}$ belongs to $\mathbf{S}'(0,+\infty)$, in part \textnormal{(}i\textnormal{)}, and is the zero distribution (in $\mathbf{D}^{\prime}(A,\infty)$) in part \textnormal{(}ii\textnormal{)}, as shown in Theorem 5.22 of \textnormal{\cite{LapidusFGCD2}}. (Also, in part \textnormal{(}ii\textnormal{)}, we have $W:=\mathbb{C}$ in the above formula.) Moreover, note that, in part \textnormal{(}i\textnormal{)}, the distributional error term is estimated distributionally in exactly the same manner (and under the same hypotheses on the screen $S$) as at the end of Theorem~\ref{thm:2.1}, but with the pointwise error estimate replaced with a distributional error estimate (in the sense of Definition~\ref{dfn:2.13}).}  (Note that for $k \leq 0$, the second sum is equal to zero in the above displayed equation.)

\end{remark}

\vspace{2mm}

\subsubsection{The Riemann--von Mangoldt Number-Theoretic Explicit Formulas}
\label{subsubsection:2.2.5}
\indent In this subsection, we discuss the Riemann--von Mangoldt explicit formula (due to von Mangoldt in the mid-1890s \cite{Mangoldt1,Mangoldt2}), which provides a rigorous version of Riemann's beautiful original explicit formula obtained in 1858 \cite{Riemann}.  As is shown in \cite{LapidusFGNT,LapidusFGCD,LapidusFGCD2}, it (along with other weighted versions of it) can also be deduced from the exact pointwise and distributional explicit formulas of \textit{ibid} discussed in Subsection~\ref{subsubsection:2.2.4} just above, as we next explain.
\\
\indent Consider the \textit{prime string} $\mathscr{B}$, viewed as a generalized fractal string and introduced in \cite{LapidusFGCD2}, Subsection 4.1.1, Equation (4.12), page 123:
$$\mathscr{B}=\sum_{m \geq 1,p} (\log{p}) \delta_{p^{m}},$$
where $p$ runs through all prime numbers and $m$ through all positive integers, each counted with multiplicity one.  Then, it can be shown (see pages 123--124 of \cite{LapidusFGCD2}) that the geometric zeta function of $\mathscr{B}$ is meromorphic in all of $\mathbb{C}$ and is given by,
$$\zeta_{\mathscr{B}}(s)=-\frac{\zeta'(s)}{\zeta(s)}, \textnormal{ for all } s \in \mathbb{C},$$
where $\zeta=\zeta(s)$ denotes the classic Riemann zeta function (see, e.g. \cite{Edwards1974}, \cite{Tit86}).
Hence, the poles of $\zeta_{\mathscr{B}}$ precisely coincide with the nontrivial zeros and the trivial zeros of $\zeta$ as well as with the pole at $1$ of $\zeta$, each counted with multiplicity $1$.  Indeed, for $\Re(s)>1$, it follows from the Euler product formula for $\zeta(s)$ that, as before,
$$-\frac{\zeta'(s)}{\zeta(s)}=\sum_{m>1,p} (\log p) p^{-ms}=\zeta_{\mathscr{B}}(s)$$
where, as before, $p$ runs through all prime numbers and $m$ through all positive integers, and one then applies the principle of analytic (i.e., here, meromorphic) continuation to deduce the above statement concerning $\zeta_{\mathscr{B}}$.
\\
\indent Next, applying the exact distributional fractal explicit formula above (Theorem~\ref{thm:2.3}, part (ii), in Subsection~\ref{subsubsection:2.2.4}) to this setting (i.e., to the generalized fractal string $\mathscr{B}$), at level $k=0$, we obtain that the following distributional identity holds in the space of Schwartz distributions, $\mathbf{D}'(1,+\infty)$:
$$\mathscr{B}=1-\sum_{\rho} x^{\rho-1} - \sum_{n=1}^{\infty} x^{-1-2n} \text{, for } x>1,$$
where $\rho$ runs through the sequence of critical (i.e., nontrivial) zeros of $\zeta$.  One can also deduce the Prime Number Theorem from this formula (Theorem~\ref{thm:2.2} above); see the corresponding discussion in \cite{LapidusFGCD2}.  
\\
\indent On the other hand, applying this same exact distributional fractal explicit formula (Theorem~\ref{thm:2.3}) at level $k=1$, we obtain that the geometric counting function $\Psi$ of $\mathscr{B}$ counts the prime powers $p^{m}$ with nonintegral multiplicities $\log{p}$ and is given, distributionally and for all $x>1$ (i.e., as a distribution identity in $\mathbf{D}'(1,+\infty)$), by
\begin{align*}
\Psi(x) &:= N_{\mathscr{B}}(x)=\sum_{m \geq 1, p^{m} \leq x} \log p
\\
&\hspace{16mm}=x-\sum_{\rho} \frac{x^{\rho}}{\rho}-\log(2 \pi)-\frac{1}{2} \log(1-x^{-2}),
\end{align*}
where the sum in the second equality is taken over all nontrivial (or critical) zeros $\rho$ of $\zeta$ and the last term corresponds to the trivial zeros of $\zeta$, while $x$, the leading term (as $x \rightarrow +\infty$), corresponds to the pole of $\zeta=\zeta(s)$ at $s=1$.  Furthermore, the (countably infinite) sum $\sum_{\rho} \frac{x^{\rho}}{\rho}$ is only conditionally convergent and can be evaluated as the following (distributional) limit in $\mathbf{D}'(1,+\infty)$:
$$\sum_{\rho} \frac{x^{\rho}}{\rho}=\lim_{T \rightarrow +\infty} \sum_{\rho:|Im(\rho)|<T} \frac{x^{\rho}}{\rho},$$
where the nontrivial zeros $\rho$ are written in complex conjugate pairs and in increasing order of the absolute values of their imaginary parts.
\\
\indent As is well known, an elementary but cumbersome computation shows that (as $x \rightarrow +\infty$)
$$\Psi(x) \sim x \iff \Pi_\mathcal{P}(x) \sim \frac{x}{\log{x}} \hspace{10mm} \left( \textnormal{i.e., } \Pi_{\mathscr{P}}(x)=\frac{x}{\log{x}}(1+o(1)) \right),$$
where 
\[ \Pi_{\mathcal{P}}(x):=\sum_{p \leq x} 1=\#\{p \in \mathcal{P}: p \leq x \} \] 
is the \textit{prime number counting function}, also called the \textit{prime-counting function} in the literature.  As usual, when $p=x$, we count $p$ with weight $\frac{1}{2}$ instead of $1$. (The same equivalence as in the above displayed equation also holds distributionally, in $\mathbf{D}'(1,+\infty)$.) Hence, the Prime Number Theorem follows from the above explicit formula for $\Psi(x)$ (interpreted pointwise and originally due to von Mangoldt in the mid-1890s in \cite{Mangoldt1} and \cite{Mangoldt2}) combined with Hadamard's factorization theorem for entire functions, along with \cite{Hadamard1} and \cite{Hadamard2} (see also, e.g., \cite{Edwards1974}, \cite{In}, \cite{Tit86}) according to which $\zeta=\zeta(s)$ does not have any zero on the vertical line $\{ \Re(s)=1 \}$.  Recall that the celebrated (pointwise) Prime Number Theorem was first established in the mid-1890s, independently by Hadamard  \cite{Hadamard2} and de la Vall\'ee Poussin \cite{Poussin1,Poussin2}.  Note that in the present case, as in Subsection 5.5.1 of \cite{LapidusFGCD2}, we obtain a distributional version of the Prime Number Theorem, as a consequence of the distributional explicit formula for $\Psi$ above.
\\
\indent In order to see the connection with Riemann's original explicit formula for the prime number counting function, $\Pi_{\mathcal{P}}(x)$, one considers the following generalized fractal string:
$$\hspace{10mm} \mathcal{Q}=\sum_{m \geq 1, p^{m} \leq x} \frac{1}{m} \delta_{p^{m}}, $$
with geometric counting function given by
$$f(x):=N_{\mathcal{Q}}(x)=\sum_{m \geq 1, p^{m} \leq x} \frac{1}{m},$$
which counts the prime powers $p^{m}$ with weights $\frac{1}{m}$, for each integer $m \geq 1$.  Also, $\zeta_{\mathcal{Q}}$, the geometric zeta function of $\mathcal{Q}$ is given, for all $s \in \mathbb{C}$, by $\zeta_{\mathcal{Q}}(s)=\log(\zeta(s))$.  Note that $\zeta_{\mathcal{Q}}$ is no longer meromorphic in all of $\mathbb{C}$ but has instead a logarithmic singularity at each zero of $\zeta$ (as well as at the pole of $\zeta$ at $s=1$). 
\\
\indent It then follows from (a suitable extension of) the exact pointwise fractal formula (Theorem~\ref{thm:2.2} in Subsection~\ref{subsubsection:2.2.4} above) applied to $\mathcal{Q}$ at level $k=1$ that, for all $x>1$,
$$f(x)=\mathrm{Li}(x)-\sum_{\rho} 
\mathrm{Li}(x^{\rho})+\int_{x}^{+\infty} \frac{1}{t^{2}-1} \frac{dt}{t \log t}-\log2, \textnormal{ as } x \rightarrow +\infty,$$
where the sum is taken over all nontrivial (or critical) zeros $\rho$ of the Riemann zeta function, taken in order of increasing absolute value of their imaginary parts (and in complex conjugate pairs), and $\mathrm{Li}=\mathrm{Li}(x)$ is the \textit{logarithmic integral}, defined for all $x>1$, by the Cauchy principal value integral,
$$\mathrm{Li}(x):=\int_{0}^{x} \frac{dt}{\log{t}}:=\lim_{\varepsilon \rightarrow 0^{+}} \left( \int_{0}^{1-\varepsilon} + \int_{1+\varepsilon}^{x} \right) \frac{dt}{\log{t}}.$$
\\
\indent An important consequence of the pointwise version of this result is the well-known Prime Number Theorem in the following pointwise form: 
$$\Pi_{\mathcal{P}}(x) \sim f(x) \sim \textnormal{Li}(x) \sim \frac{x}{\log{x}}, \textnormal{ as } x \rightarrow +\infty$$
\indent We point out that it is shown in Subsection 5.5.1 of \cite{LapidusFGCD2} that a distributional version of the Prime Number Theorem, $\Pi_{\mathscr{P}}(x) \sim \textnormal{Li}(x)$ as $x \rightarrow +\infty$, follows from an application of the distributional fractal explicit formula with error term from \cite{LapidusFGCD2} (see Theorem~\ref{thm:2.3}, part (i) in Subsection~\ref{subsubsection:2.2.4} above) to the generalized fractal string
$$\eta(dx):=\mathcal{Q}(dx)-\frac{dx}{\log{x}}, \hspace{3mm} x>1.$$
\indent As is observed in Subsection 5.5.1 of \cite{LapidusFGCD2}, by using a suitable zero-free region for $\zeta=\zeta(s)$, one can also obtain the following distributional version of the Prime Number Theorem with Error Term:
$$\Pi_{\mathscr{P}}(x)=\textnormal{Li}(x)+O(x e^{-c \sqrt{\log x}}), \hspace{3mm} \textnormal{ as } x \rightarrow +\infty,$$
for some suitable real constant $c>1$.
\\
\indent Finally, it is not difficult to see that 
$$f(x)=\sum_{m \geq 1} \frac{1}{m} \Pi_{\mathcal{P}}(x^{\frac{1}{m}}),$$
and hence, that (upon an application of the M\"obius inversion formula), we obtain that---with $\mu =\mu(m)$ denoting the M\"obius arithmetic function, given by $\mu(m):=0$ if $m \in \mathbb{N} \setminus \{ 1 \}$ is not square-free, $\mu(m):=(-1)^{k}$, if $m \in \mathbb{N} \setminus \{ 1 \}$ and $m=p_{1}...p_{k}$ is the product of $k$ distinct prime numbers, and $\mu(1):=1$---
$$\Pi_{\mathcal{P}}(x)=\sum_{m=1}^{\infty} \frac{\mu(m)}{m} f(x^{\frac{1}{m}}),$$
which---once the above exact explicit formula for $f=f(y)$ has been substituted in each corresponding term of the above sum---yields Riemann's original explicit formula obtained non-rigorously in \cite{Riemann}. (Note that it follows from the above expression for $\Pi_\mathcal{P}(x)$ that $\Pi_{\mathcal{P}}(x) \sim f(x)$, as $x \rightarrow +\infty$, as was used above and as can also be checked directly, via the above explicit formula for $f(x)$.)  In closing, we point out that (by contrast with $\frac{\log x}{x}$) the leading term, $\textnormal{Li}(x)$, is the correct one to be used for obtaining the Prime Number Theorem with Error Term (as in, \cite{Poussin1} and, e.g., in \cite{In} or \cite{Tit86}).

\section{Fractional Distributional Derivatives}
\label{section:3}
\noindent In this section, we discuss the fractional derivatives of distributions and their properties in some detail.  This is in the spirit of Laurent Schwartz who introduced this notion in his book \cite{Schwartz}.  These results will provide the main ingredients for recasting the distributional fractal explicit formula in terms of fractional derivatives, thereby yielding (depending on the hypotheses) a distributional fractal (or fractional) Taylor's formula with error term or else, a fractal (or fractional) Taylor series, with orders taken to be the negatives of the complex dimensions of the underlying generalized fractal string.  This reformulation will be provided in the next section (i.e., Section~\ref{section:4}) especially in Subsection~\ref{subsection:4.2}. The notion of fractional differentiation plays an important role in many areas of mathematics and mathematical physics, with applications in physics, engineering, and biology.       
\\
\indent We stress that the results concerning the complex order fractional distributional derivatives presented in this section are due to Laurent Schwartz, \cite{Schwartz}.  We begin by recalling the ordinary definition for integer-order distributional derivatives in order to motivate the definition for fractional distributional derivatives (also found in \cite{Schwartz}).   
\subsection{Integer Case}
\label{subsection:3.1}
\noindent Recall that, given any $n \in \mathbb{N}_{0}$, the usual definition of the $n^{\text{th}}$ order derivative, $D^{n}T$, of a distribution $T$ is the following: when applied to a test function $\phi$ (see, e.g., \cite{Foll} or \cite{Rudin}),   
$$\langle D^{n} T, \phi \rangle =(-1)^{n} \langle T, D^{n} \phi \rangle.$$
Of course, the motivation for this definition of derivative comes from repeated integration by parts $n$ times over.
\\
\indent This definition works well for integer-order derivatives of distributions but it clearly fails to work for non-integer complex-order derivatives since integration by parts is no longer valid in this context.  In his book, \textit{Les Distributions}, \cite{Schwartz}, Schwartz introduced the notion of a distributional fractional derivative, thereby showing that it is always possible to take the complex-order derivative of a (suitable) distribution on $(0,+\infty)$; see \cite{Schwartz}, page 174.  Given any $\alpha \in \mathbb{C}$, Schwartz defines the $\alpha^{\mathrm{th}}$ order (or $\alpha$-order) fractional derivative, $D^{\alpha}T$, of a distribution $T$ in terms of the \textit{pseudofunction} $Y_{-\alpha}$, convolved with the distribution $T$, in the distributional-sense (see Definitions~\ref{dfn:3.2} and ~\ref{dfn:3.3} in Subsection~\ref{subsection:3.4} below).
\\
\indent Before defining the fractional derivative, we will first introduce the notion of the finite part of a distribution (see Subsection~\ref{subsection:3.2}), as well as the family of pseudofunctions of the form $Y_{\alpha}$, with $\alpha \in \mathbb{C}$ arbitrary, each of which is written in terms of the finite part of a distribution (see Subsection~\ref{subsection:3.3}).
\subsection{Hadamard Finite Part}
\label{subsubsection:3.2.2}
\label{subsection:3.2}
The finite part of a distribution is the part which is left after subtracting out the divergent part; see also pages 38--43 of \cite{Schwartz}.  These are distributions themselves and are known as \textit{pseudofunctions}, denoted with the symbol f.p. in front (which, of course, comes from the notion of Hadamard's finite part).  This is easily studied in the context when our distribution can be written in terms of integrals; indeed, in this case, we subtract out the divergent part of the integral.  The two examples below illustrate this procedure.  We consider here general test function spaces, for example,  $\mathbf{S}(0,+\infty)$ or $\mathbf{D}(B,+\infty)$, with $B \geq 0$, which are those used in this paper.

\begin{exam}
\label{exam:3.1}
Consider the \textnormal{finite part} of the following distribution, expanding out our test function $\phi$ into its Taylor series centered around $x=0$:
\begin{align*}
\mathrm{f.p.} \left\langle \frac{1}{x^{3}},\phi \right\rangle &=\mathrm{f.p.} \int_{0}^{+\infty} \frac{\phi(x)}{x^{3}} dx \\
&=\mathrm{f.p.} \int_{0}^{+\infty} \left( \frac{\phi(0)}{x^{3}}+\frac{\phi'(0)}{x^{2}} + \frac{\phi''(0)}{2x}+\frac{\phi'''(0)}{6}+...\right) dx.
\end{align*}
To obtain the finite part, we note that the anti-derivatives of the first three terms are divergent; so, we can subtract out these terms and collapse the Taylor series of $\phi$ back to $\phi(x)$ to obtain the following definition:
$$\textnormal{f.p.} \Bigl \langle \frac{1}{x^{3}}, \phi \Bigr \rangle:=\lim_{\varepsilon \rightarrow 0^{+}} \left[ \left( \int_{\varepsilon}^{+\infty} \frac{\phi(x)}{x^{3}} dx \right)- \frac{\phi(0)}{2 \varepsilon^{2}}-\frac{\phi'(0)}{\varepsilon}+\frac{\phi''(0) \log{\varepsilon}}{2} \right]. $$
\end{exam}
\indent In Example~\ref{exam:3.2} just below, we consider the general case when our distribution is $\langle x^{s-1},\phi \rangle $, where $s$ is a complex number.  This is also done in Chapter 2, page 42 of \cite{Schwartz}.  

\begin{exam}
\label{exam:3.2}
\noindent Let $s \in \mathbb{C}$ be a complex number for which $Re(s)<k+1$ and $k \in \mathbb{N}_{0}$. Then, expanding $\phi$ into its Taylor series centered around $x=0$ (in a way entirely similar to the previous example), we obtain that

$$\mathrm{f.p.} \langle (x^{s-1})_{x >0}, \phi \rangle=\mathrm{f.p.} \int_{0}^{+\infty} x^{s-1} \phi(x) dx =$$
$$\lim_{\varepsilon \rightarrow 0^{+}} \left[ \left( \int_{\varepsilon}^{+\infty} x^{s-1} \phi(x)dx \right) + \phi(0) \frac{{\varepsilon}^{s}}{s} +{\phi}'(0) \frac{{\varepsilon}^{s+1}}{s+1} +...+ \frac{{\phi}^{(k)}(0)}{k!} \frac{\varepsilon^{s+k+1}}{s+k+1} \right].$$
\noindent When $s$ is a negative integer or zero (i.e., when $s \in -\mathbb{N}_{0}$), we must replace the $\frac{\varepsilon^{0}}{0}$ term with $\log{\varepsilon}$.
\end{exam}    

\indent It is clear that, when $s:=-2$, the more general case of Example~\ref{exam:3.2} yields Example~\ref{exam:3.1}. 
\\
\indent Note that the finite part $\text{f.p.} \langle (x^{s-1})_{x>0},\phi \rangle$ is analytic (i.e., here, holomorphic) in the variable $s \in \mathbb{C}$ except when $s$ is a negative integer or zero (see, e.g., pages 42-43 of \cite{Schwartz} for a corresponding discussion).  For any $\alpha \in \mathbb{C}$, the pseudofunction $Y_{\alpha}$ is defined in terms of the distribution, $\text{f.p.} \langle (x^{\alpha-1})_{x>0},\phi \rangle$, arising in Example~\ref{exam:3.2}, as we will see in Subsection~\ref{subsection:3.3} just below.         

\subsection{Family of Pseudofunctions: $\{ Y_{\alpha}:\alpha \in \mathbb{C} \}$}
\label{subsection:3.3}
\noindent We now define the \textit{family of pseudofunctions}, $\{ Y_{\alpha}:\alpha \in \mathbb{C} \}$ consisting of the distributions that arise in Schwartz's definition for the distributional fractional derivative (see Definition~\ref{dfn:3.3} of Subsection~\ref{subsection:3.4} below, and also \cite{Schwartz}).  

\begin{dfn}
[The Pseudofunctions $Y_{\alpha}$, where $\alpha \in \mathbb{C}$; see page 43 of \cite{Schwartz}]
\label{dfn:3.1}
\noindent Let $\alpha \in \mathbb{C}$ be an arbitrary complex number. Then, we can  define as follows the one-parameter family $\{ Y_{\alpha} \}_{\alpha \in \mathbb{C}}$ of pseudofunctions:
$$Y_{\alpha}=
\begin{cases}
\frac{1}{\Gamma(\alpha)} \mathrm{f.p.} (x^{\alpha-1})_{x>0}, \hspace{12mm} \textnormal{if } \alpha \notin \{0,-1,-2,...\}, \\
\delta^{(-\alpha)}, \hspace{31mm} \textnormal{if } \alpha \in \{0,-1,-2,-3...\}.
\end{cases}$$

\noindent Here, for $\alpha \in \mathbb{C} \setminus (-\mathbb{N}_{0})$, the fnite part $\mathrm{f.p.}(x^{\alpha-1})_{x>0}$ is defined in the same way as in Example~\ref{exam:3.2}, via Hadamard's finite part by subtracting the corresponding divergent part of the integral. Furthermore, for $\alpha \in (-\mathbb{N}_{0})$, $\delta^{(-\alpha)}$ is the ordinary distributional derivative of the Dirac distribution $\delta$, of intger order $|\alpha|=-\alpha \in \mathbb{N}_{0}$. 
\end{dfn}
\vspace{1mm}
\indent Note that $Y_{\alpha}$ is an entire (distribution-valued) function in the complex parameter $\alpha$.  Moreover, $\{ Y_{\alpha} \}_{\alpha \in \mathbb{C}}$ also possesses the following \textit{semigroup property} (or rather, group property) under convolution:
$$Y_{\alpha}*Y_{\beta}=Y_{\alpha+\beta},$$
for all $\alpha, \beta \in \mathbb{C}$.  See Definition~\ref{dfn:3.2} below for the definition of the convolution of two distributions.  For a discussion of this semigroup property, as well as for the aforementioned holomorphicity on $\mathbb{C}$, we refer the interested reader to Chapter 6 of \cite{Schwartz}, especially, on page 174.

\subsection{Schwartz's Definition of the Fractional Distributional Derivatives of Complex Orders}

\label{subsection:3.4}

We are now ready to present the Schwartz definition for a fractional distributional derivative; see also page 174 of \cite{Schwartz}. 
We first begin with the definition for the convolution of two distributions $T$ and $S$, also found in \cite{Schwartz} or, e.g., \cite{Rudin}.    

\begin{dfn}[Convolution of Distributions; see Chapter 6 of \cite{Schwartz}, page 148]
\label{dfn:3.2}
Let $T$ and $S$ be two distributions with at least one having compact support, and let $\phi \in \mathbf{S}((0,+\infty))$, the space of rapidly decreasing infinitely differentiable functions on $(0,+\infty)$.  Then,
$$\langle T * S, \phi \rangle=\langle S * T, \phi \rangle=\langle S_{\xi}, \langle T_{\psi}, \phi(\xi+\psi) \rangle \rangle.$$

\noindent Here, $T_{\psi}$ means the distribution $T$ applied to a test function in the variable $\psi$, and similarly for the distribution $S$ in the variable $\xi$.  

\end{dfn}

\indent We are now equipped to present the definition of the fractional derivative of a distribution.

\begin{dfn}[Fractional Distributional Derivative of Complex Order; see Chapter 6 of \cite{Schwartz}, page 174]
\label{dfn:3.3}
\noindent Let $T$ be a distribution over the positive real line, $(0,+\infty)$, and with compact support, and let $\alpha \in \mathbb{C}$. Then, we can define $D^{\alpha}T$, the $\alpha$-\textnormal{order derivative}---or \textnormal{fractional distributional derivative of order} $\alpha$---of the distribution $T$ as follows: for any test function $\phi \in \mathbf{S}(0,+\infty)$, 
\begin{align*} \langle D^{\alpha}T, \phi \rangle=\langle Y_{-\alpha} * T, \phi \rangle. \end{align*} 
where $Y_{-\alpha}$
is the pseudofunction, as defined as in Definition~\ref{dfn:3.1} of Subsection~\ref{subsection:3.3} above; note that $Y_{-\alpha}$ is used here instead of $Y_{\alpha}$.  In other words, $D^{\alpha}T=Y_{-\alpha} *T$.  
\end{dfn}
\medskip
\indent We next apply Definition~\ref{dfn:3.3} to the Dirac distribution $\delta$.  Let $T=\delta$, the delta distribution.  Then,
$$\langle D^{\alpha} \delta, \phi \rangle=\langle Y_{-\alpha} * \delta, \phi \rangle=\langle Y_{-\alpha}, \phi \rangle.$$
\noindent Here, we have used the fact that $\delta$ is a distribution of compact support (namely, $\{ 0 \}$) and does not modify the distribution on which it acts by convolution.
\\
\indent Hence, $D^{\alpha}\delta=Y_{-\alpha}, \textnormal{ for all } \alpha \in \mathbb{C}$. If $\alpha$ is a negative integer or zero, then the above equality reduces to,
$$\langle \delta^{(\alpha)},\phi \rangle=(-1)^{\alpha} D^{\alpha} \phi(0)=\langle Y_{-\alpha} * \delta,\phi \rangle=\langle D^{\alpha} \delta,\phi \rangle,$$
coinciding with the usual definition of distributional derivative of integer order.  \\
\indent In light of the semigroup property stated at the end of Subsection~\ref{subsection:3.3}, and of Definition~\ref{dfn:3.3} just above, we obtain the following natural extension of the well-known properties of integer-order derivatives to (complex) fractional derivatives.  For any suitable distribution $T$ (e.g., a tempered distribution, or else a Schwartz distribution) on $(0,+\infty)$ we have that
$$D^{\alpha+\beta}T=D^{\alpha}(D^{\beta}T)=D^{\beta}(D^{\alpha}T),$$
for all $\alpha, \beta \in \mathbb{C}$.  Note that letting $T:=\delta$ in the above formula as well as replacing $\alpha$ and $\beta$ with their opposites, we recover the semigroup property stated at the end of Subsection~\ref{subsection:3.3}.

\section{Distributional Fractal Explicit Formulas as Fractional (or Fractal) Taylor Series}
\label{section:4}
\noindent In this section, we recast the (exact or non-exact) distributional fractal explicit formula (from \cite{LapidusFGNT,LapidusFGCD,LapidusFGCD2}) stated in Theorem~\ref{thm:2.3} of Subsection~\ref{subsubsection:2.2.4} above as a distributional fractional Taylor's formula (with or without error term, respectively) of complex orders corresponding to the underlying (visible) complex dimensions of the given generalized fractal string $\eta$.  (In the exact case, this yields a fractal Taylor series for the given generalized fractal string, with complex exponents the underlying complex dimensions.) We do this by recasting each term occurring in the first summand of this formula as a sum indexed by $j$ running from $1$ to $m$ (where $m$ is the multiplicity of the given pole or complex dimension $\omega$) of the $(-\omega)^{\mathrm{th}}$ order distributional fractional derivative of $\delta$ (denoted by $Y_{\omega}$), applied to test functions of the form $\phi(x)\ln^{j-1}(x)$, where $m$ is the multiplicity of the pole $\omega$ of $\zeta_{\eta}$, multiplied by coefficients which depend on $\zeta_{\eta}$ and $j$. 
\\
\indent In doing so, we introduce a new notation for the term $\langle Y_{\omega}, \phi(x) \ln^{j-1}(x) \rangle$, to be denoted as $\langle Y_{\omega;m,j},\phi \rangle$, thereby yielding a new type of (tempered or Schwartz) distribution, $Y_{\omega;m,j}$, with $1 \leq j \leq m$, associated with a pole (or, equivalently, a complex dimension) $\omega$ of arbitrary multiplicity $m \geq 1$ (i.e., $m \in \mathbb{N}$). Note that the fact that this expression yields a well-defined distribution, $Y_{\omega;m,j}$, with $1 \leq j \leq m$, follows from the results obtained in Appendix~\ref{Appendix B} below; see, especially, Theorem~\ref{thm1:appendix2} and Remark~\ref{rem:B1}. 
\\
\indent Observe that when $m=1$ (the case of a simple pole or complex dimension $\omega$), we have that $m=j=1$ and $Y_{\omega;1,1}=Y_{\omega}$, the pseudofunction whose definition was recalled in Definition~\ref{dfn:3.1} of Subsection~\ref{subsection:3.3}. Using the above described method, we obtain a true distributional fractal Taylor series interpretation of the distributional explicit formula without error term (or exact), as well as a true distributional fractal Taylor's formula with error term interpretation of the distributional fractal explicit formula with error term. In this section, we also discuss the reasons why this is an accurate interpretation.  
\\
\indent In Subsection~\ref{subsection:4.1} just below, we begin by rewriting the Mellin transform $\widetilde{\phi}$ of $\phi$, arising in the distributional fractal explicit formula in Theorem~\ref{thm:2.3}, as $\langle Y_{\omega},\phi(x) \ln^{j-1}(x) \rangle$.  Recall that we use, indifferently, the notation $\ln{x}$ or $\log{x}$ for the natural logarithm of $x>0$.               

\subsection{Interpretation of the Derivatives of the Mellin Transform as Fractional Distributional Derivatives of $\delta$}
\label{subsection:4.1}
\indent Let $\eta$ be a generalized fractal string and $\mathcal{D}_{\eta}(W)$ be the set of (visible) complex dimensions of $\eta$ associated with a window $W$ (or with the corresponding screen $S$).  Suppose further that none of these complex dimensions lies in the set of negative integers including zero, denoted $-\mathbb{N}_{0}$.  In Theorem~\ref{thm:4.1} below, we use the Schwartz definition of the fractional distributional derivative, given by Definition~\ref{dfn:3.3} in Section~\ref{subsection:3.4} above, to recast the $n^{\mathrm{th}}$-order derivative of the Mellin transform of $\phi$ evaluated at $\omega \in \mathcal{D}_{\eta}$, denoted by $\widetilde{\phi}^{(n)}(\omega)$, as a fractional derivative of complex order $-\omega$ (where $n \in \mathbb{N}_{0}$ and $\omega \in \mathcal{D}_{\eta}$) of the $\delta$-distribution, denoted as $Y_{\omega}$, applied to the real-valued (or, more generally, complex-valued) function, $\ln^{n}(x)\phi(x)$.  This is possible since $\widetilde{\phi}$ is holomorphic on all $\mathbb{C}$ (under the assumption that $\phi \in \mathbf{S}(0,+\infty)$, see Theorem~\ref{thm2:appendix} in Appendix~\ref{Appendix A}), and furthermore, we can differentiate under the integral sign (as is done at the end of Theorem~\ref{thm1:appendix}). 

\begin{thm}
\label{thm:4.1}
\noindent Let $\phi$ belong to the space of rapidly decreasing smooth functions on $(0,+\infty)$; that is, $\phi \in \mathbf{S}(0,+\infty)$.  Furthermore, suppose $\omega$ is not a negative integer or zero; that is, $-\omega \notin \mathbb{N}_{0}$. 
\\
\indent Then, for every $n \in \mathbb{N}_{0}$, we have that 
\[ \widetilde{\phi}^{(n)}(\omega)=\Gamma(\omega) \langle Y_{\omega}, \phi(x) \ln^{n}(x) \rangle, 
\]
where $\widetilde{\phi}^{(n)}$ denotes the $n^{\mathrm{th}}$ derivative of the Mellin transform $\widetilde{\phi}$ of $\phi$ arising in Definition~\ref{dfn:2.9}.   

\end{thm}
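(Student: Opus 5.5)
The identity follows by differentiating the Mellin transform under the integral sign and then matching the resulting integral with the pseudofunction $Y_{\omega}$ of Definition~\ref{dfn:3.1}.

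\textbf{Step 1 (differentiate under the integral).} By Theorem~\ref{thm2:appendix} and Theorem~\ref{thm1:appendix} of Appendix~\ref{Appendix A}, $\widetilde{\phi}$ is entire for $\phi \in \mathbf{S}(0,+\infty)$, and we may differentiate under the integral sign. Since $\partial_s^n x^{s-1} = x^{s-1}\ln^{n}(x)$, we obtain
\[
\widetilde{\phi}^{(n)}(s)=\int_{0}^{+\infty} \phi(x)\ln^{n}(x)\, x^{s-1}\,dx \quad \text{for every } s\in\mathbb{C},
\]
and the integral converges absolutely. Indeed, for any $q \in \mathbb{Z}$, the function $\phi$ is $o(x^{q})$ both as $x\to 0^{+}$ and as $x\to+\infty$. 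Since $|\ln^n x|$ grows more slowly than any power at both ends, $x^{\Re(s)-1}\ln^{n}(x)\phi(x)$ is integrable near $0$ and near $+\infty$ for every $s$. In other words, $\widetilde{\phi}^{(n)}(s)=\widetilde{\psi_n}(s)$, where $\psi_n(x):=\phi(x)\ln^{n}(x)$. By Appendix~\ref{Appendix B}, $\psi_n$ again lies in $\mathbf{S}(0,+\infty)$.

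\textbf{Step 2 (identify with $Y_{\omega}$).} Because $-\omega\notin\mathbb{N}_0$, Definition~\ref{dfn:3.1} gives
\[
\Gamma(\omega)\langle Y_{\omega},\psi_n\rangle=\mathrm{f.p.}\int_0^{+\infty}x^{\omega-1}\psi_n(x)\,dx .
\]
It remains to show that this finite part equals the ordinary integral $\widetilde{\psi_n}(\omega)$. I would apply the $\varepsilon$-regularization of Example~\ref{exam:3.2}: the finite part is the limit of $\int_\varepsilon^\infty x^{\omega-1}\psi_n\,dx$ plus correction terms of the form $\psi_n^{(j)}(0)\,\varepsilon^{\omega+j}/(j!\,(\omega+j))$.

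For a function in $\mathbf{S}(0,+\infty)$, $\psi_n$ and all its derivatives vanish to infinite order at $0^{+}$. Hence every Taylor coefficient $\psi_n^{(j)}(0)$, interpreted as a limit at $0^{+}$, is zero, and no correction terms appear. The truncated integral $\int_\varepsilon^\infty$ then converges, as $\varepsilon\to0^{+}$, to the absolutely convergent integral from Step 1. This yields $\Gamma(\omega)\langle Y_\omega,\psi_n\rangle=\widetilde{\psi_n}(\omega)=\widetilde{\phi}^{(n)}(\omega)$.

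A more robust alternative for Step 2 is analytic continuation. For $\Re(\omega)>0$, the finite part coincides with the convergent integral by definition. Both $\alpha\mapsto \Gamma(\alpha)\langle Y_\alpha,\psi_n\rangle$ (holomorphic off $-\mathbb{N}_0$, as noted after Example~\ref{exam:3.2}) and $\widetilde{\psi_n}$ are holomorphic on the connected set $\mathbb{C}\setminus(-\mathbb{N}_0)$. They therefore agree there by the identity theorem.

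\textbf{Main obstacle.} The delicate point is justifying that the Hadamard finite part introduces no correction terms for test functions in $\mathbf{S}(0,+\infty)$. Example~\ref{exam:3.2} is written with a Taylor expansion at $0$, so one must verify that the infinite-order vanishing of $\psi_n$ at $0^+$ (from Appendix~\ref{Appendix B}) makes all the subtracted terms zero. I would handle this with the analytic-continuation argument, which avoids checking the Taylor coefficients termwise. The only input it needs is holomorphy in $\alpha$ of the finite-part pairing, which the paper cites from Schwartz.
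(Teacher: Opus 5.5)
Your proposal is correct and follows the paper's proof: differentiate the Mellin transform under the integral sign using Theorems~\ref{thm1:appendix} and~\ref{thm2:appendix}, then identify $\int_0^{+\infty}\phi(x)\ln^n(x)\,x^{\omega-1}\,dx$ with $\Gamma(\omega)\langle Y_\omega,\phi(x)\ln^n(x)\rangle$ via Definition~\ref{dfn:3.1}. The only difference is that you explicitly justify why the Hadamard finite part reduces to the ordinary integral for functions in $\mathbf{S}(0,+\infty)$, either because all correction terms vanish or by analytic continuation from $\Re(\omega)>0$; the paper simply asserts this step follows immediately from the definition.
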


\indent Before providing the proof of Theorem~\ref{thm:4.1}, we note that if $\phi(x) \in \mathbf{S}(0,+\infty)$, then $\phi(x) \ln^{n}(x) \in \mathbf{S}(0,+\infty)$, for all $n \in \mathbb{N}_{0}$.  Thus, the application of the distribution $Y_{\omega}$ on the function $\phi(x) \ln^{n}(x)$ is well defined; see Corollary~\ref{appendix2:cor2} of Appendix~\ref{Appendix B} for the proof of this fact.  We can now give the proof of Theorem~\ref{thm:4.1}.    

\begin{proof}[Proof (of Theorem 4.1)]
\noindent Since $\phi \in \mathbf{S}(0,+\infty)$, we directly apply Theorem~\ref{thm2:appendix} from Appendix~\ref{Appendix A} to conclude that $\widetilde{\phi}$ is holomorphic on $\mathbb{C}$.  By the holomorphicity of $\widetilde{\phi}$, we can take the $n^{\mathrm{th}}$ order derivative of $\widetilde{\phi}$ and interchange the integral and derivative (as at the end of Theorem~\ref{thm1:appendix}):  
\begin{align*}
\left( \widetilde{\phi} \right)^{(n)}(\omega) &=\frac{d^{n}}{ds^{n}}  \left( \int_{0}^{+\infty} \phi(x) x^{s-1}  dx \right) \Bigg|_{s=\omega}=\int_{0}^{+\infty} \phi(x) \frac{d^{n}}{ds^{n}} (x^{s-1}) \Bigg|_{s=\omega}  dx
\\
&=\int_{0}^{+\infty} \phi(x) \ln^{n}(x) x^{\omega-1} dx=\Gamma(\omega) 
 \langle Y_{\omega}, \phi(x) \ln^{n}(x) \rangle.
 \end{align*}
\noindent The last equality follows immediately from the definition of $Y_{\omega}$ given in Definition~\ref{dfn:3.1} when $\omega \notin \{ 0, -1,-2,... \}
\\
=-\mathbb{N}_{0}$.
\end{proof}

\indent The last expression in the conclusion of Theorem~\ref{thm:4.1} can be rewritten in terms of the distributional derivative of order $-\omega$ of the distribution $\delta$ applied to the test function $\phi(x) \ln^{n}(x)$, expressed as 
$$\Gamma(\omega) \langle Y_{\omega} * \delta, \phi(x) \ln^{n}(x) \rangle=\Gamma(\omega) \langle D^{-\omega}\delta, \phi(x) \ln^{n}(x) \rangle.$$ 
(Again, we note here that $\omega \notin -\mathbb{N}_{0}$, as stated in the hypotheses of Theorem~\ref{thm:4.1}.)  We interpret this term as being analogous to powers of $x$ taking place in the usual Taylor expansion for a smooth function.  The input variable in this particular situation is our test function.  One can thus make the natural analogy between $D^{-\omega} \delta$ and $x^{\omega-1}$.

\subsection{Fractal Explicit Formulas as Distributional Fractional (or Fractal) Taylor's Formulas (or Taylor Series)}
\label{subsection:4.2}

\noindent Let $\eta$ be a generalized fractal string satisfying the languidity conditions \textnormal{\textbf{L1}} and \textnormal{\textbf{L2}} (see Definition~\ref{dfn:2.8} in Subsection~\ref{subsubsection:2.2.3} above) and such that none of its (visible) complex dimensions belongs to $-\mathbb{N}_{0}$.  As mentioned in the discussion preceding Section~\ref{subsection:4.1}, we will recast each residue term occurring in the summand arising from the explicit formula as a sum (indexed by the positive integer $j$ running from 1 to $m$) of terms of the form $\langle Y_{\omega}, \phi(x) \ln^{j-1}(x) \rangle= \langle Y_{\omega;m,j},\phi \rangle$, thereby introducing a new well-defined distribution belonging to the space of tempered distributions, $\mathbf{S}'(0,+\infty)$, or to the space of Schwartz distributions, $\mathbf{D}'(A,+\infty)$, for some suitable $A>0$, where $m$ is the order of the pole $\omega$ of $\zeta_{\eta}$ and where, for any $\omega \in \mathbb{C}$, $Y_{\omega}$ is the tempered or Schwartz distribution introduced in \cite{Schwartz} and discussed in Subsection~\ref{subsection:3.3} above.  The fact that $Y_{\omega;m,j}$ is a well-defined tempered and Schwartz distribution is proven in Appendix~\ref{Appendix B}; see Theorem~\ref{thm1:appendix2}, along with part (a) of Remark~\ref{rmk:B.1} in Appendix~\ref{Appendix B} and part (a) of Remark~\ref{rmk:A.1} in Appendix~\ref{Appendix A} below.  
\\
\indent In order to rewrite each residue term in this way, we write out the Laurent expansion for $\zeta_{\eta}=\zeta_{\eta}(s)$ at a pole $s=\omega$ of $\zeta_{\eta}$ (i.e., at a complex dimension of $\eta$) and multiply it term-by-term by the power series expansion at $s=\omega$ of $\widetilde{\phi}=\widetilde{\phi}(s)$, extracting out the $-1$ power terms.  Each of these $-1$ power terms involves the $j^{\mathrm{th}}$-order derivative of $\widetilde{\phi}$, which by Subsection~\ref{subsection:4.1}, can be re-expressed as the Schwartz distribution $Y_{\omega}$ applied to the modified Schwartz test function $\phi(x) \ln^{j-1}(x)$, where $j$ varies from $1$ to $m$, the order of the pole $\omega$.  We can then interpret this as the action of the Schwartz distribution $Y_{\omega;m,j}$ on the test function $\phi$.  (The aforementioned procedure is based on the proof of Theorem 6.1 on page 181 of \cite{LapidusFGCD2}.)     
\\
\indent In the case of simple poles, this sum  collapses to a single term, namely, $ Y_{\omega;1,1}=Y_{\omega}$.  In this particular case, the residue term becomes just a single term, $\mathrm{res}(\zeta_{\eta}(s);\omega) \Gamma(\omega) \langle Y_{\omega},\phi \rangle$.  Thus, we can rewrite both parts (non-exact and exact) of Theorem~\ref{thm:2.3} involving this term.  The resulting distributional fractal (or fractional) Taylor formulas are stated in Theorem~\ref{thm:4.4} (in the case of simple poles), and in Theorem~\ref{thm:4.5} (in the case of multiple poles).  
\\
\indent We first introduce the definition of the distribution, $Y_{\omega;m,j}$, which will be used throughout this section.        
\begin{dfn}
\label{dfn:4.1}
\noindent Let $\omega \in \mathbb{C} \setminus (-\mathbb{N}_{0})$ and let $Y_{\omega}$ be the distribution defined as in Definition~\ref{dfn:3.1} of Subsection~\ref{subsection:3.2}.  Then, we define a new distribution $Y_{\omega;m,j}$ by
$$\langle Y_{\omega;m,j},\phi(x) \rangle:=\langle Y_{\omega},\ln^{j-1}(x) \phi(x) \rangle, \hspace{5mm} \textnormal{ for all } \phi \in \mathbf{S}(0,+\infty),$$
where $m=m_{\omega}$ is the multiplicity of the pole $\omega$ of the geometric zeta function, $\zeta_{\eta}$, and where $j=1,....,m$.
Furthermore, $Y_{\omega;m,j}$ is a well-defined tempered distribution on $(0,+\infty)$, in light of Corollary~\ref{appendix2:cor2} of Appendix~\ref{Appendix B} below; i.e., $Y_{\omega;m,j} \in \mathbf{S}'(0,+\infty)$. 
\\
\indent Note that, for any $B \geq 0$, since $\mathbf{D}(B,+\infty) \subseteq \mathbf{S}(0,+\infty)$ (with a continuous inclusion)---and hence also, $\mathbf{S}'(0,+\infty) \subseteq \mathbf{D}'(B,+\infty)$, with an associated continuous embedding,---$Y_{\omega,m,j}$ is also a well-defined Schwartz distribution on $(B,+\infty)$; i.e., $Y_{\omega;m,j} \in \mathbf{D}'(B,+\infty)$, for any real number $B \geq 0$; see part (a) of Remark~\ref{rmk:B.1} in Appendix~\ref{Appendix B}, along with Remark~\ref{rmk:A.1} in Appendix~\ref{Appendix A}.  
\end{dfn}

\indent Now that we are equipped with this definition, we are ready to state Theorem~\ref{thm:4.2}, re-expressing the residue term as a formal sum involving the terms in the Laurent expansion of $\zeta_{\eta}=\zeta_{\eta}(s)$ at the pole $\omega$ of $\zeta_{\eta}$, the function $\Gamma$ evaluated at $\omega$, and the new distribution $Y_{\omega;m,k}$, where $k$ runs from $1$ to $m$ (the order of the pole $\omega$).  We will give two proofs for this re-expression.  Both are based on Theorem 6.1, page 181, in \cite{LapidusFGCD2}.     

\begin{thm}
\label{thm:4.2}
Let $\eta$ be a generalized fractal string such that $\zeta_{\eta}$ admits a (necessarily unique) meromorphic continuation to some open connected subset $U$ of $\mathbb{C}$.  Furthermore, let $\omega \in U$ be a pole of the geometric zeta function $\zeta_{\eta}$ which is not a negative integer or zero (i.e., $\omega \notin -\mathbb{N}_{0}$).  Furthermore, $\phi \in \mathbf{S}(0,+\infty)$ or $\phi \in \mathbf{D}(B,+\infty)$, for some $B \geq 0$; so that (by Theorem~\ref{thm2:appendix} in Appendix~\ref{Appendix A} below), $\widetilde{\phi}$ is an entire function and hence, is holomorphic at $\omega$. 
\\
\indent Then,
\begin{equation}
\label{eqn:4.1}
\tag{4.1}
\begin{aligned}
\operatorname{res} (\zeta_{\eta}(s) \widetilde{\phi}(s);\omega) &=\sum_{j=1}^{m} \frac{a_{j} \Gamma(\omega)}{(j-1)!} \langle Y_{\omega}, \ln^{j-1}(x) \phi(x) \rangle 
\\
&=\sum_{j=1}^{m} \frac{a_{j} \Gamma(\omega)}{(j-1)!} \langle Y_{\omega;m,j},\phi(x) \rangle,
\end{aligned}
\end{equation}
where the principal part part of the Laurent expansion $\zeta_{\eta}=\zeta
_{\eta}(s)$ is given at $\omega$ by
\begin{equation}
\label{eqn:4.2}
\zeta_{\eta}(s)=\sum_{j=1}^{m} a_{j} (s-\omega)^{-j};
\tag{4.2}
\end{equation}
that is, $\omega$ is assumed to be a pole of order $m=m_{\omega} \in \mathbb{N}$ of $\zeta_{\eta}$.  
\end{thm}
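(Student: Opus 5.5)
The plan is to compute the residue directly from the local expansions of the two factors at $\omega$, and then convert the Taylor coefficients of $\widetilde{\phi}$ into pseudofunction pairings using Theorem~\ref{thm:4.1}.

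First, since $\widetilde{\phi}$ is entire (Theorem~\ref{thm2:appendix} of Appendix~\ref{Appendix A}), it has a convergent Taylor expansion at $\omega$:
\[
\widetilde{\phi}(s)=\sum_{n=0}^{\infty}\frac{\widetilde{\phi}^{(n)}(\omega)}{n!}(s-\omega)^{n}.
\]
We write the Laurent expansion of $\zeta_{\eta}$ at the pole $\omega$ of order $m$ as the principal part~(\ref{eqn:4.2}) plus a function $h(s)$ that is holomorphic near $\omega$. The product $h\widetilde{\phi}$ is holomorphic at $\omega$, so it contributes nothing to the residue. The residue therefore equals the coefficient of $(s-\omega)^{-1}$ in
\[
\sum_{j=1}^{m}a_{j}(s-\omega)^{-j}\cdot\sum_{n\ge 0}\frac{\widetilde{\phi}^{(n)}(\omega)}{n!}(s-\omega)^{n}.
\]
The product of a finite Laurent polynomial with a convergent power series may be multiplied term by term on a punctured disc. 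The exponent $n-j$ equals $-1$ exactly when $n=j-1$, which gives
\[
\operatorname{res}(\zeta_{\eta}\widetilde{\phi};\omega)=\sum_{j=1}^{m}\frac{a_{j}}{(j-1)!}\,\widetilde{\phi}^{(j-1)}(\omega).
\]
Equivalently, one can use the residue formula $\frac{1}{(m-1)!}\lim_{s\to\omega}\frac{d^{m-1}}{ds^{m-1}}\bigl[(s-\omega)^{m}\zeta_{\eta}(s)\widetilde{\phi}(s)\bigr]$ and expand by the Leibniz rule. This gives a second proof along the lines of Theorem 6.1 of \cite{LapidusFGCD2}.

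Next, since $\omega\notin-\mathbb{N}_{0}$, Theorem~\ref{thm:4.1} applies with $n=j-1$:
\[
\widetilde{\phi}^{(j-1)}(\omega)=\Gamma(\omega)\langle Y_{\omega},\ln^{j-1}(x)\phi(x)\rangle.
\]
Substituting this into the sum yields the first equality in~(\ref{eqn:4.1}). The second equality is simply Definition~\ref{dfn:4.1} of $Y_{\omega;m,j}$. That this definition is legitimate, i.e.\ that $\ln^{j-1}(x)\phi(x)$ is again an admissible test function, is guaranteed by Corollary~\ref{appendix2:cor2} of Appendix~\ref{Appendix B}.

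The only point requiring care is the case $\phi\in\mathbf{D}(B,+\infty)$, because Theorem~\ref{thm:4.1} is stated for $\mathbf{S}(0,+\infty)$. I would handle this through the continuous inclusion $\mathbf{D}(B,+\infty)\subseteq\mathbf{S}(0,+\infty)$ noted in Definition~\ref{dfn:4.1}: every such $\phi$, extended by zero, lies in $\mathbf{S}(0,+\infty)$, so the same computation applies verbatim.
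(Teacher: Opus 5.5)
Your proposal is correct and follows essentially the same route as the paper: Taylor-expand $\widetilde{\phi}$ at $\omega$, multiply it by the Laurent expansion of $\zeta_{\eta}$ to read off the residue as $\sum_{j=1}^{m}\frac{a_{j}}{(j-1)!}\widetilde{\phi}^{(j-1)}(\omega)$, then rewrite each term via Theorem~\ref{thm:4.1} and Definition~\ref{dfn:4.1}. Your explicit split of $\zeta_{\eta}$ into the principal part plus a holomorphic remainder $h$, and your reduction of the $\mathbf{D}(B,+\infty)$ case to $\mathbf{S}(0,+\infty)$ by extension by zero, make precise two steps the paper leaves implicit.
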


\begin{proof}

This proof is based on the proof of Theorem 6.1 on page 181 of \cite{LapidusFGCD2}; see the comments following the present proof.
\\
\indent Since $\widetilde{\phi}$ is holomorphic at $\omega$, it admits the following power series at $\omega$:

$$\widetilde{\phi}(s)=\sum_{n=0}^{\infty} \frac{\left( \widetilde{\phi}^{(n) } \right)(\omega)}{{n!}} (s-\omega)^{n},$$
and hence, we have that

$$\zeta_{\eta}(s) \widetilde{\phi}(s)=\left( \sum_{n=0}^{\infty} \frac{\left( \widetilde{\phi}^{(n)} \right) (\omega)}{{n!}} (s-\omega)^{n} \right) \left( \sum_{q=1}^{n} a_{q} (s-\omega)^{-q} \right)$$
$$=\left( \widetilde{\phi}(\omega) + \widetilde{\phi}^{'}(\omega) (s-\omega) + \frac{\left( \widetilde{\phi} \right)^{''} (\omega) (s-\omega)^{2}}{2} +....+ \frac{\left( \widetilde{\phi} \right)^{(n)}(\omega) (s-\omega)^{n}}{n!}+.... \right) \cdot$$
$$\Big( a_{m} (s-\omega)^{-m}+a_{m-1} (s-\omega)^{-m+1}+....+a_{0} +... \Big)
$$
$$=\left( ...+a_{m} \frac{\left( \widetilde{\phi}^{(m-1)} \right)(\omega)}{(m-1)!(s-\omega)} +a_{m-1} \frac{\left( \widetilde{\phi}^{(m-2) } \right)(\omega)}{(m-2)!(s-\omega)}+ ...+a_{1} \frac{\widetilde{\phi}(\omega)}{(s-\omega)} +...\right).$$

\noindent After multiplying this power series and the principal part of this Laurent series together, we note that the coefficient in front of $(s-\omega)^{-1}$ is precisely equal to
$$\sum_{j=1}^{m} \frac{a_{j}}{(j-1)!} \left( \widetilde{\phi}^{(j-1)} \right) (\omega).$$
This last expression is therefore equal to the residue at $\omega$ of the indicated product.  Since $\omega$ is assumed to not be a negative integer or zero, we can apply Theorem~\ref{thm:4.1} above in order to rewrite this sum as
$$\sum_{j=1}^{m} \frac{a_{j} \Gamma(\omega)}{(j-1)!} \langle Y_{\omega}, \ln^{(j-1)}(x) \phi(x) \rangle,$$
as desired.  Using Definition~\ref{dfn:4.1}, we immediately obtain the last line in Equation (\ref{eqn:4.1}).  
\end{proof}

\indent The above statement (i.e., Theorem~\ref{thm:4.2}) is essentially a reformulation of Theorem 6.1, page 181, in \cite{LapidusFGCD2}.  We note that in \cite{LapidusFGCD2}, it is stated in the pointwise case (rather than in the distributional case).  Also, in \cite{LapidusFGCD2}, the second expression of Equation~(\ref{eqn:4.1}) of Theorem~\ref{thm:4.2} is referred to as the \textit{local term at the complex dimension} $\omega$ since the Laurent expansion is valid in a suitably small neighborhood of the point $\omega \in \mathcal{D}_{\eta}$.  
\\
\indent Below, we provide an alternate proof to Theorem~\ref{thm:4.2} and obtain the right-hand side of Equation~(\ref{eqn:4.1}) by applying Theorem 6.1 from \cite{LapidusFGCD2}, a pointwise counterpart of Theorem~\ref{thm:4.2}.  We obtain the right-hand side of Equation~(\ref{eqn:4.1}) by integrating the local term arising in Theorem 6.1 of \cite{LapidusFGCD2} against a suitable test function $\phi \in \mathbf{S}(0,+\infty)$ or $\phi \in \mathbf{D}(B,+\infty)$, so that $\widetilde{\phi}$ is entire and hence is holomorphic at the pole $\omega$.  

\begin{thm}[\cite{LapidusFGCD2}, Theorem 6.1, page 181, Adapted to the Distributional Case]
\label{thm:4.3}
\noindent Let $\eta$ be a generalized fractal string.  Furthermore, let $\omega$ be a complex dimension (i.e. a pole of $\zeta_{\eta}$) of multiplicity $m \geq 1$, and let the principal part of $\zeta_{\eta}=\zeta_{\eta}(s)$ at $s=\omega$ be given by Equation~(\ref{eqn:4.2}) above.
\\
\indent Then, the pointwise local term, as well as a \textnormal{distributional} counterpart of the \textnormal{local term}, at the \textnormal{complex dimension} $\omega$ provided by Equation~(\ref{eqn:4.1}) of Theorem~\ref{thm:4.2} above is also given by
\begin{equation}
\label{eqn:4.3}
x^{\omega-1} \sum_{j=1}^{m} a_{j} \frac{{(\log{x})}^{j-1}}{(j-1)!}.
\tag{4.3}
\end{equation}
\end{thm}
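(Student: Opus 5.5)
The plan is to compute the pointwise residue $\mathrm{res}(x^{s-1}\zeta_{\eta}(s);\omega)$ directly from the Laurent expansion, and then integrate it against a test function to recover Equation~(\ref{eqn:4.1}).

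\textbf{Pointwise local term.} Fix $x>0$. The function $s\mapsto x^{s-1}=x^{\omega-1}e^{(s-\omega)\log x}$ is entire, with Taylor expansion at $\omega$
$$x^{s-1}=x^{\omega-1}\sum_{n=0}^{\infty}\frac{(\log x)^{n}}{n!}(s-\omega)^{n}.$$
Multiply this by the principal part~(\ref{eqn:4.2}), exactly as in the proof of Theorem~\ref{thm:4.2}; the holomorphic part of $\zeta_{\eta}$ contributes nothing to the residue. The coefficient of $(s-\omega)^{-1}$ pairs $a_j(s-\omega)^{-j}$ with the $n=j-1$ term, so
$$\mathrm{res}(x^{s-1}\zeta_{\eta}(s);\omega)=x^{\omega-1}\sum_{j=1}^{m}a_j\frac{(\log x)^{j-1}}{(j-1)!},$$
which is (\ref{eqn:4.3}). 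This is the level $k=0$ version of the residue in Remark~\ref{rmk:2.1}.

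\textbf{Distributional counterpart.} Let $\phi\in\mathbf{S}(0,+\infty)$ or $\phi\in\mathbf{D}(B,+\infty)$, and pair (\ref{eqn:4.3}) with $\phi$. The finite sum may be integrated term by term:
$$\int_0^{+\infty}\phi(x)\,x^{\omega-1}\sum_{j=1}^{m}a_j\frac{(\log x)^{j-1}}{(j-1)!}\,dx=\sum_{j=1}^{m}\frac{a_j}{(j-1)!}\int_0^{+\infty}\phi(x)\ln^{j-1}(x)\,x^{\omega-1}\,dx.$$
By the computation in the proof of Theorem~\ref{thm:4.1}, each integral equals $\widetilde{\phi}^{(j-1)}(\omega)$. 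For $\omega\notin-\mathbb{N}_0$, that theorem rewrites it as $\Gamma(\omega)\langle Y_{\omega},\ln^{j-1}(x)\phi(x)\rangle$, which equals $\Gamma(\omega)\langle Y_{\omega;m,j},\phi\rangle$ by Definition~\ref{dfn:4.1}. The result is precisely the right-hand side of (\ref{eqn:4.1}). Equivalently, one can invoke the identity $\int_0^{+\infty}\phi(x)\,\mathrm{res}(x^{s-1}g(s);\omega)\,dx=\mathrm{res}(\widetilde{\phi}(s)g(s);\omega)$ stated before Definition~\ref{dfn:2.13}, with $k=0$ and $g=\zeta_{\eta}$.

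\textbf{Main obstacle.} The only delicate point is justifying the integrals. They converge absolutely and define a continuous functional in $\phi$ because $\phi(x)\ln^{j-1}(x)\in\mathbf{S}(0,+\infty)$, by Corollary~\ref{appendix2:cor2}. Rapid decay of $\phi$ at $0^+$ and at $+\infty$ dominates $x^{\Re\omega-1}|\log x|^{j-1}$ at both ends, so no finite-part regularization is needed. Differentiation under the integral sign is supplied by Appendix~\ref{Appendix A}. In the case $\omega\in-\mathbb{N}_0$, the first expression $\sum_j a_j\widetilde{\phi}^{(j-1)}(\omega)/(j-1)!$ still holds, and only the $Y_{\omega}$ rewriting is dropped.
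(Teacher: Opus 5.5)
Your proposal is correct and follows essentially the same route as the paper. The pointwise identity is the Laurent--Taylor product computation of $\mathrm{res}(x^{s-1}\zeta_{\eta}(s);\omega)$, which the paper takes from Theorem 6.1 of \cite{LapidusFGCD2} and mirrors, with $\widetilde{\phi}$ in place of $x^{s-1}$, in the proof of Theorem~\ref{thm:4.2}; the distributional counterpart is obtained, as in the paper's alternate proof of Theorem~\ref{thm:4.2}, by integrating (\ref{eqn:4.3}) against $\phi$ term by term and identifying $x^{\omega-1}$ with $\Gamma(\omega)Y_{\omega}$ on these test functions, and your two added observations (the finite part reduces to an absolutely convergent integral because $\phi$ vanishes to infinite order at $0^{+}$, and for $\omega\in-\mathbb{N}_0$ only the $Y_{\omega}$ rewriting must be dropped while the identity with $\sum_{j} a_j\widetilde{\phi}^{(j-1)}(\omega)/(j-1)!$ survives) make explicit points the paper leaves implicit.
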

\noindent \textit{Alternate proof to Theorem~\ref{thm:4.2} by using Theorem~\ref{thm:4.3}.}
\noindent We interpret the expression in Equation~(\ref{eqn:4.3}) as a distribution, in the variable $x$, acting on a test function $\phi \in \mathbf{S}(0,+\infty)$ or $\phi \in \mathbf{D}(B,+\infty)$, where $B \geq 0$, in order to provide an alternate proof to Theorem~\ref{thm:4.2} above:  

$$\int_{0}^{+\infty} \phi(x) x^{\omega-1} \sum_{j=1}^{m} a_{j} \frac{{(\log{x})}^{j-1}}{(j-1)!} dx
=\sum_{j=1}^{m} \frac{a_{j}}{(j-1)!} \int_{0}^{+\infty} x^{\omega-1} \phi(x) 
 ({\log{x}})^{j-1} dx$$
 $$=\sum_{j=1}^{m} \frac{a_{j}}{(j-1)!} \int_{0}^{+\infty} Y_{\omega}(x)\Gamma(\omega) \phi(x) 
{(\log{x})}^{j-1} dx$$
$$=\sum_{j=1}^{m} \frac{a_{j} \Gamma(\omega)}{(j-1)!} \langle Y_{\omega}(x), {(\log{x})}^{j-1} \phi(x) 
 \rangle.$$
\noindent Thus, we arrive at the same expression as in Equation~(\ref{eqn:4.1}) of Theorem~\ref{thm:4.2} above by using Theorem~\ref{thm:4.3}, a pointwise result from Chapter 6 of \cite{LapidusFGCD2}.  This completes the alternate proof of Theorem~\ref{thm:4.2} based on Theorem~\ref{thm:4.3} \hfill $\square$
\\

\indent If $\eta$ is a generalized fractal string such that $\zeta_{\eta}$ has (visible) poles located only in the set $\mathbb{C} \setminus (-\mathbb{N}_{0})$, we can rewrite each residue arising in the conclusion of part (i) and of part (ii) of Theorem~\ref{thm:2.3} in terms of the sum obtained above.  In this case, the fractal explicit formula becomes a double sum.  This is not quite reminiscent of a standard Taylor's formula on its own, but when each pole is assumed to be simple, we do arrive at a form similar to a distributional Taylor's formula or a distributional Taylor series, with exponents given by the underlying complex dimensions.  We state the corresponding theorem below (Theorem~\ref{thm:4.4}) and a preceding corollary that makes this statement explicit. We note that in Theorem~\ref{thm:4.5}, we will deal with the general case of multiple poles of arbitrary multiplicities, thereby obtaining natural fractal generalizations of Taylor's formulas or of Taylor series in this setting.

\begin{cor}
\label{cor:4.1}
\noindent Let $\eta$ be a generalized fractal string and suppose $\omega$ is a simple pole of $\zeta_{\eta}$ not belonging to $-\mathbb{N}_{0}$. Then, this sum (i.e. either of the two sums appearing in Equation~(\ref{eqn:4.1}) of Theorem~\ref{thm:4.2}) reduces to a single term, namely,
$$\operatorname{res}(\zeta_{\eta}(s);\omega) \Gamma(\omega) \langle Y_{\omega}, \phi(x) \rangle.$$
\end{cor}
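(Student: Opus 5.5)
The plan is to specialize Theorem~\ref{thm:4.2} to the case $m=1$. Since $\omega$ is a pole of $\zeta_{\eta}$ lying outside $-\mathbb{N}_{0}$, the hypotheses of Theorem~\ref{thm:4.2} hold with $U$ any domain on which $\zeta_{\eta}$ is meromorphic and which contains $\omega$, and with $\phi \in \mathbf{S}(0,+\infty)$ (or $\phi \in \mathbf{D}(B,+\infty)$).

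The key steps, in order, are as follows.

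First, write the principal part (\ref{eqn:4.2}) of $\zeta_{\eta}$ at $\omega$ with $m=1$. It consists of the single term $a_{1}(s-\omega)^{-1}$, so
\[
a_{1}=\operatorname{res}(\zeta_{\eta}(s);\omega).
\]

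Second, note that in (\ref{eqn:4.1}) the sum over $j$ runs only over $j=1$. There the factor $\frac{1}{(j-1)!}=\frac{1}{0!}=1$ and $\ln^{j-1}(x)=\ln^{0}(x)=1$. The first sum in (\ref{eqn:4.1}) therefore collapses to $a_{1}\Gamma(\omega)\langle Y_{\omega},\phi(x)\rangle$.

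Third, for the second sum, Definition~\ref{dfn:4.1} gives
\[
\langle Y_{\omega;1,1},\phi\rangle=\langle Y_{\omega},\ln^{0}(x)\phi(x)\rangle=\langle Y_{\omega},\phi\rangle,
\]
so it yields the same single term. Substituting $a_{1}=\operatorname{res}(\zeta_{\eta}(s);\omega)$ then gives the stated expression.

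As a consistency check, one can argue directly without Theorem~\ref{thm:4.2}. Since $\widetilde{\phi}$ is entire (Theorem~\ref{thm2:appendix}) and $\omega$ is a simple pole, $\operatorname{res}(\zeta_{\eta}\widetilde{\phi};\omega)=\widetilde{\phi}(\omega)\operatorname{res}(\zeta_{\eta};\omega)$. Theorem~\ref{thm:4.1} with $n=0$ gives $\widetilde{\phi}(\omega)=\Gamma(\omega)\langle Y_{\omega},\phi\rangle$.

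There is no real obstacle here. The only points needing care are that $\Gamma(\omega)$ is finite, which holds because $\omega\notin-\mathbb{N}_{0}$, and that Theorem~\ref{thm:4.1} requires this same hypothesis.
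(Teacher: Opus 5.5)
Your proposal is correct and matches the paper's proof: you specialize Theorem~\ref{thm:4.2} to $m=1$, identify $a_{1}=\operatorname{res}(\zeta_{\eta}(s);\omega)$, and note that the sum collapses to one term (using $\ln^{0}x=1$ and $Y_{\omega;1,1}=Y_{\omega}$). Your direct check via Theorem~\ref{thm:4.1} with $n=0$ is a valid extra confirmation, though the argument does not need it.
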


\begin{proof} 

Since $\omega$ is a simple pole of $\eta$ satisfying the assumptions of Theorem~\ref{thm:4.2}, we evaluate the sum occurring in Equation~(\ref{eqn:4.1}) when $m=1$.  Upon observing that $a_{1}=\mathrm{res}(\zeta_{\eta}(s);\omega)$ and that the sum collapses to a single term, we arrive at the statement of the corollary. \end{proof}

In the case when $\zeta_{\eta}$ only has simple poles, we can recast Theorem~\ref{thm:2.3} by using the corollary above.  The theorem below (Theorem~\ref{thm:4.4}) provides the corresponding result at level $k=0$, but, of course, we could state the corresponding distributional fractal Taylor's formula, with or without error term, at any level $k \in \mathbb{Z}$ (i.e., for $P_{\eta}^{[k]}$, with $k \in \mathbb{Z}$ arbitrary), instead of for $\eta=P_{\eta}^{[0]}$, including at level $k=1$ (i.e., for $N_{\eta}$ instead of just for $\eta$).  This comment also applies to the case of multiple poles, which will be dealt with in Theorem~\ref{thm:4.5} at the end of this subsection.   
\\
\indent In part (i) (respectively, in part (ii)) of the following theorem, for any $\omega \in \mathbb{C}$, $Y_{\omega}$ is viewed as a tempered (respectively, a Schwartz) distribution on $(0,+\infty)$ (respectively, on $(A,+\infty)$) and is the distribution introduced by Schwartz in \cite{Schwartz} and discussed in Subsection~\ref{subsection:3.3} above.
\begin{thm}[Distributional Fractal Taylor's Formula, With or Without Error Term, at Level $k=0$: Case of Simple Poles]
\label{thm:4.4}
\mbox{}\\[-0.7em] 
\par \smallskip
\label{thm:4.4,(i)}
\indent \textnormal{(}i\textnormal{)} \textnormal{({Distributional Fractal Taylor's Formula With Error Term, at Level $k=0$: Case of Simple Poles})}.  Let $\eta$ be a languid generalized fractal string, for some arbitrary languidity exponent $\kappa \in \mathbb{R}$ (see Definition~\ref{dfn:2.8} above), and relative to a screen $S$ (and associated window $W$); i.e., $\eta$ satisfies the same hypotheses as in part \textnormal{(}i\textnormal{)}
of Theorem~\ref{thm:2.3} above.  Furthermore, assume that all of the visible complex dimensions of $\eta$ (i.e., the poles of $\zeta_{\eta}$ belonging to the window $W$) are simple and do not belong to $-\mathbb{N}_{0}$.
\\
\indent Then, the conclusion of part \textnormal{(}i\textnormal{)} of Theorem~\ref{thm:2.3} (applied at level $k=0$) can be rewritten in terms of (see part the tempered distributions $Y_{\omega}$ on $(0,+\infty)$, as follows: for any test function $\phi \in \mathbf{S}(0,+\infty)$, we have that (see also part \textnormal{(}a\textnormal{)} of Remark~\ref{rmk:4.1})
\begin{equation}
\label{eqn:4.4}
\tag{4.4}
\begin{aligned}
\langle \eta, \phi \rangle &=\sum_{\omega \in \mathcal{D}_{\eta}(W)} \mathrm{res}(\zeta_{\eta}(s) \widetilde{\phi}(s);\omega)+ \langle R_{\eta},\phi \rangle
\\
&=\sum_{\omega \in \mathcal{D}_{\eta}(W)} \textnormal{res}(\zeta_{\eta}(\omega);\omega) \Gamma(\omega) \langle Y_{\omega},\phi \rangle+ \langle R_{\eta},\phi \rangle,
\end{aligned}
\end{equation}
where the distributional error term $R_{\eta}=R_{\eta}^{[0]} \in \mathbf{S}'(0,+\infty)$ (given by a contour integral, as in part \textnormal{(}i\textnormal{)} of Theorem~\ref{thm:2.3} when $k=0$) is estimated (distributionally as $x \rightarrow +\infty$) exactly as in part \textnormal{(}i\textnormal{)} of Theorem~\ref{thm:2.3}, applied at level $k=0$ (and under the same hypotheses on the screen $S$).  Recall that $P_{\eta}^{[0]}=\eta$.  
\par \smallskip \smallskip
\label{thm:4.4,(ii)}
\indent \textnormal{(}ii\textnormal{)} \textnormal{({Distributional Fractal Taylor's Formula Without Error Term, at Level $k=0$: Case of Simple Poles})}.  Assume that the generalized fractal string $\eta$ is strongly languid, (that is, condition \textnormal{\textbf{L1}} holds with $W:=\mathbb{C}$ and condition \textnormal{\textbf{L3}} holds, for some strong languidity $\kappa \in \mathbb{R}$, see Definition~\ref{dfn:2.8} above); i.e., $\eta$, satisfies the same hypotheses as in part \textnormal{(}ii\textnormal{)} of Theorem~\ref{thm:2.3} above.  Furthermore, assume that all the complex dimensions of $\eta$ (i.e. all of the poles of $\zeta_{\eta}$ in $\mathbb{C}$) are simple and do not belong to $-\mathbb{N}_{0}$.  
\\
\indent Then, the conclusion of part \textnormal{(}ii\textnormal{)} of Theorem~\ref{thm:2.3} (applied at level $k=0$) can be rewritten in terms of the Schwartz distributions $Y_{\omega}$ on $(0,+\infty)$, as follows: for any test function $\phi \in \mathbf{D}(A,+\infty)$, where $A>0$ is the constant occurring in the strong languidity condition \textnormal{\textbf{L3}} in Definition~\ref{dfn:2.8}, we have no error term present.  More specifically, the exact distributional fractal explicit formula becomes the following fractal Taylor series (without error term and) with complex fractional derivatives expressed in terms of the complex dimensions of $\eta$ (see also part (a) of Remark~\ref{rmk:4.1} below.)
\begin{equation}
\label{eqn:4.4'}
\tag{$4.4^{\prime}$}
\langle \eta, \phi \rangle=\sum_{\omega \in \mathcal{D}(\mathbb{C})} \mathrm{res}(\zeta_{\eta}(s);\omega) \Gamma(\omega) \langle Y_{\omega},\phi \rangle.
\end{equation}

\end{thm}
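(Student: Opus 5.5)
The plan is to specialize Theorem~\ref{thm:2.3} to level $k=0$ and then rewrite each residue term by means of Corollary~\ref{cor:4.1}.

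At $k=0$, the Pochhammer symbol satisfies $(s)_{0}=1$, the shifted Mellin transform $\widetilde{\phi}(s+k)$ becomes $\widetilde{\phi}(s)$, and the second sum in Equation~(\ref{eqn:2.3}) (respectively, (\ref{eqn:2.3'})) is empty. Since $P_{\eta}^{[0]}=\eta$, part (i) of Theorem~\ref{thm:2.3} gives, for every $\phi\in\mathbf{S}(0,+\infty)$,
\[
\langle \eta,\phi\rangle=\sum_{\omega\in\mathcal{D}_{\eta}(W)}\mathrm{res}\bigl(\zeta_{\eta}(s)\widetilde{\phi}(s);\omega\bigr)+\langle R_{\eta}^{[0]},\phi\rangle .
\]
The error term is given by the contour integral along $S$. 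It is estimated distributionally exactly as in Theorem~\ref{thm:2.3}(i) with $k=0$, so no new work is needed for it. This is the first line of (\ref{eqn:4.4}). Part (ii) gives the analogous exact identity for $\phi\in\mathbf{D}(A,+\infty)$, with $W=\mathbb{C}$ and no error term.

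Next I convert each residue. Fix a visible pole $\omega$. By hypothesis it is simple and $\omega\notin-\mathbb{N}_{0}$. In both cases $\widetilde{\phi}$ is entire by Theorem~\ref{thm2:appendix} of Appendix~\ref{Appendix A}, because $\mathbf{D}(A,+\infty)\subseteq\mathbf{S}(0,+\infty)$. So Theorem~\ref{thm:4.2} applies with $m=1$, and Corollary~\ref{cor:4.1} yields
\[
\mathrm{res}\bigl(\zeta_{\eta}(s)\widetilde{\phi}(s);\omega\bigr)=\mathrm{res}(\zeta_{\eta}(s);\omega)\,\Gamma(\omega)\,\langle Y_{\omega},\phi\rangle .
\]
Concretely, $a_{1}=\mathrm{res}(\zeta_{\eta};\omega)$, so the residue equals $a_{1}\widetilde{\phi}(\omega)$. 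Theorem~\ref{thm:4.1} with $n=0$ then gives $\widetilde{\phi}(\omega)=\Gamma(\omega)\langle Y_{\omega},\phi\rangle$. The coefficient written as $\mathrm{res}(\zeta_{\eta}(\omega);\omega)$ in (\ref{eqn:4.4}) is to be read as $\mathrm{res}(\zeta_{\eta}(s);\omega)$. Substituting term by term gives the second line of (\ref{eqn:4.4}) and gives (\ref{eqn:4.4'}).

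The one point needing care, and the expected obstacle, is that the sums over $\mathcal{D}_{\eta}(W)$ (or over $\mathcal{D}_{\eta}(\mathbb{C})$) may be countably infinite. In Theorem~\ref{thm:2.3} they converge only in the sense given there, namely as limits of partial sums over poles with $|\mathrm{Im}\,\omega|$ below the truncation heights $T_{n}$ from \textbf{L1}. The substitution is an exact term-by-term equality for each fixed $\omega$ and each fixed $\phi$. Therefore every partial sum of the new series coincides with the corresponding partial sum of the old one. Hence the new series converges in the same sense and has the same limit. No additional estimate on $\Gamma(\omega)\langle Y_{\omega},\phi\rangle$ is required.

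Finally, the $Y_{\omega}$ are well defined as tempered distributions, respectively as elements of $\mathbf{D}'(A,+\infty)$, by Definition~\ref{dfn:4.1} and Appendix~\ref{Appendix B}. This completes both parts.
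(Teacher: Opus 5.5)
Your proposal is correct and follows the paper's route: the paper gives no separate proof of Theorem~\ref{thm:4.4}, and obtains it by specializing Theorem~\ref{thm:2.3} to $k=0$ (where $(s)_{0}=1$ and the second sum is empty) and substituting Corollary~\ref{cor:4.1} (that is, Theorem~\ref{thm:4.2} with $m=1$ and Theorem~\ref{thm:4.1} with $n=0$) into each residue term. Your two extra points are correct details that the paper leaves implicit: the coefficient $\mathrm{res}(\zeta_{\eta}(\omega);\omega)$ in (\ref{eqn:4.4}) is a typo for $\mathrm{res}(\zeta_{\eta}(s);\omega)$, and because the equality holds term by term, the partial sums coincide and the new series inherits the mode of convergence of the original one.
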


\indent We can interpret the above expression (in Equation~(\ref{eqn:4.4'}) of part \textnormal{(}ii\textnormal{)} of Theorem~\ref{thm:4.4}) as a distributional fractional (or fractal) Taylor series.  Observe that, for each $\omega \in \mathcal{D}(\mathbb{C})$, the term $\langle Y_{\omega}, \phi \rangle$ plays the role of $x^{\omega}$ (the derivative of order $\omega$ of the function $\phi$), with coefficients, $\mathrm{res}(\zeta_{\eta}(s); \omega) \Gamma(\omega)$.  And entirely similarly for the fractional (or fractal) Taylor's formula with error term occurring in part (i) of Theorem~\ref{thm:4.4}.  
\\
\indent Next, we illustrate part (ii) of Theorem~\ref{thm:4.4} by means of the Cantor string (see Example~\ref{exam:2.2} and Subsection~\ref{subsubsubsection:2.1.3.1} above).  But first, we will give a remark which provides an analog of Theorem~\ref{thm:2.3} in the context of Theorem~\ref{thm:4.4} just above.      

\begin{remark}
\label{rmk:4.1}
\noindent \textnormal{(}a\textnormal{)} In the spirit of Remark~\ref{rmk:2.1} following Theorem~\ref{thm:2.3} above, we can rewrite the conclusion of part \textnormal{(}i\textnormal{)} of Theorem~\ref{thm:4.4}, namely, Equation~(\ref{eqn:4.4}), as the following distributional identity (with error term) between tempered distributions (i.e., in $\mathbf{S}'(0,+\infty))$:
\begin{equation}
\label{4.5}
\tag{4.5}
\eta=\sum_{\omega \in \mathcal{D}_{\eta}(W)} \mathrm{res}(\zeta_{\eta}(s);\omega) \Gamma(\omega)Y_{\omega}+R_{\eta},
\end{equation}
where $Y_{\omega}=D^{-\omega}\delta$, for every $\omega \in D_{\eta}(W)$, and where the distributional error term $R_{\eta} \in \mathbf{S}'$ satisfies the distributional asymptotic estimate,
\begin{equation}
\label{4.6}
\tag{4.6}
R_{\eta}(x)=O(x^{\text{sup}(S)-1}), \textnormal{ as } x \rightarrow +\infty,
\end{equation}
and---provided the screen $S$ lies strictly to the left of the vertical line $\{ \Re(s)=D_{\eta} \}$ (i.e., if it satisfies $S(t)<\sup(S)$, for all $t \in \mathbb{R}$)---the sharper distributional asymptotic estimate
\begin{equation}
\label{4.7}
\tag{4.7}
R_{\eta}(x)=o(x^{\sup(S)-1}), \textnormal{ as } x \rightarrow +\infty.
\end{equation}
\indent \textnormal{(}b\textnormal{)} Similarly to part \textnormal{(}a\textnormal{)}, we can rewrite the conclusion of part \textnormal{(}ii\textnormal{)} of Theorem~\ref{thm:4.4}, namely, Equation~(\ref{eqn:4.4}), as the following exact distributional identity in $\mathbf{D}'(A,+\infty)$, where $A>0$ is the constant occurring in condition \textnormal{\textbf{L3}} of Definition~\ref{dfn:2.8}:
\begin{equation}
\label{4.2primeprimeprime}
\tag{4.5$^{\prime}$}
\eta=\sum_{\omega \in D_{\eta}(\mathbb{C})} \mathrm{res}(\zeta_{\eta}(s);\omega) \Gamma(\omega) Y(\omega),
\end{equation}
where $Y_{\omega}=D^{-\omega}\delta$, for every $\omega \in D_{\eta}(\mathbb{C})$.
\end{remark}

\begin{exam}[Cantor String]
\label{exam:4.1}
\noindent We first compute the residues of the geometric zeta function for the Cantor string.  As previously noted, the geometric zeta function is given by
$$\zeta_{\eta_{CS}}(s)=\frac{1}{3^{s}-2}, \textnormal{ for all } s \in \mathbb{C}.$$
Recall from Subsection~\ref{subsubsubsection:2.1.3.1} that it then follows that the complex dimensions of $\eta_{\textnormal{CS}}$ (i.e., of the Cantor string) are all simple and
$${\mathcal{D}}_{\eta_{CS}}=\mathcal{D}_{CS}(\mathbb{C})=\left\{ \log_{3}2+\frac{2 \pi n i}{\log{3}} : n \in \mathbb{Z}  \right\}.$$
It is clear that the residues all have the same value $\frac{1}{2 \log3}$ (independently of $n \in \mathbb{Z}$).  Furthermore, like any self-similar string, the Cantor string is strongly languid with strong languidity constant $\kappa:=0$ (see \cite{LapidusFGCD} and Subsection~\ref{subsubsubsection:2.2.3.1} above); moreover, the associated positive constant $A$ occurring in the strong languidity condition \textnormal{\textbf{L3}} is $A:=1$. Thus by Corollary~\ref{cor:4.1} (or else, by Theorem~\ref{thm:4.4}, part \textnormal{(}b\textnormal{)}), the exact distributional fractal explicit formula at level $k=0$ (i.e. for $\eta_{CS}$ itself) becomes 
$$\langle \eta_{CS}, \phi \rangle=\sum_{n=-\infty}^{\infty} \frac{1}{2 \log3} \Gamma \left( \log_{3}2+\frac{2 \pi n i}{\log3} \right) \langle Y_{\log_{3}2+\frac{2 \pi n i}{\log3}},\phi \rangle,$$
for any test function $\phi \in \mathbf{D}(1,+\infty)$. In other words, we have the following distributional fractal (or fractional) Taylor series expansion with complex dimensions orders for the Cantor string $\eta_{\textnormal{CS}}$, viewed as a generalized Cantor string and a tempered distribution (See Equation~(\ref{4.2primeprimeprime}) in part \textnormal{(}b\textnormal{)} of Remark~\ref{rmk:4.1} above):
$$\eta_{\textnormal{CS}}=\sum_{n=-\infty}^{\infty} \frac{1}{2 \log3} \Gamma \left(\log_{3}2+\frac{2 \pi n i}{\log3}\right) Y_{\log_{3}2+\frac{2 \pi n i}{\log 3}}.$$

\begin{remark}[Counterparts at Level $k \in \mathbb{Z}$ of Theorem~\ref{thm:4.4} (and of Theorem~\ref{thm:4.5})]
\label{rmk:4.2}
\noindent Since $\eta_{\textnormal{CS}}$ is strongly languid, we deduce from part \textnormal{(}ii\textnormal{)} of Theorem~\ref{thm:2.3} (the exact distributional fractal explicit formula) that we can obtain a corresponding distributional Taylor series (with complex dimensions orders) at any level $k \in \mathbb{Z}$.  In particular, letting $k=1$, we deduce that the geometric counting function $N_{\textnormal{CS}}=N_{\textnormal{CS}}(x)$ admits a generalized fractal Taylor series of this type.  We leave it for the interested reader to write down the corresponding expression.
\\
\indent An entirely analogous comment applies to part \textnormal{(}i\textnormal{)} of Theorem~\ref{thm:2.3}, from which one can deduce a fractal (or fractional) Taylor series with error term at any level $k \in \mathbb{Z}$.  Moreover, one can obtain similarly level $k$ counterparts (with $k \in \mathbb{Z}$ arbitrary) of Theorem~\ref{thm:4.5} below, in the general case of multiple poles.
\end{remark}

\begin{remark}[Self-Similar Strings]
\label{rmk:4.3}
\noindent Since, as was shown in Subsection~\ref{subsubsubsection:2.2.3.1} above, every self-similar string $\eta$ (viewed as a generalized fractal string) is strongly languid with strong languidity constant $\kappa:=0$ and with constant $A>0$ (occurring in condition \textnormal{\textbf{L3}} expressed explicitly in terms of the underlying scaling ratios and gap scales), we can also obtain---from Theorem~\ref{thm:4.4} above, part \textnormal{(}ii\textnormal{)}, for the case of simple complex dimensions, and from Theorem~\ref{thm:4.5} below, part \textnormal{(}ii\textnormal{)}, in the general case of multiple complex dimensions---a distributional fractal Taylor series with complex orders, both for $\eta=\eta_{\mathscr{L}}$ and for its geometric counting function $N_{\eta}=N_{\mathscr{L}}$, as well as for any distributional anti-derivative (or derivative) $P_{\eta}^{[k]}$, of $\eta$, for an arbitrary $k \in \mathbb{Z}$.
\\
\indent Since the Fibonacci string $\eta_{\textnormal{Fib}}$ is a (lattice) self-similar string, and all of its complex dimensions are simple (see Chapter 2 of \textnormal{\cite{LapidusFGCD2}} and Subsection~\ref{subsubsubsection:2.1.3.2} above), the present comment applies, in particular, to the Fibonacci string $\eta_{\textnormal{Fib}}$, its geometric counting function $N_{{\eta}_{\textnormal{Fib}}}=N_{\textnormal{Fib}}$, and the corresponding (distributional) anti-derivatives $\mathrm{P}_{\textnormal{Fib}}^{[k]}$, of an arbitrary $k \in \mathbb{Z}$.  We leave it to the interested reader to write down the corresponding fractal distributional Taylor series.  

\end{remark}

\begin{remark}[$a$-String]
\label{rmk:4.4}
For a simple example of a (fractal) distributional Taylor's formula with error term and with complex dimensions orders, we can consider the $a$-string $\mathscr{L}_{a}$, viewed as a generalized fractal string $\eta_{a}=\eta_{\mathscr{L}_{a}}$, for any value of the parameter $a>0$.  It is shown in Theorem 6.21 on page 204 of \textnormal{\cite{LapidusFGCD2}} that $\mathscr{L}_{a}$ has complex dimensions at $D=\frac{1}{a+1}$, the Minkowski dimension of $\mathscr{L}_{a}$, and at (a subset of) $-D,-2D,...,-nD,...$, for every $n \in \mathbb{N}$.  All of these (possible) complex dimensions are simple.  It is also shown in \textnormal{ibid} that for any screen $S$ avoiding these complex dimensions, $\eta_{a}$ is languid with languidity exponent $\kappa$ given as follows: $\kappa:=\frac{1}{2}-(a+1) \mathrm{inf } S$, if $\mathrm{inf } S \leq 0$, and $\kappa:=0$, if $\mathrm{inf }S \geq 0$.  Note that the fractal string $\mathscr{L}_{a}$ is not strongly languid, however.  
\\
\indent We can then apply part \textnormal{(}i\textnormal{)} of Theorem~\ref{thm:2.3} (the distributioanl fractal explicit formula, with error term)---or else part \textnormal{(}i\textnormal{)} of Theorem~\ref{thm:4.4} and part \textnormal{(}a\textnormal{)} of Remark~\ref{rmk:4.1}---in order to obtain a corresponding distributional fractal Taylor's formula with error term and with complex dimensions orders, both for the generalized fractal string $\eta_{a}$, its geometric counting function, $N_{a}=N_{\eta_{a}}$, as well as, in fact, for all of the $k^{\textnormal{th}}$ anti-derivatives $\mathrm{P}_{a}^{[k]}=\mathrm{P}_{\eta_{a}}^{[k]}$ of $\eta_{a}$, at any level $k \in \mathbb{Z}$.  Again, by necessity of concision, we leave it to the interested reader to use part \textnormal{(}i\textnormal{)} of Theorem~\ref{thm:4.4} (along with part \textnormal{(}a\textnormal{)} of Remark~\ref{rmk:4.1}) above in order to write down the corresponding explicit fractal Taylor's formula with error term as well as to give an asymptotic distributional estimate (as $x \rightarrow \infty$) for the error term.

\end{remark}

\end{exam}

\indent Our last result (Theorem~\ref{thm:4.5} just below) deals with the case of multiple poles and provides a rather new form of a fractal Taylor's formula with error term (part (i) of Theorem~\ref{thm:4.5}), as well as of a fractal Taylor series, also called an exact fractal Taylor's formula (part (ii) of Theorem~\ref{thm:4.5}), valid for complex dimensions with arbitrary multiplicities.  Part (ii) of the theorem (as well as, in the special case of simple poles, part \textnormal{(}ii\textnormal{)} of Theorem~\ref{thm:4.4} above) shows, in particular, that, \textit{in some sense, strongly languid generalized fractal strings (considered as Schwartz distributions on $(A,+\infty)$) can be viewed as a fractal analog of analytic functions on $(A,+\infty)$}. 
\\
\indent We note that in part (i) (respectively, in part (ii)) of Theorem~\ref{thm:4.5}, for any $\omega \in \mathcal{D}_{\eta}(W)$ (respectively, for any $\omega \in \mathcal{D}_{\eta}(\mathbb{C}))$ with multiplicity $m_{\omega} \in \mathbb{N}$, and for any $j \in \{ 1,...,m \}$, the distribution $Y_{\omega;m_{\omega},j}$ is viewed as a tempered (respectively, a Schwartz) distribution on $(0,+\infty)$ (respectively, on $(A,+\infty)$).    

\begin{thm}[Distributional Fractal Taylor's Formula, With or Without Error Term, at Level $k=0$: Case of Multiple Poles]
\label{thm:4.5}
\indent We stress that, in the general case when the complex dimensions of $\eta$ may be multiple, the exact counterpart of both parts \textnormal{(}i\textnormal{)} and \textnormal{(}ii\textnormal{)} of Theorem~\ref{thm:4.4} holds, under the same respective hypotheses (except for the fact that the visible complex dimensions are no longer required to be simple); in particular, for all $\omega \in \mathcal{D}_{\eta}(W)$, we also require that $\omega \notin -\mathbb{N}_{0}$ in part \textnormal{(}i\textnormal{)}, and for all $\omega \in \mathcal{D}_{\eta}(\mathbb{C})$, we require that $\omega \notin -\mathbb{N}_{0}$, in part \textnormal{(}ii\textnormal{)}.  The only change in the resulting distributional fractional Taylor's formula (with or without error term, and at level $k=0$) is that the resulting Taylor's formula reads as follows: 
\\
\indent For the appropriate test functions $\phi$---namely, $\phi \in \mathbf{S}(0,+\infty)$ in the counterpart of part \textnormal{(}i\textnormal{)} of Theorem~\ref{thm:4.4}, and $\phi \in \mathbf{D}(A,+\infty)$, where $A>0$ is the strong languidity constant occurring in the condition \textnormal{\textbf{L3}} of Definition~\ref{dfn:2.8}, in the counterpart of part \textnormal{(}ii\textnormal{)} of Theorem~\ref{thm:4.4}, respectively---we have that (see also Remark~\ref{rmk:4.5} below)

\begin{equation}
\label{eqn:4.8}
\tag{4.8}
\begin{aligned}
\langle \eta, \phi \rangle&=\sum_{\omega \in \mathcal{D}_{\eta}(W)} \textrm{res}(\zeta_{\eta}(s) \widetilde{\phi}(s);\omega)+\langle R_{\eta},\phi \rangle
\\
&=\sum_{\omega \in D_{\eta}(W)} \Gamma(\omega)  \sum_{j=1}^{m_{\omega}}  \frac{a_{j} }{(j-1)!} \langle Y_{\omega;m_{\omega},j}, \phi \rangle +\langle R_{\eta},\phi \rangle,
\end{aligned}
\end{equation}
where $m_{\omega}$ denotes the multiplicity of $\omega \in \mathcal{D}_{\eta}(W)$, in the counterpart of part \textnormal{(}i\textnormal{)} (respectively, of $\omega \in \mathcal{D}(\mathbb{C})$, in the counterpart of part \textnormal{(}ii\textnormal{)}) and, for $j=1,...,m_{\omega}$, $a_{j}=a_{j,\omega}$ is the $j^{\mathrm{th}}$ coefficient of the principal part of the Laurent expansion of $\zeta_{\eta}$ at $\omega$.  Furthermore, in the case corresponding to part \textnormal{(}i\textnormal{)} of Theorem~\ref{thm:2.3}; applied at level $k=0$, the distributional error term $R_{\eta}=R_{\eta}^{[0]} \in \mathbf{S}'(0,+\infty)$ is given by a contour integral and is estimated (distributionally, as $x \rightarrow \infty$) exactly as in part \textnormal{(}i\textnormal{)} of Theorem~\ref{thm:2.3} where we set $k=0$.  Recall that $\mathrm{P}_{\eta}^{[0]}=\eta$.    
\\
\indent Similarly, in the case corresponding to part \textnormal{(}ii\textnormal{)} of Theorem~\ref{thm:2.3}, applied at level $k=0$, we have that $W:=\mathbb{C}$, and there is no error term present; i.e., $R_{\eta}=R_{\eta}^{[0]} = 0$ in $\mathbf{D}'(A,+\infty)$. More specifically, the corresponding \textnormal{distributional exact fractal Taylor's formula (\textit{or} distributional fractal Taylor series)} becomes the following double sum involving the tempered distributions $Y_{\omega;m,j} \in \mathbf{D}'(A,+\infty)$ (see the second line of the displayed Equation~(\ref{eqn:4.8prime}) just below): for any test function $\phi \in \mathbf{D}(A,+\infty)$, we have that (see also Remark~\ref{rmk:4.5} just below)
\begin{equation}
\label{eqn:4.8prime}
\tag{4.8$^{\prime}$}
\begin{aligned}
\langle \eta, \phi \rangle &=\sum_{\omega \in \mathcal{D}_{\eta}(\mathbb{C}
)} \mathrm{res}(\zeta_{\eta}(s) \widetilde{\phi}(s);\omega) 
\\
&=\sum_{\omega \in D_{\eta}(\mathbb{C})} \left( \sum_{j=1}^{m_{\omega}} \frac{a_{j} \Gamma(\omega)}{(j-1)!} \langle Y_{\omega;m_{\omega},j}, \phi \rangle \right).
\end{aligned}
\end{equation}

\end{thm}

\begin{remark}
\label{rmk:4.5}
The counterparts of part \textnormal{(}a\textnormal{)} and of part \textnormal{(}b\textnormal{)} of Remark~\ref{rmk:4.1} above hold for Theorem~\ref{thm:4.5} (instead of for Theorem~\ref{thm:4.4}); hence, to avoid unnecessary repetitions, we will not state it in detail here.) The only differences are that Theorem~\ref{thm:4.4} should be replaced with Theorem~\ref{thm:4.5} throughout and that in the counterpart of part \textnormal{(}a\textnormal{)} of Remark~\ref{rmk:4.1}, the conclusion of part \textnormal{(}i\textnormal{)} of Theorem~\ref{thm:4.5} should be written as the following distributional identity with error term between tempered distributions (i.e., in $\mathbf{S}'(0,+\infty)$), which is a briefer restatement of Equation~(\ref{eqn:4.8}) (in the spirit of Remark~\ref{rmk:2.1} following Theorem~\ref{thm:2.3}):
\begin{equation}
\label{eqn:4.9}
\tag{4.9}
\eta=\sum_{\omega \in \mathcal{D}_{\eta}(W)} \Gamma(\omega) \sum_{j=1}^{m_{\omega}} \frac{a_{j}}{(j-1)!} Y_{\omega;m_{\omega},j} + R_{\eta},
\end{equation}
where each $Y_{\omega;m_{\omega},j} \in \mathbf{S}'(0,+\infty)$ and the distributional error term $R_{\eta}=R_{\eta}^{[0]} \in \mathbf{S}'(0,+\infty)$, satisfies the distributional error estimate given by Equation~(\ref{4.6}) or by Equation~(\ref{4.7}) (depending on the hypotheses on the screen $S$), whereas in the counterpart of part \textnormal{(}b\textnormal{)} of Remark~\ref{rmk:4.1}, the conclusion of part \textnormal{(}ii\textnormal{)} of Theorem~\ref{thm:4.5}, namely, Equation~(\ref{eqn:4.8prime}), is rewritten as the following exact distributional identity between Schwartz distributions (i.e., in $\mathbf{D}'(A,+\infty)$), which replaces Equation~(\ref{eqn:4.8prime}) (and still in the spirit of Remark~\ref{rmk:2.1}):
\begin{equation}
\label{4.9prime}
\tag{4.9$^{\prime}$}
\eta=\sum_{\omega \in \mathcal{D}_{\eta}(\mathbb{C})} \Gamma(\omega) \sum_{j=1}^{m_{\omega}} \frac{a_{j} \Gamma(\omega)}{(j-1)!} Y_{\omega;m_{\omega},j},
\end{equation}
where each $Y_{\omega;m_{\omega},j} \in \mathbf{D}'(A,+\infty)$ and $\langle Y_{\omega;m_{\omega},j},\phi(x) \rangle:=\langle Y_{\omega},\phi(x) \ln^{j-1}x  \rangle$, for all $\phi \in \mathbf{D}'(A,+\infty)$, with $Y_{\omega}=D^{-\omega}\delta$, the fractional derivative of the Dirac distribution $\delta \in \mathbf{D}'(A,+\infty)$ of complex order $-\omega$.  
\end{remark}

\begin{remark}
\label{rmk:4.6}
Observe that both in Theorem~\ref{thm:4.5} and in Theorem~\ref{thm:4.4}, we made the assumption that the (possibly visible) complex dimensions of $\eta$ belong to $\mathbb{C} \setminus (-\mathbb{N}_{0})$, because of Theorem~\ref{thm:4.2} and Corollary~\ref{cor:4.1}.  

\end{remark}

\appendix

\section{Appendix A}
\label{Appendix A}
\noindent The goal of this appendix is to show that the Mellin transform $\widetilde{\phi}=\widetilde{\phi}(s)$ of a test function $\phi=\phi(x) \in \mathbf{S}(0,+\infty)$ is an entire function (i.e. is holomorphic in all of $\mathbb{C}$).   Since, for any $A>0$, $\mathbf{D}(A,+\infty) \subseteq \mathbf{S}(0,+\infty)$, with the first inclusion being the obvious embedding corresponding to the extension by $0$ on $(0,A)$, of $\phi \in D(A,+\infty)$, it follows that $\widetilde{\phi}$ is also entire, for any $\phi \in \mathbf{D}(A,+\infty)$, which is the other case used in Sections~\ref{section:3} and~\ref{section:4}. Note that the same argument would apply if $A>0$ were replaced with any $B \geq 0$.  
\\
\indent We begin by recalling an important and well-known corollary to the Lebesgue dominated convergence theorem combined with the Cauchy integral formula (or with Morera's theorem).  See, for example, Theorem 2.1.47, page 82, in \cite{LapidusFZF}, along with the relevant references therein. This theorem gives sufficient conditions for a function, written as an integral of another function depending holomorphically on a complex parameter, to be holomorphic on a given domain of $\mathbb{C}$.  This will be key to prove that the Mellin transform of a Schwartz function is entire (i.e., holomorphic on all of $\mathbb{C}$), given as Theorem~\ref{thm2:appendix} below.       
\begin{thm}
\label{thm1:appendix}
\noindent Define $H(s)=\int_{0}^{+\infty} f(s,t)dt$, where $s \in U$, with $U$ a given domain of $\mathbb{C}$.  Furthermore, suppose that $f=f(s,t)$ satisfies the following two conditions, \textnormal{(}i\textnormal{)} and \textnormal{(}ii\textnormal{)}: 
\vspace{0.5em}
\\
\indent \textnormal{(}i\textnormal{)} $f(\cdot,t)$ is holomorphic on $U$, for all $t \in (0,+\infty)$, and $f(s,\cdot)$ is measurable for every $s \in U$.
\smallskip
\mbox{}\\[0.5em] 
\indent \textnormal{(}ii\textnormal{)} For each $s_{0} \in U$, there exists $\delta>0$ and $g \in L^{1}((0,+\infty))$ such that
$$\sup_{s \in U, |s-s_{0}|<\delta}  |f(s,t)| \leq g(t),$$
\noindent for almost every $t \in (0,+\infty)$
\\
\indent Then, $H$ is holomorphic on all of $U$.  Moreover, one can interchange the derivative and the integral.  More precisely, for every $s \in U$ and every $r \in \mathbb{N}$, we have that
$$H^{(r)}(s)=\int_{0}^{+\infty} \frac{d^{r}}{ds^{r}} f(s,t) dt.$$

\end{thm}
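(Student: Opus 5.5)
The statement is a standard result: holomorphy under the integral sign, plus differentiation under the integral sign. I would prove it via Morera's theorem together with the dominated convergence theorem, and then obtain the derivative formula from the Cauchy integral formula.

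\textbf{Step 1: continuity of $H$.} Fix $s_{0} \in U$ and take $\delta>0$ and $g \in L^{1}$ as in (ii). We may shrink $\delta$ so that the closed disc $\overline{B}(s_{0},\delta) \subset U$. For $s_{n} \to s$ inside $B(s_{0},\delta)$, we have $f(s_{n},t) \to f(s,t)$ for every $t$, since $f(\cdot,t)$ is holomorphic and hence continuous. Moreover $|f(s_{n},t)| \leq g(t)$ for almost every $t$. Dominated convergence then gives $H(s_{n}) \to H(s)$, so $H$ is continuous on $B(s_{0},\delta)$. The integrals defining $H$ converge absolutely there, and measurability is supplied by (i).

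\textbf{Step 2: holomorphy via Morera.} Let $\Delta$ be any closed triangle contained in $B(s_{0},\delta)$. The function $(s,t) \mapsto f(s,t)$ is jointly measurable on $\partial\Delta \times (0,+\infty)$. This is where I expect the only genuine subtlety: a function that is measurable in $t$ and continuous in $s$ is a Carathéodory function, hence jointly measurable. Furthermore,
\[ \int_{\partial \Delta}\int_{0}^{+\infty} |f(s,t)|\,dt\,|ds| \leq \mathrm{length}(\partial\Delta)\,\|g\|_{L^{1}}<\infty. \]
Fubini therefore applies, and
\[ \oint_{\partial\Delta} H(s)\,ds=\int_{0}^{+\infty}\oint_{\partial\Delta} f(s,t)\,ds\,dt=0, \]
where the inner contour integral vanishes by Goursat's theorem for $f(\cdot,t)$. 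Since $H$ is continuous, Morera's theorem shows that $H$ is holomorphic on $B(s_{0},\delta)$. As $s_{0}$ was arbitrary, $H$ is holomorphic on $U$.

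\textbf{Step 3: derivatives under the integral.} Choose $0<\rho<\delta$ and let $C$ be the circle $|\xi-s_{0}|=\rho$. For $|s-s_{0}|<\rho/2$, the Cauchy integral formula applied to both $H$ and $f(\cdot,t)$ gives
\[ H^{(r)}(s)=\frac{r!}{2\pi i}\oint_{C}\frac{H(\xi)}{(\xi-s)^{r+1}}\,d\xi. \]
Substituting $H(\xi)=\int_{0}^{+\infty} f(\xi,t)\,dt$, we interchange the order of integration by Fubini. This is justified because the integrand is bounded by $g(t)\,(\rho/2)^{-r-1}$, which is integrable on $C \times (0,+\infty)$. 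The result is
\[ H^{(r)}(s)=\int_{0}^{+\infty}\frac{r!}{2\pi i}\oint_{C}\frac{f(\xi,t)}{(\xi-s)^{r+1}}\,d\xi\,dt=\int_{0}^{+\infty}\frac{\partial^{r}}{\partial s^{r}}f(s,t)\,dt. \]
In particular this holds at $s=s_{0}$, and hence at every point of $U$.

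The main obstacle is the measure-theoretic justification of Fubini in Steps 2 and 3, namely joint measurability. It is handled by the Carathéodory argument: $f$ is continuous in $s$ and measurable in $t$, so it can be approximated by measurable step functions in $s$. Everything else is routine dominated convergence.
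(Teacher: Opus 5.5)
Your proof is correct and follows the route the paper indicates: the paper omits the proof, describing the result only as a well-known corollary of the dominated convergence theorem combined with the Cauchy integral formula or Morera's theorem, which is your continuity, Morera, and Cauchy-formula scheme. The one technical point to check is joint measurability for Fubini, and you handle it correctly via the Carath\'eodory property.
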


\indent We omit the proof of this well-known and useful theorem. We note that, of course, Theorem~\ref{thm1:appendix} could be stated more generally, with $((0,+\infty),dt)$ being replaced with any measure space.

\begin{thm}
\label{thm2:appendix}
Suppose $\phi \in \mathbf{S}(0,+\infty)$, Then, its Mellin transform $\widetilde{\phi}=\widetilde{\phi}(s)$ (see Definition~\ref{dfn:2.9}) is defined for all $s \in \mathbb{C}$, and moreover, is holomorphic on all of $\mathbb{C}$; i.e. $\widetilde{\phi}$ is an entire function and, for all $s \in \mathbb{C}$, it is still given by the integral (in Definition~\ref{dfn:2.9}) initially defining $\widetilde{\phi}(s)$.  (See also Remark~\ref{rmk:A.1} below.)   
\end{thm}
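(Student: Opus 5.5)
The plan is to apply Theorem~\ref{thm1:appendix} with $U:=\mathbb{C}$ and $f(s,t):=\phi(t)\,t^{s-1}$, so that $H(s)=\widetilde{\phi}(s)$. Condition (i) is immediate. For every $t>0$, the map $s\mapsto t^{s-1}=e^{(s-1)\ln t}$ is entire, so $f(\cdot,t)$ is holomorphic on $\mathbb{C}$. For each fixed $s$, $f(s,\cdot)$ is continuous on $(0,+\infty)$, hence measurable. All the work therefore lies in verifying the local domination hypothesis (ii), and this is where the rapid decay of $\phi\in\mathbf{S}(0,+\infty)$ at \emph{both} ends, $0^{+}$ and $+\infty$, enters.

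Fix $s_{0}\in\mathbb{C}$ and take $\delta:=1$. For $|s-s_{0}|<1$ we have $\sigma:=\Re(s)\in(\Re(s_{0})-1,\Re(s_{0})+1)$, and $|t^{s-1}|=t^{\sigma-1}$. Hence
\[
|f(s,t)|\le |\phi(t)|\max\{t^{\Re(s_{0})-2},\,t^{\Re(s_{0})}\}\le |\phi(t)|\bigl(t^{\Re(s_{0})-2}+t^{\Re(s_{0})}\bigr)=:g(t).
\]
This uses only that $t^{\sigma-1}$ is monotone in $\sigma$, with the smaller exponent dominating for $t\le 1$ and the larger one for $t\ge1$. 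It remains to check that $g\in L^{1}(0,+\infty)$. Here I use the defining property of $\mathbf{S}(0,+\infty)$ with $p=0$, i.e.\ $\phi(t)t^{q}\to 0$ as $t\to0^{+}$ and as $t\to+\infty$ for every $q\in\mathbb{Z}$; if the case $p=0$ is not literally included in the definition, it follows by integrating $\phi'$ from $+\infty$, respectively from $0^+$. Choose an integer $q$ with $q\ge -\Re(s_{0})+2+2$. Then $|\phi(t)|t^{\Re(s_{0})-2}\le C t^{-2}$ for $t\ge1$, and the same bound holds for the other summand. For $t\le 1$, choose an integer $q'\le -\Re(s_{0})+1$; then $|\phi(t)|t^{\Re(s_{0})-2}\le C t^{-1/2}$, say. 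Continuity on compact subintervals then gives $g\in L^{1}$.

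The main (mild) obstacle is precisely the behaviour near $t=0^{+}$ for $\Re(s_{0})$ very negative, since $t^{\sigma-1}$ blows up there. This is handled only because the membership of $\phi$ in $\mathbf{S}(0,+\infty)$ involves negative powers $q\in\mathbb{Z}$, i.e.\ $\phi$ vanishes to infinite order at $0^{+}$. Once (ii) holds, Theorem~\ref{thm1:appendix} yields that $\widetilde{\phi}$ is holomorphic on all of $\mathbb{C}$, and hence entire. Moreover, the integral in Definition~\ref{dfn:2.9} converges absolutely for every $s$, because it is bounded by $\int g$, so $\widetilde{\phi}$ is still given by that integral everywhere. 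Differentiation under the integral sign is also justified, which is what is used in the proof of Theorem~\ref{thm:4.1}. The case $\phi\in\mathbf{D}(B,+\infty)$ follows since it embeds in $\mathbf{S}(0,+\infty)$ by extension by zero.
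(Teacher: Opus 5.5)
Your strategy is the paper's: apply Theorem~\ref{thm1:appendix} on $U=\mathbb{C}$ to $f(s,t)=\phi(t)t^{s-1}$, dominating $|f(s,t)|$ locally uniformly by using the vanishing of $t^{q}\phi(t)$ at $0^{+}$ and at $+\infty$ for every $q\in\mathbb{Z}$. The only difference is organizational. The paper splits the integral into $(0,1]$ and $(1,+\infty)$, picks an integer $m$ first and then a radius $\delta$ depending on it ($0<\delta<m+\Re(s_0)$, resp.\ $0<\delta<m-\Re(s_0)$), and applies Theorem~\ref{thm1:appendix} twice before adding. You fix $\delta=1$ and use a single majorant $g$ on all of $(0,+\infty)$, which is slightly more economical.

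However, the exponents you choose have the wrong sign, so as written they do not give the bounds you claim.
\begin{itemize}
\item At $+\infty$ you need an integer $q\ge\Re(s_0)+2$. Then $t^{q}|\phi(t)|\le C$ on $[1,+\infty)$, so $|\phi(t)|t^{\Re(s_0)}\le Ct^{\Re(s_0)-q}\le Ct^{-2}$. Your condition $q\ge-\Re(s_0)+4$ allows $q=-6$ when $\Re(s_0)=10$, which only yields $|\phi(t)|t^{10}\le Ct^{16}$.
\item At $0^{+}$ you need an integer $q'\le\Re(s_0)-2$. Then $t^{q'}|\phi(t)|\le C'$ on $(0,1]$, so $|\phi(t)|t^{\Re(s_0)-2}\le C't^{\Re(s_0)-2-q'}\le C'$. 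Your condition $q'\le-\Re(s_0)+1$ allows $q'=11$ when $\Re(s_0)=-10$, which only gives $|\phi(t)|\le Ct^{-11}$ and hence $|\phi(t)|t^{-12}\le Ct^{-23}$. This is not integrable, so your choice fails exactly in the case you single out as the main obstacle.
\end{itemize}
Since the defining property holds for every $q\in\mathbb{Z}$, the repair is immediate.

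Separately, drop the parenthetical claim that the case $p=0$ follows from $p\ge1$ by integrating $\phi'$. That argument only gives $\phi(t)=L+O(t^{-q})$ for some constant $L$. Indeed, $\phi\equiv1$ satisfies all the conditions with $p\ge1$, yet its Mellin integral diverges for every $s$. The condition $t^{q}\phi(t)\to0$ must be part of the definition of $\mathbf{S}$. It is part of it in Appendix~\ref{Appendix B}, where $j\in\mathbb{N}_0$, and it is exactly what the paper's proof uses.
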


\begin{proof}
\noindent We use Theorem~\ref{thm1:appendix} above with $U:=\mathbb{C}$.  All we have to do is to verify that conditions (i)--(iii) of Theorem~\ref{thm1:appendix} are satisfied.  
\\
\indent For condition (i), we note that for each fixed $x \in (0,+\infty)$, we have that $\phi(x) x^{s-1}$ is holomorphic in the complex variable $s \in \mathbb{C}$, as well as Borel measurable (since it is continuous), for all $x>0$.  
\\
\indent We next verify that condition (ii) of Theorem~\ref{thm1:appendix} is satisfied by splitting the given integral from $(0,1]$ and from $(1,+\infty)$.  Fix $s_{0} \in \mathbb{C}$ and note that (since $\phi \in \mathbf{S}$) there exists $m \in \mathbb{Z}$ such that $m > -\Re(s_{0})$ (e.g., $m=[-\Re(s_{0})]+1$, where $[y]$ denotes the integer part of $y \in \mathbb{R}$) and $\frac{\phi(x)}{x^{m}} \rightarrow 0$, as $x \rightarrow 0^{+}$; so that $\left| \frac{\phi(x)}{x^{m}} \right|$ is bounded on $(0,1]$, because $\left| \frac{\phi(x)}{x^{m}} \right|$ is continuous on $(0,+\infty)$ and hence, is also bounded on $[\varepsilon,1]$, for any $0<\varepsilon<1$.).  Therefore, we can choose $\delta$ such that $0<\delta<m+\Re(s_{0})$ and we deduce that there exists a constant $M>0$ such that, for all $x \in (0,1]$,
\begin{align*}
|\phi(x) x^{s-1}| & \leq \left( \frac{\phi(x)}{x^{m}} \right) x^{\Re(s)+m-1} \\
&\leq Mx^{\Re(s_{0})-\delta+m-1} =M \left( \frac{1}{x} \right)^{1-(m+\Re(s_{0})-\delta)},
\end{align*}
\noindent The latter dominating function is Lebesgue integrable on $(0,1)$, as desired, since $1-(m+\Re(s_{0})-\delta) <1$.  (Note that $|s-s_{0}|<\delta$ implies that $\Re(s_{0})-\delta<\Re(s)<\Re(s_{0})+\delta$.)  This establishes the counterpart of condition (ii), when $(0,+\infty)$ is replaced with $(0,1]$ in the definition of $\widetilde{\phi}(s)$. 
\\
\indent Next, we prove that the counterpart of condition (ii) is also verified when $(0,+\infty)$ is replaced with $(1,+\infty)$ in the definition of $\widetilde{\phi}(s)$, which combined with the above condition, will imply that the counterpart of condition (ii) holds for the original definition $\widetilde{\phi}(s)$.  Observe that (since $\phi \in \mathbf{S}$), there exists $m \in \mathbb{Z}$ such that $\eta> \Re(s_{0})$ (e.g., $m=[\Re(s_{0})]+1)$ and such that $x^{m} \phi(x) \rightarrow 0$, as $x \rightarrow +\infty$; so that $x^{m} \phi(x)$ is bounded on all of $(1,+\infty)$ because it is continuous on any compact interval $[1,\eta]$, with $\eta >1$). Now, choose $\delta$ such that $0 < \delta <m-\Re(s_{0})$ and (since $\Re(s)<\Re(s_{0})+\delta$ for $|s-s_{0}|<\delta$) deduce successively that, for all $x \in (1,+\infty)$, 
\begin{align*}
|\phi(x) x^{s-1}| & \leq (x^{m} |\phi(x)|) x^{\Re(s_{0})-1-m+\delta} \leq M \left( \frac{1}{x}  \right)^{1 +m-\delta-\Re(s_{0})},
\end{align*}
\noindent for some constant $M>0$.  The latter dominating function is Lebesgue integrable on $(1,+\infty)$ because, by construction, $m-\delta-\Re(s_{0})>0$.  This completes the proof of the fact that condition (ii) holds and hence (by Theorem~\ref{thm1:appendix}, applied twice), that 
$$\widetilde{\phi}(s)=\int_{(0,1]} \phi(x) x^{s-1} dx+ \int_{(1,+\infty)} \phi(x) x^{s-1} dx$$
\noindent is an entire function, as the sum of two entire functions, as desired.  
\end{proof}

\begin{remark}
\label{rmk:A.1} 
\indent Since (in light, for example, of part \textnormal{(}a\textnormal{)} of Remark~\ref{rmk:B.1}) $\mathbf{D} \subseteq \mathbf{S}$, where $\mathbf{D}=\mathbf{D}(B,+\infty)$ and $\mathbf{S}=\mathbf{S}(0,+\infty)$,  Theorem~\ref{thm2:appendix} above remains valid for any $\phi \in \mathbf{D}(B,+\infty)$, where $B \geq 0$ is arbitrary.  
    
\end{remark}

\section{Appendix B}
\label{Appendix B}
\noindent Consider the space of rapidly decreasing smooth functions on $(0,+\infty)$, as in Sections 2--4, and denoted by
\vspace{-1mm}
\begin{equation*}
\begin{aligned}
\mathbf{S}=\mathbf{S}(0,+\infty):=  \{ \phi \in C^{\infty}(0,+\infty):x^{m} \phi^{(j)}(x) \rightarrow 0, \textnormal{ as } x \rightarrow 0^{+} \textnormal{ and as }x \rightarrow +\infty, \forall j \in \mathbb{N}_{0}, \forall m \in \mathbb{Z} \}.
\end{aligned}
\end{equation*}

\indent  The goal of this appendix is to establish the following result (Theorem~\ref{thm1:appendix2}), which is used in Section~\ref{section:4} for dealing with fractal Taylor formulas with error term and fractal Taylor series associated with multiple poles; see Subsection~\ref{subsection:4.1}, along with Theorem~\ref{thm:4.5} in Subsection~\ref{subsection:4.2}.   
For notational simplicity, given a function $g$ on $(0,+\infty)$, we refer to it, indifferently, as $g$ or $g(x)$.  
\mbox{}\\[-0.7em]
\begin{thm}
\label{thm1:appendix2}
\noindent Let $\phi \in \mathbf{S}$. Then, $(\log x)^{q} \phi(x) \in \mathbf{S}$, for all $q \in \mathbb{N}_{0}$.  
\end{thm}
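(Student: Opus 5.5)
We will show directly that $\psi_q(x):=(\log x)^{q}\phi(x)$ is smooth on $(0,+\infty)$ and that, for every $j\in\mathbb{N}_0$ and every $m\in\mathbb{Z}$, $x^{m}\psi_q^{(j)}(x)\to 0$ both as $x\to0^{+}$ and as $x\to+\infty$. Smoothness is immediate, since $\log x$ is $C^\infty$ on $(0,+\infty)$ and $\phi\in C^\infty(0,+\infty)$. The whole argument then reduces to a Leibniz-rule computation combined with the elementary fact that $|\log x|^{q}$ grows more slowly than any power of $x$ (respectively $x^{-1}$) at $+\infty$ (respectively $0^{+}$).

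The first step is to compute the derivatives of $\ell_q(x):=(\log x)^q$. By induction on $i$, one shows that $\ell_q^{(i)}(x)=x^{-i}P_{q,i}(\log x)$, where $P_{q,i}$ is a polynomial of degree at most $q$ with constant coefficients. For $i=0$ we have $P_{q,0}(y)=y^q$. For the inductive step, differentiating $x^{-i}P(\log x)$ gives $x^{-i-1}\bigl(P'(\log x)-iP(\log x)\bigr)$, so $P_{q,i+1}=P_{q,i}'-iP_{q,i}$, which again has degree at most $q$.

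By the Leibniz rule,
\[
\psi_q^{(j)}(x)=\sum_{i=0}^{j}\binom{j}{i}x^{-i}P_{q,i}(\log x)\,\phi^{(j-i)}(x).
\]
It therefore suffices to show that each term, multiplied by $x^m$, tends to $0$ at both endpoints. Each such term is a finite linear combination of expressions of the form
\[
(\log x)^{r}\,x^{m-i}\phi^{(j-i)}(x), \qquad 0\le r\le q .
\]

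The key estimate is to absorb the logarithm into one extra power of $x$. As $x\to+\infty$, we use $|\log x|^{r}\le C x$ for $x\ge1$, so that
\[
|(\log x)^r x^{m-i}\phi^{(j-i)}(x)|\le C\,x^{m-i+1}|\phi^{(j-i)}(x)|\to0,
\]
because $\phi\in\mathbf{S}$ and $m-i+1\in\mathbb{Z}$. As $x\to0^{+}$, we use $|\log x|^{r}\le C x^{-1}$ on $(0,1]$, which follows from $x|\log x|^{r}\to0$ as $x\to 0^{+}$ (substitute $x=e^{-u}$, so that this becomes $u^re^{-u}\to0$). This gives the bound $C\,x^{m-i-1}|\phi^{(j-i)}(x)|\to0$, again by the definition of $\mathbf{S}$, which allows every exponent in $\mathbb{Z}$.

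The main point to handle carefully is precisely this logarithmic absorption. The definition of $\mathbf{S}$ lets every integer power of $x$ act as a weight, so the shifts by $-i$ and by $\pm1$ cost nothing. The only genuine checks are that the derivative formula for $\ell_q$ is stated correctly and that the bounds for $|\log x|^r$ hold near $0^{+}$ and near $+\infty$. Since $j$ and $m$ are arbitrary, this shows $\psi_q\in\mathbf{S}$ for every $q\in\mathbb{N}_0$, as Theorem~\ref{thm1:appendix2} claims.
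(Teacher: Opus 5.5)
Your proof is correct and follows essentially the same route as the paper. The paper shows by induction (Lemma~\ref{lem3:appendix2}) that every derivative of $(\log x)^q\phi$ is a finite sum of terms $(\log x)^{q_j}P_j(1/x)\,\phi_j^{(k_j)}(x)$, and then absorbs the logarithms into powers of $x$ using $(x\log x)^q\to0$ as $x\to0^+$ and $\bigl((\log x)/x\bigr)^q\to0$ as $x\to+\infty$ (Lemma~\ref{lem2:appendix2}); your Leibniz-rule formula $\ell_q^{(i)}=x^{-i}P_{q,i}(\log x)$, together with the bounds $|\log x|^r\le Cx$ on $[1,\infty)$ and $|\log x|^r\le Cx^{-1}$ on $(0,1]$, accomplishes exactly the same two steps, and a bit more explicitly.
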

\indent The following two corollaries are easy consequences of Theorem~\ref{thm1:appendix2} since $\mathbf{S}$ is a vector space and $\mathbf{S}$ is closed under differentiation as well as under multiplication by a polynomial in $x$ (See also part (b) of Remark~\ref{rmk:B.1} below.)  For proving Corollary~\ref{appendix2:cor1}, one also uses Lemma~\ref{lem2:appendix2}, while for proving Corollary~\ref{appendix2:cor1}, one uses the definition of the topology of $\mathbf{S}$ (see, e.g., \cite{Schwartz} or \cite{Rudin}).  
\smallskip
\begin{cor}
\label{appendix2:cor1}
Any function of the form,
$$\sum_{j=1}^{N} \alpha_{j} (\log x)^{q_{j}} Q_{j}(x)\phi_{j}^{(k_{j})}(x),$$
where $N \in \mathbb{N}$, $\alpha_{j} \in \mathbb{R}, q_{j} \in \mathbb{N}_{0}$, $Q_{j} \in \mathbb{R}[\frac{1}{x}]$ or $Q_{j} \in \mathbb{R}[x]$, $k_{j} \in \mathbb{N}_{0}$, and $\phi_{j} \in \mathbf{S}$, for $j=1,...,N$, belongs to $\mathbf{S}$.
\end{cor}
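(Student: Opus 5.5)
The plan is to reduce Corollary~\ref{appendix2:cor1} to three closure properties of $\mathbf{S}=\mathbf{S}(0,+\infty)$ and then use linearity. Since $\mathbf{S}$ is a complex (in particular, real) vector space, it suffices to treat a single term
\[
(\log x)^{q} Q(x)\phi^{(k)}(x),
\]
with $q,k\in\mathbb{N}_0$, $\phi\in\mathbf{S}$, and $Q$ a polynomial in $x$ or in $\tfrac1x$. The finite sum with real coefficients $\alpha_j$ then lies in $\mathbf{S}$ automatically.

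The single term will be handled in three steps, in this order.

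\textbf{Differentiation.} First I show $\phi^{(k)}\in\mathbf{S}$. This is immediate from the definition of $\mathbf{S}$: the defining condition $x^{m}(\phi^{(k)})^{(j)}(x)=x^m\phi^{(j+k)}(x)\to 0$, at $0^+$ and at $+\infty$, for all $j\in\mathbb{N}_0$ and $m\in\mathbb{Z}$, is already part of the hypothesis on $\phi$.

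\textbf{Multiplication by $Q$.} Next I show $Q\psi\in\mathbf{S}$ whenever $\psi\in\mathbf{S}$. By linearity it is enough to take $Q(x)=x^{p}$ with $p\in\mathbb{Z}$; this covers both $\mathbb{R}[x]$ and $\mathbb{R}[\tfrac1x]$ at once. The Leibniz rule gives
\[
(x^p\psi)^{(j)}=\sum_{i\le j}\binom{j}{i}c_{p,i}\,x^{p-i}\psi^{(j-i)},
\]
with constants $c_{p,i}$. Multiplying by $x^m$, each term becomes $x^{m+p-i}\psi^{(j-i)}$, which tends to $0$ at both ends because $m+p-i\in\mathbb{Z}$ is arbitrary in the definition of $\mathbf{S}$. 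This is precisely where allowing all $m\in\mathbb{Z}$ (not just $m\ge 0$) is essential: it makes negative powers harmless near $0$.

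\textbf{Multiplication by $(\log x)^q$.} Finally, applying Theorem~\ref{thm1:appendix2} (which is also the content of Lemma~\ref{lem2:appendix2}, as the paper indicates) to $Q\phi^{(k)}\in\mathbf{S}$ shows that $(\log x)^{q}Q\phi^{(k)}\in\mathbf{S}$.

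The main point requiring care is this last step, should one want to verify it directly rather than quote Theorem~\ref{thm1:appendix2}. Derivatives of $(\log x)^q\psi$ are finite sums of terms of the form $(\log x)^{r}x^{-i}\psi^{(l)}$ with $r\le q$. The factor $|\log x|^{r}$ is dominated by $x^{-1}+x$ on $(0,+\infty)$. Hence, after multiplying by $x^m$, each term is bounded by $x^{m-i-1}|\psi^{(l)}|+x^{m-i+1}|\psi^{(l)}|$, and both of these tend to $0$ at $0^+$ and at $+\infty$, by the previous step.

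The order of application (differentiate, then multiply by $Q$, then by the logarithm) does not matter, since each operation preserves $\mathbf{S}$. Summing over $j=1,\dots,N$ completes the argument.
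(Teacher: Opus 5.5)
Your proposal is correct and follows essentially the paper's route: linearity of $\mathbf{S}$, closure under differentiation and under multiplication by powers $x^{p}$, $p\in\mathbb{Z}$, and Theorem~\ref{thm1:appendix2} for the logarithmic factor. You also verify the closure under $x^{p}$ via the Leibniz rule, which the paper only asserts. One small slip in your optional direct check: the bound $|\log x|^{r}\le x^{-1}+x$ fails for $r\ge 3$ (e.g.\ at $x=e^{3}$, where $27>e^{3}+e^{-3}$), so you need a constant $C_r$ in front of $x^{-1}+x$; this does not affect the conclusion.
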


\indent Recall that, for any $\omega \in \mathbb{C}$, $Y_{\omega}$ is a tempered distribution on $(0,+\infty)$ (i.e., $Y_{\omega} \in \mathbf{S}')$ such that for any $m \in \mathbb{N}$ and $r=1,...,m$; $Y_{\omega;m,r}$ is given by the linear functional $\phi \mapsto \langle Y_{\omega},(\log x)^{r-1} \phi(x) \rangle$, for any $\phi \in \mathbf{S}$. Note that this functional is well defined, by Theorem~\ref{thm1:appendix2}.  The following corollary states that $Y_{\omega;m,r}$ is continuous on $\mathbf{S}$ (i.e., $Y_{\omega;m,r} \in \mathbf{S'}$).  Note that the hypothesis that $\omega \in \mathbb{C} \setminus (-\mathbb{N}_{0})$ is made here, in light of Theorem~\ref{thm:4.2} and Definition~\ref{dfn:4.1}.    
\begin{cor}
\label{appendix2:cor2}
\noindent For any $\omega \in \mathbb{C} \setminus (-\mathbb{N}_{0})$ and $m \in \mathbb{N}$, $r \in \mathbb{N}_{0}$, $Y_{\omega;m,r}$ is a tempered distribution on $(0,+\infty)$; i.e., $Y_{\omega;m,r} \in \mathbf{S}'$.  
\end{cor}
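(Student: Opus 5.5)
The plan is to write $Y_{\omega;m,r}$ as the composition of two continuous linear maps. The first is multiplication by $(\log x)^{r-1}$, which we denote $M_{r-1}\colon \mathbf{S}\to\mathbf{S}$. The second is the tempered distribution $Y_\omega\in\mathbf{S}'$. In other words, $Y_{\omega;m,r}=Y_\omega\circ M_{r-1}$. (If $r=0$ is allowed, the same argument applies with $(\log x)^{-1}$ interpreted trivially; the substantive case is $r\ge 1$, where the exponent $q=r-1\in\mathbb{N}_0$.) Theorem~\ref{thm1:appendix2} already shows that $M_q$ maps $\mathbf{S}$ into $\mathbf{S}$, so the functional is well defined and clearly linear. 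Continuity of $Y_{\omega;m,r}$ will then follow from continuity of $Y_\omega$ on $\mathbf{S}$ (Definition~\ref{dfn:3.1}, recalled just before the statement) together with continuity of $M_q$.

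The main step is to prove that $M_q$ is continuous for the Fréchet topology of $\mathbf{S}$. That topology is given by the seminorms $p_{j,n}(\phi)=\sup_{x>0}|x^n\phi^{(j)}(x)|$ with $j\in\mathbb{N}_0$ and $n\in\mathbb{Z}$. Applying Leibniz's rule, $\frac{d^j}{dx^j}\big[(\log x)^q\phi\big]$ is a finite sum of terms of the form $c\,(\log x)^{q-i}x^{-\ell}\phi^{(j-\ell)}(x)$, with $0\le i\le q$ and $0\le\ell\le j$. The factor $x^{-\ell}$ comes from differentiating powers of $\log x$ (this is the structure recorded in Corollary~\ref{appendix2:cor1}).

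Next we bound the logarithm by powers of $x$: $|\log x|^{p}\le C_p(x+x^{-1})$ for all $x>0$, for each $p\ge0$. Using this, we obtain
\[
p_{j,n}\big((\log x)^q\phi\big)\le C\sum_{\ell=0}^{j}\big(p_{j-\ell,\,n-\ell+1}(\phi)+p_{j-\ell,\,n-\ell-1}(\phi)\big),
\]
where $C$ depends only on $j,n,q$. This is a finite combination of seminorms of $\phi$, so $M_q$ is continuous.

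Finally, since $Y_\omega\in\mathbf{S}'$, there exist a constant $C'$ and finitely many seminorms such that $|\langle Y_\omega,\psi\rangle|\le C'\max p_{j_i,n_i}(\psi)$. We apply this with $\psi=M_{r-1}\phi$ and insert the bound above. This gives an estimate of $|\langle Y_{\omega;m,r},\phi\rangle|$ by finitely many seminorms of $\phi$, which is exactly the statement $Y_{\omega;m,r}\in\mathbf{S}'$.

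The only real obstacle is the bookkeeping in the Leibniz expansion, in particular keeping track of the powers of $x$ generated by the derivatives of $\log x$. The elementary inequality bounding $|\log x|^p$ by $x+x^{-1}$ absorbs both of these difficulties.
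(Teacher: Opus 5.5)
Your proof is correct and follows the paper's route: the paper only remarks that this corollary follows from Theorem~\ref{thm1:appendix2} together with the definition of the topology of $\mathbf{S}$. Your Leibniz expansion, combined with the bound $|\log x|^{p}\le C_{p}(x+x^{-1})$, is exactly the omitted check that multiplication by $(\log x)^{r-1}$ is continuous on $\mathbf{S}$. One correction: for $r=0$ the argument does not ``apply trivially'', because $(\log x)^{-1}\phi$ is singular at $x=1$ and is generally not in $\mathbf{S}$, so the statement should be read with $1\le r\le m$, as in Definition~\ref{dfn:4.1}.
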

\smallskip
\indent We next proceed to provide the proof of Theorem~\ref{thm1:appendix2}, which follows from several lemmas.  First, we introduce the following helpful notation:
\begin{equation*}
\begin{aligned}
\hspace{5mm} \mathbf{\Sigma}=\mathbf{\Sigma}(0,+\infty)=
\\
&\hspace{-18mm} \{ \phi \in C^{\infty}(0,+\infty):x^{m} \phi(x) \rightarrow 0, \text{ as } x \rightarrow 0^{+} \text{ and as } x \rightarrow +\infty, \forall m \in \mathbb{Z} \}.
\end{aligned}
\end{equation*}
\indent Note that, in light of the above definition of $\mathbf{S}$, we have that
$$\mathbf{S}=\mathbf{S}(0,+\infty)=\{ \phi \in \mathbf{\Sigma}: \phi^{(r)} \in \mathbf{\Sigma}, \forall r \in \mathbb{N}_{0} \}.$$

\begin{lem}
\label{lem1:appendix2}
\indent \textnormal{(}i\textnormal{)} If $\phi \in \mathbf{\Sigma}$ then $x^{l} \phi(x) \in \mathbf{\Sigma}$, for all $l \in \mathbb{Z}$.  
\par\vspace{5pt}
\indent \textnormal{(}ii\textnormal{)} Hence, if $\phi \in \mathbf{\Sigma}$, then $P \left( \frac{1}{x} \right) \phi(x) \in \mathbf{\Sigma}$ and $p(x) \phi(x) \in \mathbf{\Sigma}$, for any polynomial $P$ (i.e., for any $P \in \mathbb{R}[x]$).
\end{lem}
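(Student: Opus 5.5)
Part (i) reduces to checking the definition of $\mathbf{\Sigma}$ directly. Fix $\phi \in \mathbf{\Sigma}$ and $l \in \mathbb{Z}$, and put $\psi(x) := x^{l}\phi(x)$. The function $x \mapsto x^{l}$ is $C^{\infty}$ on $(0,+\infty)$ for every integer $l$, including negative ones, because $x>0$ keeps us away from the origin. So $\psi \in C^{\infty}(0,+\infty)$ as a product of smooth functions. Next take an arbitrary $m \in \mathbb{Z}$. Then $x^{m}\psi(x) = x^{m+l}\phi(x)$, and $m+l$ is again an integer. By the defining property of $\mathbf{\Sigma}$ applied with exponent $m+l$, we get $x^{m+l}\phi(x) \to 0$ both as $x \to 0^{+}$ and as $x \to +\infty$. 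Since $m$ was arbitrary, $\psi \in \mathbf{\Sigma}$. The whole point is that $\mathbb{Z}$ is closed under addition, so a shift of the exponent by $l$ costs nothing.

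Part (ii) follows from part (i) together with the fact that $\mathbf{\Sigma}$ is a vector space. That fact is immediate: the limit conditions in the definition are linear in $\phi$, and sums and scalar multiples of smooth functions are smooth. Write $P(y) = \sum_{i=0}^{d} c_{i} y^{i}$. Then
\[
P\!\left(\tfrac{1}{x}\right)\phi(x) = \sum_{i=0}^{d} c_{i}\, x^{-i}\phi(x),
\]
and each summand lies in $\mathbf{\Sigma}$ by part (i) with $l = -i$. Hence the finite linear combination lies in $\mathbf{\Sigma}$. The case $p(x)\phi(x) = \sum_{i} c_{i} x^{i}\phi(x)$ is identical, now using part (i) with $l = i \geq 0$.

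No step is a real obstacle. The only points to state explicitly are that negative integer powers are smooth on the open half-line, and that $\mathbf{\Sigma}$ is closed under linear combinations. The same argument works verbatim for complex coefficients, should one want $P \in \mathbb{C}[x]$.
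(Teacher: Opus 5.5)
Your proposal is correct and follows essentially the same route as the paper. Part (i) is the identity $x^{m}(x^{l}\phi(x)) = x^{m+l}\phi(x)$ with $m+l\in\mathbb{Z}$, and part (ii) follows from (i) because $\mathbf{\Sigma}$ is a vector space; you additionally check that $x^{l}\phi$ is smooth, which the paper leaves implicit.
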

\smallskip
\begin{proof}
\noindent \textnormal{(}i\textnormal{)} We simply write, for any $m,l \in \mathbb{Z}$, $x^{m} (x^{l} \phi(x))=x^{l+m} \phi(x)$ and note that (since $m+l \in \mathbb{Z}$), the latter expression tends to 0, as $x \rightarrow +\infty$ or as $x \rightarrow 0^{+}$, respectively.  Hence, $x^{l} \phi(x) \in \mathbf{\Sigma}$, as claimed.  
\mbox{}\\[0.5em]
\smallskip
\indent \textnormal{(}ii\textnormal{)} This follows at once from part (i) since $\mathbf{\Sigma}$ is clearly a vector space.  
\end{proof}
\smallskip
\begin{lem}
\label{lem2:appendix2}
\textnormal{(}i\textnormal{)} If $\phi \in \mathbf{\Sigma}$, then $(\log(x))^{q} \phi(x) \in \mathbf{\Sigma}$, for all $q \in \mathbb{N}_{0}$.
\mbox{}\\[0.5em]
\smallskip
\indent \textnormal{(}ii\textnormal{)} Hence, any function of the form
$$\sum_{j=1}^{N} (\log(x))^{q_{j}} P_{j} \left( \frac{1}{x} \right) \phi_{j}^{(k_{j})}(x),$$
where $N \in \mathbb{N}, q_{j} \in \mathbb{N}_{0}, P_{j} \in \mathbb{R}[x]$ (i.e. $P_{j}$ is a polynomial), and $\phi_{j} \in \mathbf{S}$, for $j=1,...,N$, belongs to $\mathbf{\Sigma}$.  
\end{lem}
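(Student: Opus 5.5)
For part (i), the plan is to reduce to the single-factor estimate
\[
\lim x^{m}(\log x)^{q}\phi(x)=0 \quad \text{as } x\to 0^{+} \text{ and as } x\to+\infty, \qquad \text{for every } m\in\mathbb{Z}.
\]
Smoothness of $(\log x)^{q}\phi(x)$ on $(0,+\infty)$ is immediate, since both factors are $C^{\infty}$ there.

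Near $+\infty$, we use the elementary bound $|\log x|\le x$ for $x\ge 1$. This gives
\[
|x^{m}(\log x)^{q}\phi(x)|\le x^{m+q}|\phi(x)|,
\]
and the right-hand side tends to $0$ because $\phi\in\mathbf{\Sigma}$ and $m+q\in\mathbb{Z}$.

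Near $0^{+}$, we use instead $|\log x|\le x^{-1}$ for $0<x\le 1$, which follows from $\log(1/x)\le 1/x$. This gives
\[
|x^{m}(\log x)^{q}\phi(x)|\le x^{m-q}|\phi(x)|,
\]
which tends to $0$ by the definition of $\mathbf{\Sigma}$ with the integer $m-q$. In effect, this is Lemma~\ref{lem1:appendix2}(i) applied with $l=\pm q$, combined with a comparison.

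For part (ii), we first note that each $\phi_j\in\mathbf{S}$ gives $\phi_j^{(k_j)}\in\mathbf{\Sigma}$, directly from the description $\mathbf{S}=\{\phi\in\mathbf{\Sigma}:\phi^{(r)}\in\mathbf{\Sigma}\ \forall r\}$. Next, $P_j(1/x)\phi_j^{(k_j)}\in\mathbf{\Sigma}$ by Lemma~\ref{lem1:appendix2}(ii). Multiplying by $(\log x)^{q_j}$ keeps the term in $\mathbf{\Sigma}$ by part (i). Finally, since $\mathbf{\Sigma}$ is a vector space, the finite sum lies in $\mathbf{\Sigma}$.

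There is no real obstacle here. The only point needing care is handling the two endpoints separately with the correct sign of the power of $x$ that dominates $\log x$; the rest is bookkeeping.
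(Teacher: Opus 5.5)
Your proof is correct and is essentially the paper's argument. The paper writes $x^{m}(\log x)^{q}\phi(x)=(x^{m-q}\phi(x))(x\log x)^{q}$ near $0^{+}$ and $x^{m}(\log x)^{q}\phi(x)=(x^{m+q}\phi(x))\bigl(\tfrac{\log x}{x}\bigr)^{q}$ near $+\infty$; this is the same shift of the exponent by $\mp q$ that your bounds $|\log x|\le x^{-1}$ on $(0,1]$ and $|\log x|\le x$ on $[1,+\infty)$ achieve, except that you use boundedness of $x\log x$ and $\log x/x$ where the paper uses that they tend to $0$. Part (ii) is assembled exactly as you do it, from the description of $\mathbf{S}$ via $\mathbf{\Sigma}$, Lemma~\ref{lem1:appendix2}(ii), part (i), and the vector-space structure of $\mathbf{\Sigma}$.
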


\begin{proof}
(i) We write (for any $m \in \mathbb{Z})$
$$x^{m} ((\log x)^{q} \phi(x))=(x^{m-q} \phi(x)) (x \log x)^{q}.$$ 
Since $\phi \in \mathbf{\Sigma}$ and $x \log x \rightarrow 0$, as $x \rightarrow 0^{+}$, we then deduce that the last displayed expression tends to 0, as $x \rightarrow 0^{+}$.  Similarly, we write (also for any $m \in \mathbb{Z}$)
$$x^{m} ((\log(x))^{q} \phi(x))=(x^{m+q} \phi(x)) \left( \frac{\log x}{x} \right)^{q}.$$
\noindent Since $\phi \in \mathbf{\Sigma}$ and $\left( \frac{\log x}{x} \right)^{q} \rightarrow 0$, as $x \rightarrow +\infty$, we then deduce that the last displayed expression tends to $0$, as $x \rightarrow +\infty$.  It follows that $(\log x)^{q} \phi(x) \in \mathbf{\Sigma}$, as desired.
\smallskip
\mbox{}\\[0.5em]
\indent (ii) This follows at once from part (i) of the present lemma combined with part (ii) of Lemma~\ref{lem1:appendix2} (and the definition of $\mathbf{S}$ in terms of $\mathbf{\Sigma}$ given above).
\end{proof}
\smallskip
\indent Our last lemma could be stated  more specifically, but that would be unnecessary for the proof of Theorem~\ref{thm1:appendix2} given below.
\smallskip
\begin{lem}
\label{lem3:appendix2}
\noindent Let $\phi \in \mathbf{S}$.  Then, for any $r \in \mathbb{N}_{0}$, its $r^{\text{th}}$ derivative $\phi^{(r)}=\phi^{(r)}(x)$ is of the form indicated in part (ii) of Lemma~\ref{lem2:appendix2} on the right-hand side of the displayed equation.  
\end{lem}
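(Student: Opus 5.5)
We argue by induction on $r \in \mathbb{N}_{0}$. The claim is that $\phi^{(r)}$ can be written as a finite sum $\sum_{j=1}^{N} (\log x)^{q_j} P_j\left(\frac{1}{x}\right) \phi_j^{(k_j)}(x)$ with $\phi_j \in \mathbf{S}$, $q_j \in \mathbb{N}_0$ and $P_j \in \mathbb{R}[x]$. For $r=0$ this is immediate: take $N=1$, $q_1=0$, $P_1\equiv 1$, $\phi_1=\phi$ and $k_1=0$.

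For the general case there is an even more direct route, and we will take it as the primary argument. For every $r \in \mathbb{N}_0$, the function $\phi^{(r)}$ is itself of the required form with $N=1$, $q_1=0$, $P_1 \equiv 1$, $\phi_1=\phi$ and $k_1=r$. This single choice already proves the lemma as literally stated, since Lemma~\ref{lem2:appendix2}(ii) allows arbitrary derivatives $\phi_j^{(k_j)}$ of functions $\phi_j \in \mathbf{S}$. Combined with Lemma~\ref{lem2:appendix2}(ii), this shows that $\phi^{(r)} \in \mathbf{\Sigma}$.

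The lemma is presumably meant to be used for the derivatives of $(\log x)^q\phi(x)$ in the proof of Theorem~\ref{thm1:appendix2}, so we also record the computation that gives closure of the displayed class under differentiation. This is the real content. Differentiating a single term with the product rule gives
\[
\frac{d}{dx}\Big[(\log x)^{q} P\left(\tfrac{1}{x}\right)\phi^{(k)}(x)\Big] = q(\log x)^{q-1}\tfrac{1}{x}P\left(\tfrac{1}{x}\right)\phi^{(k)} - (\log x)^q \tfrac{1}{x^2}P'\left(\tfrac{1}{x}\right)\phi^{(k)} + (\log x)^q P\left(\tfrac{1}{x}\right)\phi^{(k+1)}.
\]
Each of the three terms on the right has the same shape:
\begin{itemize}
\item the polynomials $yP(y)$ and $y^2P'(y)$, evaluated at $y=1/x$, lie in $\mathbb{R}[x]$ after the substitution;
\item the exponent $q-1$ is in $\mathbb{N}_0$ whenever $q \geq 1$, and the first term vanishes when $q=0$;
\item the derivative order rises to $k+1$, and $\phi$ remains in $\mathbf{S}$.
\end{itemize}
So the class is closed under $\frac{d}{dx}$. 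Applying this closure $r$ times, starting from $(\log x)^q\phi$, shows that every derivative of $(\log x)^q\phi$ has the displayed form. Lemma~\ref{lem2:appendix2}(ii) then places each such derivative in $\mathbf{\Sigma}$, which yields Theorem~\ref{thm1:appendix2}.

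The only point needing care is bookkeeping. In the first term, the case $q=0$ must be handled because the factor $(\log x)^{q-1}$ would otherwise carry a negative exponent, but that term has coefficient $q=0$ and drops out. We must also check that the factors of $1/x$ are absorbed into polynomials in $1/x$, which holds because $\mathbb{R}[1/x]$ is closed under multiplication by $1/x$. No analytic estimates are needed here, since all the decay is already handled in Lemmas~\ref{lem1:appendix2} and~\ref{lem2:appendix2}.
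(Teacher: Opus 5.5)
Your proof is correct, but your primary argument takes a different and shorter route than the paper's. The paper argues by induction on $r$: it assumes $\phi^{(r)}=\sum_{j}(\log x)^{q_j}P_j(\frac{1}{x})\phi_j^{(r_j)}$ and differentiates term by term with the product rule. You instead note that $\phi^{(r)}$ already has the displayed form with $N=1$, $q_1=0$, $P_1\equiv 1$, $\phi_1=\phi$, $k_1=r$, which settles the literal statement with no computation.

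Your shortcut also makes visible that the literal statement is too weak for its intended use. It only recovers $\phi^{(r)}\in\mathbf{\Sigma}$, which is already part of the hypothesis $\phi\in\mathbf{S}$. The proof of Theorem~\ref{thm1:appendix2}, however, needs the derivatives of $(\log x)^{q}\phi$ to lie in $\mathbf{\Sigma}$.

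The paper's inductive step is exactly your closure computation, and it is what actually supplies this. It uses only the \emph{form} of $\phi^{(r)}$, so it shows that the whole class is stable under $\frac{d}{dx}$. Your explicit framing, applying the closure $r$ times starting from $(\log x)^{q}\phi$, states what the paper leaves implicit: its one-line proof of Theorem~\ref{thm1:appendix2} says only that one must show $\phi^{(r)}\in\mathbf{\Sigma}$.

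Finally, your product rule correctly produces $-\frac{1}{x^{2}}P'(\frac{1}{x})$. The paper instead writes $-\frac{1}{x^{2}}P_j(\frac{1}{x})$ and sets $R_j(u):=-u^{2}P_j(u)$. The missing prime is harmless, since $-u^{2}P_j'(u)$ is still a polynomial.
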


\begin{proof}
\noindent We prove this lemma by induction on $r \in \mathbb{N}_{0}$.  The statement of the lemma is obvious for $r=0$ since $\phi^{(0)}=\phi$.  Next, we assume that it is true for $r \in \mathbb{N}_{0}$ and show that it also holds for $r+1$.  By the induction hypothesis, we have that, for all $x> 0$.
$$\phi^{(r)}(x)=\sum_{j=1}^{N} (\log x)^{q_{j}} P_{j} \left( \frac{1}{x} \right) \phi^{(r_{j})}_{j}(x),$$
where $N \in \mathbb{N}_{0}$, $q_{j} \in \mathbb{N}_{0}$, $P_{j} \in \mathbb{R}[x]$, and $\phi_{j} \in \mathbf{S}$, for all $j=1,...,N$.  We now differentiate the above expression by using the (extended) product rule to obtain that (still for all $x>0$),
\\
\begin{align*}
& \phi^{(r+1)}(x) =\sum_{j=1}^{N} \left( q_{j} (\log x)^{q_{j}-1} 
\frac{1}{x} P_{j} \left( \frac{1}{x}\right) \phi_{j}^{(r_{j})}(x) \right. 
\\
& \hspace{10mm}+(\log x)^{q_{j}} \left( \frac{-1}{x^{2}} \right) P_{j} \left( \frac{1}{x} \right) \phi_{j}^{(r_{j})}(x) 
\\
&\hspace{9.5mm} \left.+(\log x)^{q_{j}} P_{j} \left( \frac{1}{x} \right) \phi_{j}^{(r_{j}+1)}(x) \right)
\\
& \hspace{10mm}=\sum_{j=1}^{N} \left( (\log x)^{q_{j}-1} Q_{j} \left( \frac{1}{x} \right) \phi_{j}^{(r_{j})}(x) \right.
\\ & \hspace{10mm} \left.+( \log x)^{q_{j}} R_{j} \left( \frac{1}{x} \right) \phi_{j}^{(r_{j})}(x) \right.
\\ & \hspace{10mm} \left.+(\log x)^{q_{j}} P_{j} \left( \frac{1}{x} \right) \phi_{j}^{(r_{j}+1)}(x) \right),
\end{align*}
where the polynomials $Q_{j}, R_{j} \in \mathbb{R}[x]$ are respectively given by $Q_{j}(u):=q_{j} u P_{j}(u)$ and $R_{j}(u):=-u^{2} P_{j}(u)$, for all $u>0$.  Therefore, the above expression for $\phi^{(r+1)}(x)$ is clearly of the required form.  (Note that if $q_{j}=0$, then $(\log x)^{q_{j}}=1$ and hence, $(\log x)^{q_{j}-1}$ should then be interpreted as being identically equal to zero.) This concludes the proof by induction of the lemma.  
\end{proof}
\indent We are now ready to give the proof of Theorem~\ref{thm1:appendix2}.  
\\

\noindent Proof \textit{of Theorem~\ref{thm1:appendix2}}. Given $\phi \in \mathbf{S}$ and $r \in \mathbb{N}_{0}$ arbitrary, we must show that $\phi^{(r)} \in \mathbf{\Sigma}$, but this is obvious in light of Lemma~\ref{lem3:appendix2} just above combined with part (ii) of Lemma~\ref{lem2:appendix2}.  \hfill $\square$
\\

\makeatletter
\def\@currentlabel{B.1}
\makeatother
\label{rem:B1}

\noindent \label{rmk:B.1}Remark B.1.  (a) All of the results of this appendix (as well as of Appendix~\ref{Appendix A}) remain valid if $\mathbf{S}=\mathbf{S}(0,+\infty)$ is replaced with $\mathbf{D}=\mathbf{D}(B,+\infty)$, the space of smooth functions with compact support on $(B,+\infty)$, for some arbitrary real number $B \geq 0$---and hence also if $\mathbf{S}'=\mathbf{S}'(0,+\infty)$, the space of tempered distributions on $(0,+\infty)$ is replaced with $\mathbf{D}'=\mathbf{D}'(B,+\infty)$, the space of Schwartz distributions on $(B,+\infty)$.  One uses, in particular, the fact that any compact subinterval $[\alpha,\beta]$ of $(B,+\infty)$ is such that $0 \leq B<\alpha<\beta<\infty$ and hence, that, by continuity, for any polynomial $P$, the function $P \left( \frac{1}{x} \right)$ is uniformly bounded on $[\alpha,\beta]$.
\smallskip
\\
\indent Actually, in order to prove the counterpart of Theorem~\ref{thm1:appendix2} for $\mathbf{D}=\mathbf{D}(B,+\infty)$, where $B \geq 0$, one can simply proceed as follows:
\\
\indent Let $\phi \in \mathbf{D}$.  We want to show that $\Psi \in \mathbf{D}$, where $\Psi(x):=(\log x)^{q} \phi(x)$, for all $x>0$.  Since, clearly, $\mathbf{D} \subseteq \mathbf{S}$, we also have that $\phi \in \mathbf{S}$ and we thus deduce from Theorem~\ref{thm1:appendix2} applied to $\phi \in \mathbf{S}$ that $\Psi \in \mathbf{S}$.  But, since $\phi \in \mathbf{D}$, $\Psi$ is clearly also in $\mathbf{D}$ (i.e., it is smooth and has compact support on $(B,+\infty)$).  We conclude that if $\phi \in \mathbf{D}$, then $\Psi \in \mathbf{D}$, which is the desired counterpart of Theorem~\ref{thm1:appendix2} for $\mathbf{D}=\mathbf{D}(B,+\infty)$.
\\
\indent Also recall that since the embedding of $\mathbf{D}$ into $\mathbf{S}$ is continuous (for the natural topologies of $\mathbf{D}$ and $\mathbf{S}$, see, e.g., \cite{Schwartz} or \cite{Rudin}), it follows that $\mathbf{S}'$ is continuously embedded into $\mathbf{D}'$ (also for the natural topologies of $\mathbf{S}'$ and $\mathbf{D}'$, see \textit{ibid}); in particular, we have that $\mathbf{S}' \subseteq \mathbf{D}'$.  
\medskip
\\
\indent (b) If $\mathbf{S}=\mathbf{S}(0,+\infty)$ (respectively $\mathbf{D}=\mathbf{D}(B,+\infty)$, for any $B \geq 0$) is viewed as a complex (instead of as a real) vector space of test functions, then the constants $\alpha_{j}$ (in Corollary~\ref{appendix2:cor1}) can be chosen in $\mathbb{C}$ and, similarly, the polynomials $P_{j}$ (in Lemmas~\ref{lem1:appendix2} and ~\ref{lem2:appendix2}) should be taken in $\mathbb{C}[x]$.  

\phantomsection


\addcontentsline{toc}{section}{References}

\begin{thebibliography}{9}

\bibitem[Bar81]{Bar}
K. Barner, On A. Weil’s explicit formula, \textit{J. Reine Angew. Math.} \textbf{323} (1981), 139–152.

\bibitem[Bou28]{Bou}
G. Bouligand, Ensembles impropres et nombre dimensionnel, \textit{Bull. Sci. Math.} (2) \textbf{52} (1928), 320–344 and 361–376.

\bibitem[Bu04]{Bu}
J.-F Burnol, On Fourier and zeta(s), \textit{Forum Mathematicum} \textbf{16} (6) 789-840, 2004.  https://doi.org.10.1515/form.2004.16.6.789.

\bibitem[Cram19]{Cram}
H. Cram\'er, Studien \"uber die Nullstellen der Riemannschen Zetafunktion, \textit{Math. Z.} \textbf{4} (1919), 104–130.

\bibitem[Dav80]{Dav}
H. Davenport, \textit{Multiplicative Number Theory}, second edition, Graduate Texts in Mathematics, vol. 74, Springer-Verlag, New York, 1980.  https://doi.org.10.10071978-1-4757-5927-3.

\bibitem[DavLap24]{DavLap24}
C. David and M. L. Lapidus, Weierstrass fractal drums, II: Towards a fractal cohomology, \textit{Math Zeitschrift}, vol. 308, No. 2, article 35, (2024), 56 pages.  https://doi.org/10.1007/s00-209-024-03547-z.

\bibitem[DavLap25a]{DavLap}
C. David and M. L. Lapidus, Weierstrass fractal drums -- I -- A glimpse of complex dimensions, \textit{Advances in Mathematics}, vol. 481, 110545, 2025, 78 pages.  https://doi.org/10.1016/j.aim.2025.110545.

\bibitem[DavLap25b]{DavLap25b}
C. David and M. L. Lapidus,  Polyhedral neighborhoods vs. tubular neighborhoods: New insights for fractal zeta functions, \textit{Ramanujan J.}, vol. 67 No. 3, article 73 (2025),  87 pages.  https://doi.org/10.1007/s1139-024-01023-0.

\bibitem[DavLap25c]{DavLap2}
C. David and M. L. Lapidus, Understanding fractality: A polyhedral approach to the Koch curve and its complex dimensions, \textit{Asymptotic Analysis}, vol. 145, issue 1 (2026), 1447--1465.  https://doi.org/10.1177/09217134241308435.

\bibitem[deSLRobRoc13]{deSLRobRoc13}
R. de Santiago, M. L. Lapidus, S. A. Roby, and J. A. Rock, Multifractal analysis via scaling zeta functions and recursive structure of lattice strings, in: \textit{Fractal Geometry and Dynamical Systems in Pure and Applied Mathematics I: Fractals in Pure Mathematics} (D. Carfi, E. P. J. Pearse, M. L. Lapidus, and M. van Frankenhuijsen, eds.), Contemporary Mathematics, vol. 600, Amer. Math. Soc., Providence, RI, 2013, pp. 205-238.  https://doi.org.10.1090/comm/600/11930.

\bibitem[Del66]{Del}
J. Delsarte, Formules de Poisson avec reste, \textit{J. Anal. Math.} \textbf{17} (1966), 419–431.

\bibitem[Den93]{Den2}
C. Deninger, Lefschetz trace formulas and explicit formulas in analytic number theory, \textit{J. Reine Angew. Math.} \textbf{441} (1993), 1–15.

\bibitem[DenSchr95]{DenSchr}
C. Deninger and M. Schr\"oter, A distributional theoretic proof of Guinand’s functional equation for Cram\'er’s V-function and generalizations, \textit{J. London Math. Soc.} \textbf{52} (1995), 48–60.  https://doi.org.10.1112/jlms/52.1.48.

\bibitem[Edw74]{Edwards1974}
H. M. Edwards, \textit{Riemann’s Zeta Function}, Academic Press, New York, 1974.

\bibitem[Foll99]{Foll}
G. B. Folland, \textit{Real Analysis: Modern Techniques and Their Applications}, second edition, John Wiley \& Sons, Boston, 1999.

\bibitem[Gui48]{Gui1}
H. P. Guinand, A summation formula in the theory of prime numbers, \textit{Proc. London Math. Soc.} (2) \textbf{50} (1948), 107–119.

\bibitem[Gui50]{Gui2}
H. P. Guinand, Fourier reciprocities and the Riemann zeta function, \textit{Proc. London Math. Soc.} (2) \textbf{51} (1950), 401–414.

\bibitem[Had93]{Hadamard1}
J. Hadamard, \'Etude sur les propri\'et\'es des fonctions enti\`eres et en particulier d'une fonction consid\'er\'ee par Riemann, \textit{J. Math. Pures Appl.} (4) \textbf{9} (1893), 171–215. 

\bibitem[Had96]{Hadamard2}
J. Hadamard, Sur la distribution des z\'eros de la fonction $\zeta$(s) et ses cons\'equences arithm\'etiques, \textit{Bull. Soc. Math. France} \textbf{24} (1896), 199–220. 

\bibitem[Har90]{Haran1}
S. Haran, Riesz potentials and explicit sums in arithmetic, \textit{Invent. Math.} \textbf{101} (1990), 696–703.  https://doi.org.10.1007/BF01231521.

\bibitem[In92]{In}
A. E. Ingham, \textit{The Distribution of Prime Numbers}, second edition (reprinted from the 1932 edition), Cambridge Univ. Press, Cambridge, 1992.  ISBN 978-0521397896.

\bibitem[JaffMey96]{Jaffard}
S. Jaffard and Y. Meyer, Wavelet methods for pointwise regularity and local oscillations of functions, \textit{Memoirs Amer. Math. Soc.}, No. 587, \textbf{123} (1996), 1–110.  https://doi.org.10.1090/memo/0587.

\bibitem[JorLan93a]{Jorlan1}
J. Jorgenson and S. Lang, On Cramer’s theorem for general Euler products with functional equation, \textit{Math. Ann.} \textbf{297} (1993), 383–416.
https://doi.org.10.1007/13F01459509. 

\bibitem[JorLan93b]{JorLan2}
J. Jorgenson and S. Lang, \textit{Basic Analysis of Regularized Series and Products,} Lecture Notes in Math., vol. 1564, Springer Verlag, New York, 1993.  https://doi.org.10.1007/BFb0077194.

\bibitem[Lap19]{SLO}
M. L. Lapidus, An overview of complex dimensions: From fractal strings to fractal drums, and back, \textit{in: Horizons of Fractal Geometry and Complex Dimensions} (R. G. Niemeyer, E. P. J. Pearse, J. A. Rock, and T. Samuel, eds.), volume 731 of Contemporary Mathematics, pages 143–-265. Amer. Math. Soc., Providence, RI, 2019.  https://doi.org.10.1090/conm/731/14677.  

\bibitem[Lap26]{Lapidus26}
M. L. Lapidus,  \textit{From Complex Fractal Dimensions and Quantized Number Theory To Fractal Cohomology: A Tale of Oscillations, Reality, and Fractality}, World Scientific Publishing, Singapore and London, 2026.  (To appear in 2027.) 

\bibitem[LLeRoc09]{LLeRoc09}
M. L. Lapidus, J. L\'evy-V\'ehel, and J. A. Rock, Fractal strings and multifractal zeta functions, \textit{Lett. Math. Phys.} \textbf{88} (2009), nos. 1--3, 101--129.  https://doi.org.10.1007/s1105-009-0302-y.

\bibitem[LP90]{Pomerance1}
M. L. Lapidus and C. Pomerance, Fonction z\^eta de Riemann et conjecture de Weyl–Berry pour les tambours fractals, \textit{C. R. Acad. Sci. Paris S\'er. I Math}. \textbf{310} (1990), 343–-348.

\bibitem[LP93]{Pomerance2}
M. L. Lapidus and C. Pomerance, The Riemann zeta-function and the one-dimensional Weyl–Berry conjecture for fractal drums, \textit{Proc. London Math. Soc.} (3) \textbf{66} (1993), 41-–69.  https://doi.org.10.1112/plms/s3-66.1.41.

\bibitem[LPe06]{LPe06}
M. L. Lapidus and E. P. J. Pearse, A tube formula for the Koch snowflake
curve, with applications to complex dimensions, \textit{J. London Math.
Soc.} (2) \textbf{74} (2006), no. 2, 397--414.
https://doi.org.10.1112/S0024610706022988.

\bibitem[LPe10]{LPe10} 
M. L. Lapidus and E. P. J. Pearse, Tube formulas and complex dimensions of self-similar tilings, \textit{Acta Applicandae Mathematicae} \textbf{112} (2010), no. 1, 91--137.  https://doi.org.10.1007/S10440-010-9562-x.

\bibitem[LPeWi]{LPeWi} M. L. Lapidus, E. P. J. Pearse, and S. Winter, Pointwise tube formulas for fractal sprays and self-similar tilings with arbitrary generators, Advances in Mathematics \textbf{227} (2011), no. 4, 1349--1398.  https://doi.org.10.1016/j.aim.2011.03.004.

\bibitem[LR24]{LR24}
M. L. Lapidus and G. Radunovic, \textit{An Invitation to Fractal Geometry: Fractal Dimensions, Self-Similiarity, and Fractal Curves,} Graduate Studies in Mathematics, vol. 247, Amer. Math. Soc. Providence, RI, 2024. https://doi.org.10.1090/gsm/247. 

\bibitem[LRoc09]{LRoc09}
M. L. Lapidus and J. A. Rock, Towards zeta functions and complex dimensions of multifractals,\textit{Complex Var. Elliptic Equ.} \textbf{54} (2009), no. 6, 545--559.  https://doi.org.10.1080/17476930802326758.

\bibitem[LR\v Z17]{LapidusFZF}
M. L. Lapidus, G. Radunović, and D. Žubrinić, \textit{Fractal Zeta Functions and Fractal Drums: Higher Dimensional Theory of Complex Dimensions}, Springer Monographs in Mathematics, Springer, New York, 2017.  http://doi.org.10.1007/978-3-319-44706-3.  

\bibitem[LvF00]{LapidusFGNT}
M. L. Lapidus and M. van Frankenhuijsen, \textit{Fractal Geometry and Number Theory: Complex Dimensions of Fractal Strings and Zeros of Zeta Functions}, Birkhäuser Boston, Inc., Boston, MA, 2000.  https://doi.org.10.1007/978-3-4612-5314-3.

\bibitem[LvF06]{LapidusFGCD}
M. L. Lapidus and M. van Frankenhuijsen.  \textit{Fractal Geometry, Complex Dimensions and Zeta Functions: Geometry and Spectra of Fractal Strings}, Springer Monographs in Mathematics, Springer, New York, 2006.  https://doi.org.10.1007/978-0-387-35208-4.  

\bibitem[LvF13]{LapidusFGCD2}
M. L. Lapidus, M. van Frankenhuijsen. \textit{Fractal Geometry, Complex Dimensions and Zeta Functions: Geometry and Spectra of Fractal Strings}, second revised and enlarged edition (of the 2006 edition, \cite{LapidusFGCD}), Springer Monographs in Mathematics. Springer, New York, 2013.  https://doi.org.10.1007/978-1-4614-2176-4.

\bibitem[Man83]{Mandolbrot} 
B. B. Mandelbrot, \textit{The Fractal Geometry of Nature}, revised and enlarged edition (of the 1977 French edition), W. H. Freeman, New York, 1983.  ISBN-10:0716711869.  

\bibitem[Ol13a]{Ol13a} 
L. Olsen, Multifractal tubes: Multifractal Steiner formulas and explicit formulas, in: \textit{Fractal Geometry and Dynamical Systems in Pure and Applied Mathematics I: Fractals in Pure Mathematics}, (D. Carfi, M. L. Lapidus, E. P. J Pearse, and M. van Frankenhuisjen, eds.), Contemporary Mathematics, vol. 600, Amer. Math. Soc., Providence, RI, 2013, pp. 291--326. https://doi.org.10.1090/comm/600/11920.

\bibitem[Ol13b]{Ol13b}
L. Olsen, Multifractal tubes, in: \textit{Further Developments in Fractals and Related Fields: Mathematical Foundations and Connections} (J. Barral and S. Seuret, eds.), Trends in Mathematics, Birkh\"{a}user/Springer, New York, 2013, pp. 161--191.  https://doi.org.10.1007/978-0-8176-8400-6\_9.

\bibitem[Pat88]{Pat}
S. J. Patterson, \textit{An Introduction to the Theory of the Riemann Zeta-Function}, Cambridge Univ. Press, Cambridge, 1988.  https://doi.org.10.1017/CB09780511623707.

\bibitem[Pou96]{Poussin1}
C.-J. de la Vall\'ee Poussin, Recherches analytiques sur la th\'eorie des nombres; Premi\`ere partie: La fonction $\zeta$(s) de Riemann et les nombres premiers en g\'en\'eral, \textit{Ann. Soc. Sci. Bruxelles S\'er.} I \textbf{20} (1896), 183–-256.

\bibitem[Pou99]{Poussin2}
C.-J. de la Vall\'ee Poussin, Sur la fonction $\zeta$(s) de Riemann et le nombre des nombres premiers inf\'erieurs \`a une limite donn\'ee, \textit{M\'em. Couronn\'es et Autres M\'em. Publ. Acad. Roy. Sci., des Lettres, Beaux-Arts Belg.} \textbf{59} (1899–1900).

\bibitem[Rie58]{Riemann}
B. Riemann, \textit{Ueber die Anzahl der Primzahlen unter einer gegebenen Gr\"osse, Monatsb}. der Berliner Akad., 1858/60, pp. 671-–680. 

\bibitem[Ru91]{Rudin} W. Rudin, \textit{Functional Analysis}, second edition, McGraw-Hill, New York, 1991.  

\bibitem[RudSar96]{RudSar}
Z. Rudnick and P. Sarnak, Zeros of principal L-functions and random matrix theory, \textit{Duke Math. J.} \textbf{81} (1996), 269–-322.  https://doi.org.10.1007/S0012-7094-96-08115-6.  

\bibitem[Sch78]{Schwartz}
L. Schwartz. \textit{Th\'eorie des Distributions}, Hermann, Paris, 1978.

\bibitem[Str94]{Strichartz} R. S. Strichartz, \textit{A Guide to Distribution Theory and Fourier Analysis}, Studies in Advanced Mathematics, CRC Press, Boca Raton, 1994.  https://doi.org.10.1142/5314.     

\bibitem[Tit86]{Tit86} E. C. Titchmarsh, The Theory of the Riemann Zeta Functions, second edition (edited by and with a preface by D. R. Heath-Brown), Oxford Mathematical Monographs, Oxford Science Publications, The Clarendon Press, Oxford Univ. Press, New York, 1986. ISBN-13978-0198533696.

\bibitem[vM94]{Mangoldt1}
H. von Mangoldt, \textit{Auszug aus einer Arbeit unter dem Titel: Zu Riemann’s Abhandlung \lq\"Uber die Anzahl der Primzahlen unter einer gegebenen Gr\"osse\rq}, Sitzungsberichte preuss. Akad. Wiss., Berlin, 1894, pp. 883–-896.

\bibitem[vM95]{Mangoldt2}
H. von Mangoldt, Zu Riemann’s Abhandlung \lq \"Uber die Anzahl der Primzahlen unter einer gegebenen Gr\"osse\rq, \textit{J. Reine Angew. Math.} \textbf{114} (1895), 255–-305.

\bibitem[Wei52]{Wei4}
A. Weil, Sur les “formules explicites” de la th\'eorie des nombres premiers, \textit{Comm. S\'em. Math. Lund}, Universit\'e de Lund, Tome suppl\'ementaire (d\'edi\'e \`a Marcel Riesz), (1952), pp. 252--265. 

\bibitem[Wei66]{Wei5}
A. Weil, Fonction z\^{e}ta et distributions, \textit{S\'eminaire Bourbaki}, 18\'eme ann\'ee, 1965/66, no. 312, Juin 1966, pp. 1–-9. 

\end{thebibliography}
\end{document}